\newtheorem{theorem}{Theorem}
\newtheorem{definition}{Definition}
\newtheorem{lemma}{Lemma}
\newtheorem{proposition}{Proposition}
\newtheorem{assumption}{Assumption}
\begin{document}
\title{Competitive Equilibrium for Dynamic Multi-Agent Systems: Social Shaping and Price Trajectories}
\author{Zeinab Salehi, Yijun Chen, Elizabeth L. Ratnam, Ian R. Petersen, and Guodong Shi
%\thanks{This work was supported by the Australian Research Council under grants DP190102158, DP190103615, and LP210200473.}
\thanks{Z. Salehi, E. L. Ratnam, and I. R. Petersen are with the Research School of Engineering, The Australian National University, Canberra, Australia (e-mail:  zeinab.salehi@anu.edu.au; elizabeth.ratnam@anu.edu.au; ian.petersen@anu.edu.au).}
\thanks{Y. Chen and G. Shi are with the  Australian Center for Field Robotics, School of Aerospace, Mechanical and Mechatronic Engineering, The University of Sydney, NSW, Australia (e-mail: yijun.chen@sydney.edu.au; guodong.shi@sydney.edu.au).}
\thanks{Some preliminary results of this paper are accepted for presentation at  the 61st IEEE Conference on Decision and Control \cite{salehi2022finite},  and  the 12th IFAC Symposium on Nonlinear Control Systems  \cite{salehi2022infinite}.}
}

\maketitle

\begin{abstract}
In this paper, we consider dynamic multi-agent systems (MAS) for decentralized resource allocation. The MAS operates at a competitive equilibrium to ensure  supply and demand are balanced. First, we investigate the MAS over a finite horizon. The utility functions of agents are parameterized to incorporate individual preferences. We shape individual preferences through a set of utility functions to guarantee  the resource price at a competitive equilibrium  remains socially acceptable, i.e., the price is upper-bounded by an affordability threshold. We show this problem is solvable at the conceptual level. Next, we consider quadratic MAS and formulate the associated social shaping problem as a multi-agent linear quadratic regulator (LQR) problem which enables us to propose explicit utility sets using quadratic programming and dynamic programming. Then, a numerical algorithm is presented for calculating a tight range of the preference function parameters which guarantees a socially accepted price.  We investigate the properties of a competitive equilibrium over an infinite horizon. Considering general utility functions, we  show that under feasibility assumptions, any competitive equilibrium maximizes the social welfare. Then, we prove that for sufficiently small initial conditions, the social welfare maximization solution constitutes a competitive equilibrium with zero price. We also prove for general feasible initial conditions, there exists a time instant after which the optimal price, corresponding to a competitive equilibrium, becomes zero. Finally, we specifically focus on quadratic MAS and propose explicit results.%Finally, we specifically focus on quadratic MAS for which the system-level social welfare optimization becomes a classical constrained linear quadratic regulator (CLQR) problem. We construct explicitly a feasible set of initial conditions under which the price under a competitive equilibrium is zero for all time. 
\end{abstract}

\begin{IEEEkeywords}
Competitive equilibrium, multi-agent systems, resource allocation, social welfare optimization, dynamic programming, linear quadratic regulator (LQR).
\end{IEEEkeywords}

\section{Introduction}
\label{sec:introduction}
%\IEEEPARstart{C}{ontrol}  and analysis of multi-agent systems (MAS) have received great attention among researchers due to their wide application in different areas such as economics \cite{zhang2020structured}, water systems \cite{negenborn2009distributed}, carbon markets \cite{narciso2021carbon}, robotics \cite{cheng2020safe}, power systems \cite{falsone2021new}, and smart grids \cite{pipattanasomporn2009multi}.  In general, a MAS consists of a group of agents who collaborate through a network to achieve a common objective or compete to reach an individual goal  \cite{deplano2018lyapunov}.  MAS are capable of solving  complex problems in a distributed manner  which are much harder or impossible to be solved by a single agent in a centralized way. 

\IEEEPARstart{E}{fficient}  resource allocation is a fundamental problem in the literature that can be addressed by MAS approaches \cite{ananduta2021distributed, jia2021adaptive}. Resource allocation problems are of great importance when the total demand must equal the total supply for safe and secure system operation \cite{chakrabortty2011control, mallada2017optimal, liu2018stabilization}. Depending on the application, there exist two common approaches: (i) social welfare where agents collaborate to maximize the total agents' utilities \cite{singh2018decentralized,nguyen2014computational, chevaleyre}; (ii) competitive equilibrium in which agents compete to maximize their individual payoffs \cite{chen2021social, Bikhchandani1997}. 
%In this paper, distributed resource allocation is treated as an optimization problem in which each agent maximizes its payoff under some constraints and the decision variable leads to the amount of resource dedicated to each agent. Considering systems with dynamical states, this optimization problem becomes an optimal control problem. 
At the competitive equilibrium, the resource price along with the allocated resources provides an effective solution to the allocation problem, thereby clearing the market \cite{wei2014competitive}.

A fundamental theorem in classical welfare economics states that the competitive equilibrium is Pareto optimal, meaning that no agent can deviate from the equilibrium to increase profits without reducing another's payoff \cite{acemoglu2018, arrow1954, peng2022pareto, zhao2020pareto}. It is also proved that under some convexity assumptions, the competitive equilibrium maximizes social welfare  \cite{mas1995microeconomic,chen2021social, Li2020}. Mechanism design is a well-known approach to social welfare maximization \cite{farhadi2018surrogate, dave2021social, ma2020incentive}. %For instance, the Groves mechanism  maximizes social welfare in a way that truth-telling of personal information by all agents forms a dominant strategy \cite{groves1973incentives}. 
The key point in achieving a competitive equilibrium is efficient resource pricing that depends on the utility of each agent. The corresponding price, however, is not guaranteed to be affordable for all agents. If some participants select their utility functions aggressively, the price potentially increases to the point that it becomes unaffordable to other agents who have no alternative but to leave the system \cite{salehi2021quadratic, salehi2021social}. 
 %In such cases, the available resources are consumed by a limited number of affluent agents, which is not socially fair in societies where it is deemed that all entities are entitled to equal access \cite{salehi2021quadratic, salehi2021social}. 
A recent example is the Texas power outage disaster in February 2021, when some citizens who had access to electricity during the period of rolling power outages received inflated (and in some cases, unaffordable) electricity bills for their daily power usage%, resulting in the dissatisfaction of customers 
\cite{Texas2021}.

In this paper, we consider  MAS with distributed resources and linear time-invariant (LTI) dynamics. 
%Agents allocate their resources in a way that their payoff, which consists of the utility from resource consumption and the income from resource exchange, is maximized over the whole horizon. 
%
%The utility functions selected by agents affect resource pricing at the competitive equilibrium. 
Over a finite horizon, we achieve affordability by parameterizing utility functions and proposing some limits on the parameters such that the resource price at a competitive equilibrium never exceeds a given threshold, leading to the concept of social shaping.
We design an optimization problem and propose a conceptual scheme, based on dynamic programming, which shows  how the social shaping problem is solvable implicitly for general classes of utility functions. 
Next, the social shaping problem is reformulated for quadratic MAS, leading to an LQR problem. Solving the LQR problem using quadratic programming and dynamic programming, we propose two explicit sets for the utility function parameters such that they lead to socially acceptable resource prices.
Then, a numerical algorithm based on the bisection method is presented that provides accurate and practical bounds on the utility function parameters, followed by some convergence results.

Over an infinite horizon, we examine the behavior of resource price under a competitive equilibrium.
We extend our previous work in \cite{chen2021social} which considered a finite horizon, by showing that in the infinite horizon problem, the competitive equilibrium maximizes social welfare if feasibility assumptions are satisfied.
We also prove that for sufficiently small initial conditions, the social welfare maximization constitutes a competitive equilibrium with zero price; implying adequate resources are available in the network to meet the demand of all agents.
Furthermore, we show for any feasible initial condition,  there exists a time instant after which the optimal price, corresponding to the competitive equilibrium, is zero.
Next, as a special case, we study quadratic MAS for which the system-level social welfare optimization becomes a classical constrained linear quadratic regulator (CLQR) problem. 
%We explicitly establish a feasible set of initial conditions under which the price under a competitive equilibrium is zero for all time; we show that under such feasible initial conditions, the CLQR can be treated as a standard linear quadratic regulator  (LQR) problem. 
Finally, we investigate how the results can be extended to tracking problems.

Some preliminary results of this paper are accepted for presentation at the 61st IEEE Conference on Decision and Control  \cite{salehi2022finite},  and   the 12th IFAC Symposium on Nonlinear Control Systems \cite{salehi2022infinite}. We discussed the problem of social shaping for dynamic MAS over a finite horizon in \cite{salehi2022finite},  where the proofs of the proposed theorems were omitted due to the page limits. In \cite{salehi2022infinite}, we discussed the existence of a competitive equilibrium and the behavior of the associated resource price over an infinite horizon. In this paper,  we extend our prior works in the following ways: (i) we introduce two real-world applications for the  problem formulation; (ii) we present the missing proofs of the proposed theorems in \cite{salehi2022finite};  (iii) we investigate tracking problems; and (iv) we provide new, and a more extensive series of numerical examples.

The remainder of this paper is organized as follows. In Section \ref{sec:preliminaries}, we introduce the problem formulation and its real-world applications. In Section \ref{sec:conceptual}, we present a conceptual scheme for solving the  social shaping problem over a finite horizon. In Section \ref{sec:Quadratic}, we address the social shaping problem for quadratic MAS over a finite horizon.  In Section \ref{sec:infinite}, we examine the existence and properties of a competitive equilibrium over an infinite horizon.  Then, we investigate how the results can be extended to tracking problems. Finally, Section \ref{sec:examples} provides numerical examples and Section \ref{sec:conclusions} contains concluding remarks.

\textit{Notation:} We denote by $\mathbb{R}$ and $\mathbb{R}^{\geq 0}$ the fields of real numbers and  non-negative real numbers, respectively. Let $\mathbf I$
denote the identity  matrix with a suitable dimension. %Given a matrix $\mathbf A \in \mathbb{R}^{n \times m}$, the notation $\mathbf A^\top$ represents the transpose of $\mathbf A$. If $m=n$ then $\mathbf A^{-1}$ indicates the inverse of $\mathbf A$. 
The symbol $\pmb 1$ represents a vector with an appropriate dimension whose entries are all $1$. We use $\left\| \mathbf \cdot \right\|$ to denote the Euclidean norm of a vector or its induced matrix norm. Let $\sigma_{\rm min}(\cdot)$ and $\sigma_{\rm max}(\cdot)$ represent the minimum and maximum eigenvalues of a square matrix, respectively. 
%%%%%%%%%%%%%%%%%%%%%%%%%%%%%%%%%%%%%%%%%%%%%%%%%%%%%%%%%%%%%%%%%%%%%%%%%%%%%%%%
\section{Problem Formulation} \label{sec:preliminaries}
%In this section, we introduce the multi-agent system model, system-level equilibria,  the concept of social shaping, and two real-world applications of the problem.

\subsection{The Dynamic Multi-agent Model}\label{section:MAS}
Consider a dynamic MAS with $n$  agents indexed in the set $\mathcal V=\{ 1, 2, \dots , n\}$. This MAS is studied in the time horizon $N$. Let time steps be indexed in the set $\mathcal{T}=\{ 0, 1, \dots , N-1 \}$. Each agent $i \in \mathcal{V}$ is a subsystem with dynamics represented by
\begin{equation*}
	\mathbf x_i(t+1) = \mathbf A_i \mathbf x_i(t) + \mathbf B_i \mathbf u_i(t), \quad t \in \mathcal{T},
\end{equation*}
where $\mathbf x_i(t) \in \mathbb{R}^{d}$ is the dynamical state, $\mathbf x_i(0) \in \mathbb{R}^{d}$ is the given initial state, and $\mathbf u_i(t) \in \mathbb{R}^{m}$ is the control input. Also, $\mathbf A_i \in \mathbb{R}^{d \times d}$ and $\mathbf B_i \in \mathbb{R}^{d \times m}$ are fixed matrices. Upon reaching the state $\mathbf x_i(t)$ and employing the control input $\mathbf u_i(t)$ at time step $t \in \mathcal{T}$, each agent $i$ receives the utility $f_i(\mathbf x_i(t), \mathbf u_i(t)) = f(\cdot; \theta_i): \mathbb{R}^{d} \times \mathbb{R}^{m} \mapsto \mathbb{R}$, where $\theta_i \in \Theta$ is a personalized parameter of agent $i$. The terminal utility achieved as a result of reaching the terminal state $\mathbf x_i(N)$ is denoted by  $\phi_i(\mathbf x_i(N)) = \phi(\cdot; \theta_i): \mathbb{R}^{d} \mapsto \mathbb{R}$.
Let $a_i(t) \in \mathbb{R}$ denote the amount of excess resource generated by agent $i$ in addition to the supply they need to stay at the origin at time $t \in \mathcal{T}$. The amount of consumed resource by  the same agent for taking the control action $\mathbf u_i(t)$ is denoted by  $h_i(\mathbf u_i(t)): \mathbb{R}^{m} \mapsto \mathbb{R}^{\geq 0}$.  The total excess network supply is represented by $C(t) := \sum_{i=1}^n a_i(t)$ such that $C(t)>0$ at each time interval $t \in \mathcal{T}$.

Agents are interconnected through a network without any external resource supply. They share resources at the price $\lambda_t$, which denotes the price of unit  traded resource across the network  at time step $t \in \mathcal{T}$. The traded resource for agent $i$ is denoted by $e_i(t) \in \mathbb{R}$ which can never be greater than the net supply, i.e., $e_i(t) \leq a_i(t)- h_i(\mathbf u_i(t))$. Then, each agent $i$ receives a payoff which consists of the utility from resource consumption and the income $\lambda_t e_i(t)$ from resource exchange.
\subsection{Competitive Equilibrium}\label{sec:competitive}
Let $\mathbf U_i=(\mathbf u_i^\top(0), \dots, \mathbf u_i^\top(N-1))^\top$ and $\mathbf E_i=( e_i(0), \dots,  e_i(N-1))^\top$ denote the vector of control inputs and the vector of  trading decisions associated with agent $i$ over the whole time horizon, respectively. Also, let $\mathbf u(t)= (\mathbf u_1^\top(t), \dots, \mathbf u_n^\top(t) )^\top$ and $\mathbf e(t)= ( e_1(t), \dots,  e_n(t) )^\top$ denote the vector of  control inputs and the vector of   trading decisions associated with all agents at time step $t \in \mathcal{T}$, respectively. Let $\mathbf U=(\mathbf u^\top(0), \dots, \mathbf u^\top(N-1))^\top$ and $\mathbf E=(\mathbf e^\top(0), \dots, \mathbf e^\top(N-1))^\top$ be the vector of all control inputs and the vector of all  trading decisions at all time steps, respectively. Let $\boldsymbol  \lambda = (\lambda_0, \dots, \lambda_{N-1})^\top$ denote the vector of optimal resource prices throughout the entire time horizon.

%\medskip

\begin{definition}
	The \textit{competitive equilibrium } for a dynamic MAS is the triplet $(\boldsymbol{\lambda}^\ast, \mathbf{U}^\ast, \mathbf{E}^\ast)$ which satisfies the following two conditions.
	\begin{itemize}
		\item[(i)]  Given $\boldsymbol \lambda^\ast$, the pair $(\mathbf U^\ast, \mathbf E^\ast)$ maximizes the individual payoff function of each agent; i.e., each $(\mathbf U_i^\ast, \mathbf E_i^\ast)$ solves the following constrained maximization problem  
		\begin{equation}\label{opt_DLTD_1}
			\resizebox {0.447\textwidth} {!} {$
				\begin{aligned}
					\max_{{\mathbf U}_i, {\mathbf E}_i} \, \, \, &  \phi(\mathbf x_i(N); \theta_i) + \sum_{t=0}^{N-1} \Big( f( \mathbf x_i(t), \mathbf u_i(t); \theta_i)+\lambda_t^\ast e_i(t) \Big) \\
					{\rm s.t.} \quad &\mathbf x_i(t+1) = \mathbf A_i \mathbf x_i(t)+ \mathbf B_i \mathbf u_i(t),\\
					& e_i(t)  \leq a_i(t) - h_i(\mathbf u_i(t)), \quad t \in \mathcal{T}. 
				\end{aligned}$}
		\end{equation}
		\item[(ii)] The optimal  trading $\mathbf E^\ast$ balances the total traded resource across the network at each time step; that is,
		\begin{equation}\label{trading_demand_supply_constraints}
			\sum_{i=1}^n e_i^\ast(t) =0, \quad t \in \mathcal{T}. 
		\end{equation} 
	\end{itemize}
\end{definition}

\subsection{Social Welfare Maximization}
In the context of a society, the benefit of the whole community is important, not just a single agent.  It is desirable to find an operating point $ (\mathbf{U}^\ast, \mathbf{E}^\ast)$  at which  social welfare is maximized. 
%Given initial condition $\mathbf x_i(0)$,
The objective can be attained by solving the following social welfare maximization problem  
\begin{equation}\label{opt_social_DLD_1}
	\begin{aligned}
		\max_{\mathbf{U}, \mathbf{E}} \quad &  \sum_{i=1}^{n} \left( \phi(\mathbf x_i(N); \theta_i) + \sum_{t=0}^{N-1} f(\mathbf x_i(t), \mathbf u_i(t); \theta_i) \right) \\
		{\rm s.t.} \quad & \mathbf x_i(t+1) = \mathbf A_i \mathbf x_i(t)+ \mathbf B_i \mathbf u_i(t), \\
		& e_i(t)  \leq a_i(t) - h_i(\mathbf u_i(t)), \\
		&\sum_{i=1}^n e_i(t) =0, 
		\quad   t \in \mathcal{T}, \,\,\, i \in \mathcal V,
	\end{aligned}
\end{equation}
where the total agent utility functions are maximized.

\medskip

\begin{assumption}\label{assumption1}
	$f(\cdot; \theta_i)$ and $\phi(\cdot; \theta_i)$ are concave functions for all $i \in \mathcal{V}$. Additionally,  $h_i(\cdot)$ is a non-negative convex function such that $h_i(\mathbf z)<b$ represents a bounded open set of $\mathbf z$ in $\mathbb{R}^m$ for $b>0$ and $i \in \mathcal{V}$. Furthermore, assume $\sum_{i=1}^{n} a_i(t)>0$ for all $t \in \mathcal{T}$.
\end{assumption}

\medskip

\begin{proposition}[as in \cite{chen2021social}]\label{prop1}
	%Consider a dynamic MAS. 
	Suppose Assumption \ref{assumption1} holds. Then the competitive equilibrium and the social welfare equilibrium exist and coincide. Additionally, the optimal price $\lambda^\ast_t$ in \eqref{opt_DLTD_1} is obtained from the Lagrange multiplier associated with the balancing equality constraint $\sum_{i=1}^n e_i(t) =0$ in \eqref{opt_social_DLD_1}. 
\end{proposition}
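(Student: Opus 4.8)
The plan is to exploit the convex structure guaranteed by Assumption~\ref{assumption1} and to prove the claim through Lagrangian duality, identifying a competitive equilibrium with a saddle point of the Lagrangian of the social welfare problem \eqref{opt_social_DLD_1}. First I would establish existence: after substituting the dynamics $\mathbf x_i(t+1)=\mathbf A_i\mathbf x_i(t)+\mathbf B_i\mathbf u_i(t)$ to express every state as an affine function of $\mathbf U$, problem \eqref{opt_social_DLD_1} becomes the maximization of a concave objective (a sum of the concave $f$ and $\phi$) over the set cut out by $e_i(t)\le a_i(t)-h_i(\mathbf u_i(t))$ and $\sum_i e_i(t)=0$. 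Since each $h_i$ is convex and $\{h_i(\mathbf z)<b\}$ is bounded, this feasible set is convex and compact, so a maximizer $(\mathbf U^\ast,\mathbf E^\ast)$ exists by the Weierstrass theorem.

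The core of the argument is to dualize only the coupling constraint \eqref{trading_demand_supply_constraints} with multipliers $\lambda_t$, forming the partial Lagrangian
\[
\mathcal L=\sum_{i=1}^n\Big(\phi(\mathbf x_i(N);\theta_i)+\sum_{t=0}^{N-1}\big(f(\mathbf x_i(t),\mathbf u_i(t);\theta_i)+\lambda_t e_i(t)\big)\Big),
\]
while retaining the per-agent resource constraints $e_i(t)\le a_i(t)-h_i(\mathbf u_i(t))$ in the domain. The essential observation is that $\mathcal L$ is \emph{separable}: it is the sum over $i$ of exactly the individual objective appearing in \eqref{opt_DLTD_1}, and the retained constraints are likewise per-agent. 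Hence maximizing $\mathcal L$ over $(\mathbf U,\mathbf E)$ subject to these constraints decomposes into the $n$ independent agent problems \eqref{opt_DLTD_1} at price $\lambda_t$.

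Consequently, a saddle point $(\mathbf U^\ast,\mathbf E^\ast,\boldsymbol\lambda^\ast)$ of this Lagrangian satisfies condition~(i) by its maximization part and condition~(ii) by the balance \eqref{trading_demand_supply_constraints} recovered from the saddle condition; conversely, any competitive equilibrium is such a saddle point. I would invoke concavity once more to turn this into a genuine equivalence: since each agent subproblem is concave, its first-order (KKT) conditions are both necessary and sufficient. Because \eqref{opt_social_DLD_1} is a convex program, it then suffices to verify Slater's condition to obtain strong duality and the existence of the multiplier $\boldsymbol\lambda^\ast$; the primal part of any saddle point solves \eqref{opt_social_DLD_1}, and $\lambda_t^\ast$ is precisely the multiplier of the balance constraint, which yields the final assertion.

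I expect the main obstacle to be the constraint qualification: I must produce a strictly feasible point, i.e.\ $(\mathbf U,\mathbf E)$ with $\sum_i e_i(t)=0$ and $e_i(t)<a_i(t)-h_i(\mathbf u_i(t))$ for all $i$ and $t$. This is exactly where the hypotheses $\sum_i a_i(t)>0$ and the boundedness of $\{h_i(\mathbf z)<b\}$ enter: choosing controls that keep each $h_i(\mathbf u_i(t))$ small leaves a strictly positive aggregate slack $\sum_i\big(a_i(t)-h_i(\mathbf u_i(t))\big)>0$, and since each $e_i(t)$ is free to take negative values, one can redistribute the $e_i(t)$ so that they sum to zero while every inequality remains strict. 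Verifying this carefully, together with confirming that the substituted program is regular enough for the KKT machinery, is the delicate part; the remaining steps follow the template of \cite{chen2021social}.
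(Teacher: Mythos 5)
Your proposal is correct and follows essentially the same route as the proof the paper relies on (given in \cite{chen2021social} and summarized in Section~\ref{sec:infinite}): dualize the balance constraint $\sum_{i=1}^n e_i(t)=0$, exploit separability of the partial Lagrangian into the individual problems \eqref{opt_DLTD_1}, and use Slater's condition---verified via $\sum_i a_i(t)>0$ and the boundedness of the sublevel sets of $h_i$---to obtain strong duality and identify $\lambda_t^\ast$ with the multiplier of the balance constraint. No substantive differences to report.
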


\subsection{Social Shaping Problem}
The optimal price, $\lambda^\ast_t$, is the Lagrange multiplier corresponding to the equality constraint in \eqref{opt_social_DLD_1}, which depends on the utility functions of agents. Without regulation on the choice of utility functions, the price may become extremely high and unaffordable for some agents. In such cases, agents who find the  price unaffordable cannot compete in the market and have no alternative but to leave the system. Consequently, all of the resources will be consumed by a limited number of agents who have dominated the price by aggressively selecting their utilities --- which we consider to be socially unfair and not sustainable. To improve the affordability of the price for all agents we require a mechanism, called  social shaping,  which ensures the price is always below an acceptable threshold denoted by $\lambda^\dag \in \mathbb{R}^{>0}$. The problem of social shaping is addressed for static MAS in \cite{salehi2021quadratic} and \cite{salehi2021social}. In this paper, we define an extended version of the  social shaping problem for dynamic MAS as follows.

\begin{definition}[Social shaping for dynamic MAS]
	Consider a dynamic MAS whose agents $i \in \mathcal{V}$ have $f(\cdot; \theta_i)$ and $\phi(\cdot; \theta_i)$ as their running utility function and terminal utility function, respectively. Let $\lambda^\dag \in \mathbb{R}^{>0}$ denote a price threshold that is accepted by all agents. Find a range $\Theta$ of personal parameters $\theta_i$ such that if  $\theta_i \in \Theta$ for $i \in \mathcal{V}$,
	%or $(\theta_1, \dots, \theta_n) \in \Theta^n$,
	then we yield $\lambda_t^\ast \leq \lambda^\dag$ at all time steps $t \in \mathcal{T}$.
\end{definition}

%--------------------------------------------------------------------------------------------------------
\subsection{Real-World Applications}\label{sec:applications}
In this section, we provide two real-world applications for our problem formulation.

\medskip
\subsubsection{Community Microgrid}\label{sec:app_microgrid}
Consider a  community microgrid consisting of $n$ buildings with rooftop photovoltaic (PV)  generation. Buildings are equipped with  air conditioners to keep their temperature at a desired level, forming a group of $n$ thermostatically controlled loads (TCL). Each TCL, indexed by $i \in \mathcal{V}$, has dynamics  represented by 
$
	{ x}_i (t+1)=  A_i  x_i (t)+ B_i u_i (t)+  \tilde B_i w
$,
where  $x_i(t)$ is the internal temperature ($\tccentigrade$), $u_i(t)$ is the  consumed energy (kWh), and $ w$ reflects the impact of the ambient temperature ($\tccentigrade$)  \cite{zhao2017geometric, wang2019distributed}. 

In this context, each building $i \in \mathcal{V}$  represents an agent with excess rooftop PV energy $a_i(t)$ in (kWh) at time $t \in \mathcal{T}$. Buildings are connected through a network to share their excess energy at a price $\lambda_t^\ast$ (USD/kWh) such that the total excess PV generation balances the total demand from TCLs at each time step $t \in \mathcal{T}$. Depending on the internal temperature $x_i(t)$ and the consumed energy $u_i(t)$, each household valuates the outcome achieved through a utility function $f_i(x_i(t), u_i(t))$; for example, if the temperature is close to the desired level or the amount of consumed energy is relatively low then the household achieves a high level of satisfaction, and therefore, a high $f_i(\cdot)$.

Different choices of utility functions lead to different electricity prices $\lambda^\ast_t$. Assume that all buildings have the same class of utility functions but with different parameters; i.e., each building is associated with the utility $f_i(x_i(t), u_i(t)) = f(x_i(t), u_i(t); \theta_i)$, where $\theta_i$ is the personal parameter of building $i$ which can be selected independently. By social shaping and imposing some bounds on the choice of $\theta_i$, the coordinator guarantees that the electricity price never exceeds an acceptable threshold, %leading to the concept of social shaping.  In this way,  the electricity price
so as to ensure affordability.

\medskip
\subsubsection{Carbon Permit Trading System}
The well-known RICE model can be formulated as a dynamic multi-agent model \cite{nordhaus1996regional, nordhaus2010economic}. Each region captured in the RICE model represents an agent. Each time step $t \in \mathcal{T}$ represents a $10$-year period. Let ${\bf x}_{i}(t)$ be each region $i$'s economic output (USD) at time step $t$ and ${\bf u}_{i}(t)$ be the amount of emission (GtCO$_2$) it emits at time step $t$. In the RICE model, each region's economic output at time step $t+1$ depends on its economic output at time step $t$ and the emitted emissions at time step $t$. Thus, each region $i \in \mathcal{V}$ has its nonlinear dynamics $\mathbf{x}_{i}(t+1) = g_{i}(\mathbf{x}_{i}(t),\mathbf{u}_{i}(t))$. Upon the states and control actions, each region $i$ evaluates its social welfare according to a utility function $f_i(\mathbf x_i(t), \mathbf u_i(t))$ capturing the fact that the societal welfare for a region depends on both the economic output and carbon emissions.

A carbon permit trading bloc scheme has been proposed using the RICE model \cite{Nordhaus2006}. Under the scheme, the total permitted global emissions at each time step $t$ is assumed to be less than $C(t)$ (GtCO$_2$). Each region $i$ is assigned with its carbon permit $a_{i}(t)$ (GtCO$_2$) at time step $t$ with the relationship $\sum_{i \in \mathcal{V}} a_{i}(t) = C(t)$. Regions are allowed to buy or sell carbon permits at a price $\lambda_{t}^{\ast}$ (USD/GtCO$_2$) such that the total carbon permits balance the total demand of emissions to be emitted by each region at each time step $t$. 

Different configurations of utility functions yield different prices $\lambda_{t}^{\ast}$ for unit carbon permit. Suppose that all regions adopt the same class of utility functions but with different choices of parameters; i.e., each region has its utility function $f(\mathbf x_i(t), \mathbf u_i(t);\theta_i)$ with $\theta_i$ being the personal parameter of region $i$. By social shaping and imposing bounds on $\theta_i$ for all $i \in \mathcal{V}$, the price per unit carbon permit is guaranteed to be below a threshold, making it affordable for all regions.
%%%%%%%%%%%%%%%%%%%%%%%%%%%%%%%%%%%%%%%%%%%%%%%%%%%%%%%%%%%%%%%%%%%%%%%%%%%%%%%%
\section{Conceptual Social Shaping}\label{sec:conceptual}
In this section, we examine how the social shaping problem of dynamic MAS can be solved conceptually.

\medskip

\begin{lemma}\label{lemma1}
	Consider the dynamic MAS. If Assumption \ref{assumption1} is satisfied, then $\lambda^\ast_t \geq 0$ for all $t \in \mathcal{T}$.
\end{lemma}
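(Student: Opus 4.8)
The plan is to read off the sign of $\lambda_t^\ast$ directly from the KKT system of the social welfare problem \eqref{opt_social_DLD_1}, exploiting the fact that, by Proposition \ref{prop1}, $\lambda_t^\ast$ is exactly the multiplier attached to the balancing constraint $\sum_{i=1}^n e_i(t) = 0$. The decisive structural observation is that the trading variables $e_i(t)$ do not appear in the objective of \eqref{opt_social_DLD_1}; they enter only through the supply inequalities $e_i(t) \le a_i(t) - h_i(\mathbf u_i(t))$ and the balancing equalities. Consequently, the stationarity condition in $e_i(t)$ couples $\lambda_t^\ast$ to the multiplier of the supply inequality alone, whose sign is fixed by dual feasibility.

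Concretely, I would first introduce multipliers $\mu_i(t) \ge 0$ for each supply inequality, a free multiplier $\lambda_t$ for each balancing equality (chosen with the sign convention that makes the partial Lagrangian decouple into the individual payoff problems \eqref{opt_DLTD_1}, so that the price enters each agent's subproblem as $+\lambda_t e_i(t)$), and multipliers for the dynamics equalities. Differentiating the Lagrangian with respect to $e_i(t)$ and using that the objective is independent of $e_i(t)$ yields $\lambda_t^\ast = \mu_i(t)$ for every $i \in \mathcal V$ and $t \in \mathcal T$. Dual feasibility $\mu_i(t) \ge 0$ then gives $\lambda_t^\ast \ge 0$ at once. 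As a by-product, this identity shows all supply multipliers agree with the common price, consistent with the single-price structure of the market.

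Before invoking stationarity I would justify the existence and validity of these multipliers through a constraint qualification. Under Assumption \ref{assumption1} the program \eqref{opt_social_DLD_1} is a concave maximization over a convex feasible set, and Slater's condition holds: since $h_i(\cdot) \ge 0$ with $h_i(\mathbf z) < b$ describing a bounded open set, one can pick each $\mathbf u_i(t)$ with $h_i(\mathbf u_i(t))$ small, and since $\sum_{i=1}^n a_i(t) > 0$ there is a trading profile with $\sum_{i=1}^n e_i(t) = 0$ satisfying every supply inequality strictly. Hence KKT multipliers exist and the stationarity argument is legitimate.

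The main obstacle I anticipate is bookkeeping around sign conventions rather than any deep difficulty: I must ensure the equality-constraint multiplier is oriented so that the dual decomposition of \eqref{opt_social_DLD_1} reproduces the individual problems \eqref{opt_DLTD_1} with the price appearing as $+\lambda_t e_i(t)$, because it is the resulting identity $\lambda_t^\ast = +\mu_i(t)$, and not $-\mu_i(t)$, that delivers the claimed non-negativity. Once this alignment with the convention underlying Proposition \ref{prop1} is fixed, the conclusion follows in a single line.
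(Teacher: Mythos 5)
Your proposal is correct, and it is more than the paper itself offers: the paper's ``proof'' of Lemma~\ref{lemma1} is a one-line deferral to Proposition~2 of \cite{chen2021social}, so there is no in-text argument to match against. Your route --- observing that $e_i(t)$ is absent from the objective of \eqref{opt_social_DLD_1}, so stationarity in $e_i(t)$ forces $\lambda_t^\ast=\mu_i(t)$ with $\mu_i(t)\ge 0$ by dual feasibility --- is the standard duality argument and almost certainly the one in the cited reference. Your two points of care are both warranted and correctly resolved: Slater's condition does hold (since $h_i\ge 0$ with arbitrarily small sublevel sets one can make $\sum_i h_i(\mathbf u_i(t))<C(t)$ and then distribute the strict slack evenly to get $\sum_i e_i(t)=0$ with every supply inequality strict, the remaining constraints being affine), and the sign convention must indeed be anchored to the decomposition that reproduces \eqref{opt_DLTD_1} with the term $+\lambda_t^\ast e_i(t)$ --- note the paper itself flags in Section~\ref{sec:examples} that the raw multiplier of the balancing constraint is $-\lambda_t^\ast$, so your warning about orientation is not pedantry. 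One remark: an even shorter argument is available from condition (i) of the competitive equilibrium alone --- if $\lambda_t^\ast<0$, then since $e_i(t)$ is unbounded below in \eqref{opt_DLTD_1}, sending $e_i(t)\to-\infty$ drives the payoff to $+\infty$ and no maximizer $(\mathbf U_i^\ast,\mathbf E_i^\ast)$ can exist; this avoids KKT machinery and the constraint-qualification discussion entirely, at the cost of not exhibiting the identity $\lambda_t^\ast=\mu_i(t)$ that your version gives as a by-product.
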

\begin{proof}
	The proof is similar to the proof of Proposition 2 in \cite{chen2021social}. 
	%Since Assumption \ref{assumption1} is satisfied, Proposition \ref{prop1} holds. Therefore, the competitive equilibrium and the social welfare equilibrium exist and coincide. By contradiction, suppose $\lambda^\ast_t <0$. Then, the objective function in \eqref{opt_DLTD_1} is strictly decreasing with respect to $e_i(t)$. Since $e_i(t)$ is unbounded below, there cannot be a finite $e_i^\ast(t)$ which solves \eqref{opt_DLTD_1}. This contradicts the definition of competitive equilibrium. Therefore, it follows that $\lambda^\ast_t\geq0$.
\end{proof}

\medskip 

\begin{proposition}\label{prop2}
	Consider the dynamic MAS. Let Assumption \ref{assumption1} hold. If $\lambda^\ast_t > 0$ then the total demand and supply are balanced at time step $t$; that is,
	\begin{equation}\label{eq3}
		\sum_{i=1}^{n} h_i(\mathbf u_i^\ast(t)) = C(t).
	\end{equation}
\end{proposition}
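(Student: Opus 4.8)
The plan is to argue by complementary slackness, working directly from each agent's individual optimization \eqref{opt_DLTD_1} at the competitive equilibrium. Fix a time step $t \in \mathcal{T}$ with $\lambda^\ast_t > 0$. The crucial structural observation is that, for this fixed $t$, the trading variable $e_i(t)$ enters agent $i$'s payoff only through the linear income term $\lambda^\ast_t e_i(t)$, and enters the constraints only through the single inequality $e_i(t) \le a_i(t) - h_i(\mathbf u_i(t))$; it is coupled neither to the dynamics nor to the trading variables at other time steps or of other agents.

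Given this, the first key step is to show that every agent's supply constraint is active at the optimum, i.e. $e_i^\ast(t) = a_i(t) - h_i(\mathbf u_i^\ast(t))$ for all $i \in \mathcal V$. I would establish this by a perturbation argument: suppose for contradiction that some agent $i$ has $e_i^\ast(t) < a_i(t) - h_i(\mathbf u_i^\ast(t))$. Then increasing $e_i^\ast(t)$ by a sufficiently small $\delta > 0$, while leaving all control inputs and all other trading decisions unchanged, keeps the perturbed point feasible for \eqref{opt_DLTD_1} and raises the objective by $\lambda^\ast_t \delta > 0$, contradicting the optimality of $(\mathbf U_i^\ast, \mathbf E_i^\ast)$. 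Equivalently, in the KKT formulation this says the multiplier on the supply constraint equals $\lambda^\ast_t$, which is strictly positive by hypothesis, so complementary slackness forces the constraint to bind; this is consistent with Proposition \ref{prop1}, which identifies $\lambda^\ast_t$ with the balancing multiplier of \eqref{opt_social_DLD_1}.

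The final step is to aggregate. Summing $e_i^\ast(t) = a_i(t) - h_i(\mathbf u_i^\ast(t))$ over $i \in \mathcal V$ gives $\sum_{i=1}^n e_i^\ast(t) = C(t) - \sum_{i=1}^n h_i(\mathbf u_i^\ast(t))$, where I use $C(t) = \sum_{i=1}^n a_i(t)$. Invoking the market-clearing condition \eqref{trading_demand_supply_constraints}, the left-hand side vanishes, and rearranging yields exactly \eqref{eq3}.

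I expect the only real obstacle to be making the saturation step airtight: one must confirm that the upward perturbation of $e_i^\ast(t)$ is genuinely feasible --- which it is, precisely because $e_i(t)$ does not appear in the dynamics or in any other constraint, so no coupled constraint can be violated --- and that the payoff is strictly monotone in $e_i(t)$, which holds since $\lambda^\ast_t > 0$. Assumption \ref{assumption1} is not needed for this direction beyond guaranteeing, via Proposition \ref{prop1}, that the equilibrium exists and that $\lambda^\ast_t$ is well defined; the argument itself is purely a first-order optimality observation.
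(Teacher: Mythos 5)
Your proposal is correct and follows essentially the same route as the paper: the paper introduces a slack variable $s_i(t)=a_i(t)-h_i(\mathbf u_i(t))-e_i(t)\ge 0$ and observes that the objective of \eqref{opt_DLTD_1} is strictly decreasing in $s_i(t)$ when $\lambda^\ast_t>0$, forcing $s_i^\ast(t)=0$ --- which is exactly your upward-perturbation argument showing the supply constraint binds. The final aggregation step, summing the active constraints and invoking $\sum_{i=1}^n e_i^\ast(t)=0$, is identical in both proofs.
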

\begin{proof}
	Since Assumption \ref{assumption1} is satisfied, Proposition \ref{prop1} holds. Therefore, either the competitive problem or the social welfare problem can be solved. Consider the competitive optimization problem in \eqref{opt_DLTD_1}.  Let $s_i(t) \in \mathbb{R}^{\geq 0}$ be the slack variable for agent $i$ at time step $t \in \mathcal{T}$ and $\mathbf{S_i} = (s_i(0), ..., s_i(N-1))^\top$ be the vector of slack variables for agent $i$ throughout the whole time horizon.
	We can write the inequality constraint $e_i(t) \leq a_i(t)- h_i(\mathbf u_i(t))$ in \eqref{opt_DLTD_1} as the equality $e_i(t) = a_i(t) - h_i(\mathbf u_i(t)) -  s_i(t)$ for  $t \in \mathcal{T}$.
	%\begin{equation*}\label{eq1}
	%	e_i(t) +  s_i(t) = a_i(t) - h_i(\mathbf u_i(t)), \quad t \in \mathcal{T}.
	%\end{equation*}
	Then, substituting $e_i(t)$ into \eqref{opt_DLTD_1} results in an equivalent form for the optimization problem as
	\begin{equation}\label{opt_DLTD_1_2}
		%\small
			\resizebox {0.485\textwidth} {!} {$
		\begin{aligned}
			\max_{{\mathbf U}_i, {\mathbf S}_i} \quad &  \phi(\mathbf x_i(N); \theta_i) + \sum_{t=0}^{N-1}  f( \mathbf x_i(t), \mathbf u_i(t); \theta_i) \\ & + \sum_{t=0}^{N-1} \lambda_t^\ast \big[ a_i(t) - h_i(\mathbf u_i(t)) - s_i(t) \big]\\
			{\rm s.t.} \quad & \mathbf x_i(t+1) = \mathbf A_i \mathbf x_i(t)+ \mathbf B_i \mathbf u_i(t),
			\,\, s_i(t) \in \mathbb{R}^{\geq0}, \,\, t \in \mathcal{T}.
		\end{aligned}$}
	\end{equation}
	Since $\lambda^\ast_t >0$, the resulting objective function is strictly decreasing with respect to $s_i(t)$. Consequently, the optimal slack variable maximizing the objective function is $s_i^\ast(t) =0$, meaning that the associated inequality constraint is active; that is,
	$
		e_i^\ast(t) = a_i(t) - h_i(\mathbf u^\ast_i(t))
	$.
	The summation of the resulting active constraint over $i$, from $1$ to $n$, along with the balancing equality $\sum_{i=1}^{n} e_i^\ast(t)=0$ in \eqref{trading_demand_supply_constraints}, yields $\sum_{i=1}^{n}h_i(\mathbf u^\ast_i(t)) = C(t)$.
	%\begin{equation*}%\label{eq3}
	%	\sum_{i=1}^{n}h_i(\mathbf u^\ast_i(t)) = C(t).
	%\end{equation*}
\end{proof}

\medskip

Next, lets consider the social shaping problem and an approach to solving it conceptually. Suppose Assumption \ref{assumption1} holds and $f(\cdot; \theta_i)$, $\phi(\cdot; \theta_i)$, and $h_i(\cdot)$ are continuously differentiable. Then Proposition \ref{prop1} is satisfied. In this paper, we focus on  the competitive optimization problem in \eqref{opt_DLTD_1}. 
According to Lemma \ref{lemma1}, there holds $\lambda^\ast_t \geq 0$. We can skip the case $\lambda_t^\ast=0$, because a zero price is always socially resilient. Therefore, it is sufficient to only examine $\lambda^\ast_t >0$. Following from Proposition \ref{prop2}, the total demand and supply are balanced at time step $t$, meaning that $\sum_{i=1}^{n} h_i(\mathbf u_i^\ast(t))= C(t)$; additionally, we have $s_i^\ast(t)=0$. Substituting  $s_i(t) =0$ into \eqref{opt_DLTD_1_2} yields an equivalent form for the competitive optimization problem in \eqref{opt_DLTD_1} as
\begin{equation}\label{opt_DLTD_1_3}
	\begin{aligned}
		\max_{{\mathbf U}_i} \quad &  \phi(\mathbf x_i(N); \theta_i ) + \sum_{t=0}^{N-1} \Big( f( \mathbf x_i(t), \mathbf u_i(t); \theta_i )
		\\ &+  \lambda_t^\ast \big[ a_i(t) - h_i(\mathbf u_i(t)) \big] \Big)\\
		{\rm s.t.} \quad & \mathbf x_i(t+1) = \mathbf A_i \mathbf x_i(t)+ \mathbf B_i \mathbf u_i(t), \quad t \in \mathcal{T},
	\end{aligned}
\end{equation}
which is valid for $\lambda^\ast_t>0$. Please note that even if $\lambda^\ast_t = 0$ then \eqref{opt_DLTD_1_2} can be written in the form of \eqref{opt_DLTD_1_3}, although the equality in \eqref{eq3} turns into the inequality $\sum_{i=1}^{n}h_i(\mathbf u^\ast_i(t)) \leq C(t)$. This fact causes no change to the upcoming analysis. %So without loss of generality, let us assume $\lambda^\ast_t>0$ for all $t \in \mathcal{T}$.

The optimization problem in \eqref{opt_DLTD_1_3} is an unconstrained optimal control problem 
which can be solved with dynamic programming. First, introduce the cost-to-go function for agent $i$ from time $k$ to $N$ as
\begin{equation*}
	\resizebox {0.485\textwidth} {!} {$
		\begin{aligned}
			&J_{i}^{k \longrightarrow N} (\mathbf x_i(k), \mathbf u_i(k),  \dots, \mathbf u_i(N-1), \lambda^\ast_k, \dots, \lambda^\ast_{N-1}; \theta_i ) \\ &= \phi ( \mathbf x_i(N); \theta_i ) + \sum_{t=k}^{N-1} \Big( f (\mathbf x_i(t), \mathbf u_i(t); \theta_i )+\lambda_t^\ast \big[ a_i(t) - h_i (\mathbf u_i(t)) \big]\Big).
		\end{aligned} $}
\end{equation*}
Then, the optimal cost-to-go at time $k$ for agent $i$, which is also called the value function, is represented as
\begin{equation*}
	\resizebox {0.485\textwidth} {!} {$
		\begin{aligned}
			&V_{i, k}(\mathbf x_i(k), \lambda^\ast_k, \dots, \lambda^\ast_{N-1}; \theta_i) \\ &= \max_{\mathbf u_i(k), \dots, \mathbf u_i(N-1)} \quad   J_{i}^{k \longrightarrow N} (\mathbf x_i(k), \mathbf u_i(k), \dots, \mathbf u_i(N-1), \lambda^\ast_k, \dots, \lambda^\ast_{N-1}; \theta_i) \\
			& \quad \quad \quad \quad  {\rm s.t.} \, \, \,  \quad \quad \quad  \mathbf x_i(t+1) = \mathbf A_i \mathbf x_i(t)+ \mathbf B_i \mathbf u_i(t), \quad t= k, \dots, N-1.
		\end{aligned} $}
\end{equation*}
According to the principle of optimality, we obtain
\begin{equation*}\label{eq17}
	\begin{aligned}
		&V_{i, N}(\mathbf x_i(N); \theta_i) = \phi \big( \mathbf x_i(N); \theta_i \big),\\
		&V_{i, N-1}(\mathbf x_i(N-1), \lambda^\ast_{N-1}; \theta_i) \\&= \max_{\mathbf u_i(N-1)} \quad   f (\mathbf x_i(N-1), \mathbf u_i(N-1); \theta_i) \\& \quad \quad \quad \quad \quad +\lambda_{N-1}^\ast \big[ a_i(N-1) - h_i \big(\mathbf u_i(N-1)\big) \big]  \\& \quad \quad \quad \quad \quad + V_{i, N}(\mathbf A_i \mathbf x_i(N-1)+ \mathbf B_i \mathbf u_i(N-1); \theta_i),\\
		\vdots\\
		%&V_{i, 1}(\mathbf x_i(1), \lambda_{1}^\ast, \dots, \lambda_{N-1}^\ast; \theta_i) \\ &= \max_{\mathbf u_i(1)} \quad   f (\mathbf x_i(1), \mathbf u_i(1); \theta_i)+ \lambda_{1}^\ast \big[ a_i(1) - h_i \big(\mathbf u_i(1)\big) \big] \\& \quad \quad \quad \quad \quad + V_{i, 2}(\mathbf A_i \mathbf x_i(1)+ \mathbf B_i \mathbf u_i(1), \lambda_{2}^\ast, \dots, \lambda_{N-1}^\ast;\theta_i),\\
		&V_{i, 0}(\mathbf x_i(0), \lambda_{0}^\ast, \dots, \lambda_{N-1}^\ast; \theta_i) \\ &= \max_{\mathbf u_i(0)} \quad   f (\mathbf x_i(0), \mathbf u_i(0); \theta_i)+ \lambda_{0}^\ast \big[ a_i(0) - h_i \big(\mathbf u_i(0)\big) \big] \\ & \quad \quad \quad \quad \quad + V_{i, 1}(\mathbf A_i \mathbf x_i(0)+ \mathbf B_i \mathbf u_i(0), \lambda_{1}^\ast, \dots, \lambda_{N-1}^\ast; \theta_i).
	\end{aligned}
\end{equation*}
To obtain the optimal control at time step $k=0$, the derivative of the associated objective function with respect to $\mathbf u_i(0)$ must equal zero; that is,
\begin{multline*}
	\frac{\partial f(\mathbf x_i(0), \mathbf u_i(0);\theta_i)}{\partial \mathbf u_i(0)}- \lambda_{0}^\ast \nabla h_i \big(\mathbf u_i(0)\big)\\ + \frac{\partial V_{i,1}(\mathbf A_i\mathbf x_i(0) +\mathbf B_i \mathbf u_i(0), \lambda_{1}^\ast, \dots, \lambda_{N-1}^\ast;\theta_i)}{\partial \mathbf u_i(0)}=0.
\end{multline*} 
Proposition \ref{prop1} implies that such an optimal solution exists, although it might not be unique. {Without loss of generality, suppose the optimal solution is unique.} Thus, we can write $\mathbf u_i^\ast(0)$ as a function of $\mathbf x_i(0)$ and all $\lambda^\ast_t$ where $t \in \mathcal{T}$,   parameterized by $\theta_i$; that is,
\begin{equation}\label{eq20}
	\mathbf u_i^\ast(0)=l_i^0(\mathbf x_i(0), \lambda_{0}^\ast, \dots, \lambda_{N-1}^\ast; \theta_i).
\end{equation}
Substituting \eqref{eq20} into the  equality $\sum_{i=1}^{n} h_i(\mathbf u_i^\ast(t)) = C(t)$ in \eqref{eq3}, we yield
$
	\sum_{i=1}^{n} h_i(l_i^0(\mathbf x_i(0), \lambda_{0}^\ast, \dots, \lambda_{N-1}^\ast; \theta_i)) = C(0)
$.
Similarly, for any other time step $k \in \mathcal{T}$ we  achieve
$
\mathbf u_i^\ast(k)= l_i^k(\mathbf x_i(0), \lambda^\ast_0, \dots, \lambda_{N-1}^\ast; \theta_i)%k=0, \dots, N-1,
$,
and
\begin{equation}\label{eq32}
%\resizebox {0.487\textwidth} {!} {$
	\sum_{i=1}^{n} h_i( l_i^k(\mathbf x_i(0), \lambda^\ast_0, \dots, \lambda_{N-1}^\ast; \theta_i)) = C(k), \,\,\,\,  k \in \mathcal{T}. %k=0, \dots, N-1. %$}
\end{equation}
%In general, $g_k(\cdot; \boldsymbol\theta)$ can be a set-valued function. 
We aim to obtain $\boldsymbol{\lambda^\ast} = (\lambda^\ast_0, \dots, \lambda^\ast_{N-1})^\top$. According to \eqref{eq32}, we have $N$ equations with $N$ variables. %; that is,
%\begin{multline}\label{eq19_2}
%	\begin{aligned}
%	&\sum_{i=1}^{n} h_i(l_i^0(\mathbf x_i(0), \lambda_{0}^\ast, \dots, \lambda_{N-1}^\ast; \theta_i)) = C(0),\\
%	&\sum_{i=1}^{n} h_i(\hat l_i^{1}(\mathbf x_i(0), \lambda_{0}^\ast, \dots, \lambda^\ast_{N-1}; \theta_i)) = C(1),\\
%	\vdots \\
%	&\sum_{i=1}^{n} h_i(\hat l_i^{N-1}(\mathbf x_i(0), \lambda_{0}^\ast, \dots, \lambda^\ast_{N-1}; \theta_i)) = C(N-1).
%	\end{aligned}
%\end{multline}
Let $\mathbf x(k) = (\mathbf x^\top_1(k), \dots, \mathbf x^\top_n(k))^\top$, $\boldsymbol{\theta}=(\theta_1, \theta_2, \dots, \theta_n)$, and $\mathbf C =(C(0), C(1), \dots, C(N-1))$. According to Proposition \ref{prop1}, there exists $\boldsymbol \lambda^\ast$ which satisfies \eqref{eq32} although it might not be unique. Among different possible prices that satisfy the equilibrium, we consider the maximum one at each time step. In the rest of this proof, by optimal price we mean the maximum possible price associated with a fixed $\boldsymbol{\theta}$ meeting the equilibrium conditions. Solving \eqref{eq32}, the optimal price at each time step $k \in \mathcal{T}$ is obtained as
$
\lambda^\ast_k= g_k(\mathbf x(0), \mathbf C; \boldsymbol \theta)
$.
Additionally, for different values of agent preferences $\boldsymbol \theta$ we would obtain different optimal prices at each time step. Let us define the maximum value of the set of all possible optimal prices at each time step $k \in \mathcal{T}$, when $\theta_i$ takes values in the set $\Theta$ (or $\boldsymbol{\theta} \in \Theta^n$), as
$$
\chi^\Theta_k : = \max_{\boldsymbol \theta \in \Theta^n} g_k(\cdot; \boldsymbol \theta), \quad k=0, \dots, N-1.
$$
Next, we introduce
$
\mathbf G_\Theta:=\big(	\chi^\Theta_0, \chi^\Theta_1, \dots, \chi^\Theta_{N-1} \big)^\top
$.
Each element $k$ in the vector $\mathbf G_\Theta$ is the maximum value of optimal prices at  time step $k$, when agent preferences are taken from $\boldsymbol \theta \in \Theta^n$.  This leads to the following result.% showing that the social shaping problem is conceptually a set-decision problem with respect to $\Theta$.

\medskip

\begin{theorem}
Consider a dynamic MAS. Let Assumption \ref{assumption1} hold. %Suppose each utility function and terminal utility function are represented as $f_i(x_i(t), u_i(t))= f(\cdot;\theta_i)$ and $\phi_i(x_i(N))= \phi(\cdot;\theta_i)$, where $\theta_i \in \Theta$ is the \emph{personalized parameter} of agent i.
Suppose $f(\cdot; \theta_i)$, $\phi(\cdot; \theta_i)$, and $h_i(\cdot)$ are continuously differentiable. Let $\lambda^\dag \in \mathbb{R}^{>0}$ represent the given price threshold accepted by all agents.  
Then any set $\Theta$ satisfying $\mathbf G_\Theta \leq \lambda^\dag \pmb{1}$ ensures that $\lambda^\ast_t \leq \lambda^\dag$ for $t \in \mathcal{T}$, and thus, solves the social shaping problem of agent preferences. %(recall that $\mathbb{1}$ is the vector of dimension $N$ with all entries equal to $1$). %solves the social shaping problem of agent preferences ($\mathbb{1}$ is the vector of dimension $N$ with all entries equal to $1$).
\end{theorem}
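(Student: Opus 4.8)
The plan is to show that the theorem follows almost entirely by unwinding the construction of $\mathbf G_\Theta$ assembled just before the statement; the only genuine work is a careful case split on the sign of the optimal price together with honest bookkeeping of the non-uniqueness convention. First I would fix an arbitrary parameter vector $\boldsymbol\theta \in \Theta^n$ and an arbitrary time step $t \in \mathcal{T}$, and invoke Lemma \ref{lemma1} so that $\lambda^\ast_t \geq 0$. I would then dispose of the case $\lambda^\ast_t = 0$ immediately: since $\lambda^\dag \in \mathbb{R}^{>0}$, the inequality $\lambda^\ast_t = 0 \leq \lambda^\dag$ holds trivially, which is precisely the earlier observation that a zero price is always socially acceptable.

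The substantive case is $\lambda^\ast_t > 0$. Here I would chain the preceding results: Assumption \ref{assumption1} gives Proposition \ref{prop1}, so the competitive and social welfare equilibria coincide; Proposition \ref{prop2} then yields the balancing equality $\sum_{i=1}^n h_i(\mathbf u_i^\ast(t)) = C(t)$ together with $s_i^\ast(t)=0$, which reduces the competitive problem \eqref{opt_DLTD_1} to the unconstrained optimal control problem \eqref{opt_DLTD_1_3}. Applying the dynamic-programming recursion, and using the continuous differentiability of $f(\cdot;\theta_i)$, $\phi(\cdot;\theta_i)$, and $h_i(\cdot)$, delivers the closed-loop maps $\mathbf u_i^\ast(k) = l_i^k(\mathbf x_i(0), \lambda^\ast_0, \dots, \lambda^\ast_{N-1}; \theta_i)$, and substitution into the $N$ balancing equalities \eqref{eq32} determines the equilibrium price at step $t$ as $\lambda^\ast_t = g_t(\mathbf x(0), \mathbf C; \boldsymbol\theta)$, where, following the convention adopted in the construction, $g_t$ returns the \emph{maximal} equilibrium price at step $t$ for the given $\boldsymbol\theta$.

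Finally I would close the argument purely by the definition of $\mathbf G_\Theta$. Because $\chi^\Theta_t := \max_{\boldsymbol\theta \in \Theta^n} g_t(\cdot;\boldsymbol\theta)$, for the fixed $\boldsymbol\theta \in \Theta^n$ we have $g_t(\mathbf x(0), \mathbf C; \boldsymbol\theta) \leq \chi^\Theta_t$, while the $t$-th component of the hypothesis $\mathbf G_\Theta \leq \lambda^\dag \pmb 1$ reads $\chi^\Theta_t \leq \lambda^\dag$; chaining these gives $\lambda^\ast_t \leq \lambda^\dag$. Since $\boldsymbol\theta \in \Theta^n$ and $t \in \mathcal{T}$ were arbitrary, this establishes the social-shaping guarantee for every agent preference drawn from $\Theta$.

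I expect the main obstacle to be neither an estimate nor a hard computation but rather the careful handling of non-uniqueness: Proposition \ref{prop1} guarantees existence but not uniqueness of the equilibrium, so I must verify that the maximum-price convention used to define $g_t$ makes $\chi^\Theta_t$ a legitimate upper bound on \emph{every} realizable price $\lambda^\ast_t$ attainable for $\boldsymbol\theta \in \Theta^n$, and I must keep the $\lambda^\ast_t = 0$ branch rigorously separate so that the reduction to \eqref{opt_DLTD_1_3}, which was derived under the standing hypothesis $\lambda^\ast_t > 0$, is never invoked outside its range of validity.
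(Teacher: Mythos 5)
Your proposal is correct and follows essentially the same route as the paper: the paper's ``proof'' of this theorem is precisely the dynamic-programming construction and the definitions of $g_k$, $\chi^\Theta_k$, and $\mathbf G_\Theta$ developed in the paragraphs immediately preceding the statement, and you reproduce that chain (the $\lambda^\ast_t=0$ dismissal via Lemma \ref{lemma1}, the reduction through Propositions \ref{prop1}--\ref{prop2} to \eqref{opt_DLTD_1_3}, the balancing equalities \eqref{eq32}, and the final definition-chasing through $\chi^\Theta_t \leq \lambda^\dag$) faithfully, including the maximum-price convention for handling non-uniqueness.
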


%%%%%%%%%%%%%%%%%%%%%%%%%%%%%%%%%%%%%%%%%%%%%%%%%%%%%%%%%%%%%%%%%%%%%%%%%%%%%%%%%%%%%%%%%%%%%%
\section{Quadratic Social Shaping}\label{sec:Quadratic}
In this section, we  examine quadratic utility functions and explicitly propose two sets of personal parameters that lead to socially acceptable optimal prices.

\medskip

\begin{assumption}\label{assumption2}
Consider the dynamic MAS introduced in Section \ref{section:MAS}. Let $\theta_i:= (\mathbf Q_i, \mathbf R_i)$, where $\mathbf Q_i \in \mathbb{R}^{d \times d}$, $\mathbf Q_i=\mathbf Q_i^\top>0$ and $\mathbf R_i \in \mathbb{R}^{m \times m}$, $\mathbf R_i=\mathbf R_i^\top>0$.
Assume for all $i\in \mathcal V$ we have	
\begin{equation*}\label{eq_utility_q}
	\begin{gathered}
		\begin{aligned}	
			f(\mathbf x_i(t), \mathbf u_i(t); \theta_i) &= - \mathbf x_i^\top(t) \mathbf Q_i \mathbf x_i(t)  - \mathbf u_i^\top(t) \mathbf R_i \mathbf u_i(t),\\
			\phi(\mathbf x_i(N); \theta_i) &= - \mathbf x_i^\top(N) \mathbf Q_i \mathbf x_i(N),\\
			h_i(\mathbf u_i(t)) &=  \mathbf u_i^\top(t)\mathbf H_i \mathbf u_i(t),
		\end{aligned}
	\end{gathered}
\end{equation*}
where $\mathbf H_i \in \mathbb{R}^{m \times m}$, $\mathbf H_i=\mathbf H_i^\top>0$. 
\end{assumption}

\medskip 

\begin{assumption}\label{assumption3}
Consider the dynamic MAS in Assumption \ref{assumption2} with a given initial state $\mathbf x_i(0)$ such that  $\left\| {{\mathbf x_i}(0)} \right\| \le \gamma$, $\left\|\mathbf A_i \right\| \leq \alpha$, $\left\| \mathbf B_i \right\| \leq \beta$, and $\mathbf H_i \geq \rho \mathbf I$ for $i \in \mathcal V$. Suppose that $\gamma, \alpha, \beta, \rho \in \mathbb{R}^{>0}$.
\end{assumption}

\medskip 

We aim to solve the following social shaping problem.

\medskip

\noindent{\bf Dynamic \& Quadratic Social Shaping Problem.} Suppose Assumptions \ref{assumption2} and \ref{assumption3} hold. Let $\lambda^\dag \in \mathbb{R}^{>0}$ be the given price threshold accepted by all agents, and $\delta_{\rm max}  \in \mathbb{R}^{>0}$ be an upper bound for the norm of the personal parameter $\mathbf Q_i$. We propose an admissible set for $\delta_{\rm max}$ such that all utility functions satisfying $\left\| \mathbf Q_i \right\| \leq \delta_{\rm max}$  (or $\mathbf Q_i \leq \delta_{\rm max} \mathbf I$) lead to socially acceptable energy prices at all time steps; i.e., $ \lambda^\ast_t \leq \lambda^\dag$ for $t \in \mathcal{T}$. 

\medskip

To address this problem, we use two approaches:  quadratic programming and dynamic programming.

\subsection{Quadratic Programming Approach}\label{sec:QuadProg}
%the Gradient Method
Since Assumption \ref{assumption2} is satisfied, Proposition \ref{prop1} holds. We examine the competitive optimization problem in \eqref{opt_DLTD_1}. According to Lemma \ref{lemma1}, there holds $\lambda^\ast_t \geq 0$. We skip the case $\lambda_t^\ast=0$, because a zero price is always socially resilient. Hence, it is sufficient to only study $\lambda^\ast_t >0$.  According to \eqref{opt_DLTD_1_3}, the optimization problem in \eqref{opt_DLTD_1} can be reformulated as 
\begin{equation}\label{opt_DLTD_2}
\begin{aligned}
	\max_{{\mathbf U}_i} \quad &- \mathbf x_i^\top(N) \mathbf Q_i \mathbf x_i(N) + \sum_{t=0}^{N-1} \Big[ \big. - \mathbf x_i^\top(t) \mathbf Q_i \mathbf x_i(t)   \\  &-\mathbf u_i^\top(t) \mathbf R_i \mathbf u_i(t)  +  \lambda_t^\ast \big(a_i(t) -  \mathbf u_i^\top(t)\mathbf H_i \mathbf u_i(t) \big) \big. \Big]  \\
	{\rm s.t.} \quad & \mathbf x_i(t+1) = \mathbf A_i \mathbf x_i(t)+ \mathbf B_i \mathbf u_i(t), \quad   t \in \mathcal{T}. \\ 
\end{aligned}
\end{equation}

\begin{theorem}\label{theorem2}
Consider the dynamic MAS described in Assumptions \ref{assumption2} and \ref{assumption3} on the time horizon $N$.
Suppose $\delta_{\rm max} \in \mathbb{R}^{>0}$ is selected from the following set
\begin{equation*}\label{eq_set_Q1}
	%\resizebox {0.487\textwidth} {!} {$
		%	\begin{gathered}
			\begin{aligned}
				&\mathscr{S}_{\ast} =\left\{ \Bigg. \right. \delta_{\rm max} \in \mathbb{R}^{>0} : 
				\delta_{\rm max} \sum_{t=k+1}^{N}  \Bigg[ \Bigg. \gamma \alpha^{2t-k-1}   \\ &+ \beta   { \sum_{\substack{ j=0 \\ j\ne k}}^{t-1}} \sqrt{\frac{C(j)}{\rho}} \alpha^{2t-j-k-2}  \Bigg.\Bigg]  \leq \frac{\sqrt{C(k)\rho}}{n \beta}\lambda^\dag \, \, \, \, {\rm for}\,\, \forall k \in \mathcal{T} \left. \Bigg. \right\}.
			\end{aligned}
			%	\end{gathered} $}
\end{equation*} 
Then for all quadratic utility functions satisfying $\left\| \mathbf Q_i \right\| \leq \delta_{\rm max}$  (or $\mathbf Q_i \leq \delta_{\rm max} \mathbf I$), the resulting optimal price is socially resilient, i.e., $\boldsymbol \lambda^\ast \leq \lambda^\dag \pmb{1}$.
\end{theorem}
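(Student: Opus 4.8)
The plan is to bound the equilibrium price $\lambda_k^\ast$ at each time step $k$ directly in terms of $\delta_{\rm max}$ and the system norms, and then recognize the inequality defining $\mathscr{S}_\ast$ as exactly the requirement that this bound not exceed $\lambda^\dag$. Since Lemma \ref{lemma1} gives $\lambda_k^\ast \geq 0$ and the case $\lambda_k^\ast = 0$ trivially satisfies $\lambda_k^\ast \le \lambda^\dag$, I only treat $\lambda_k^\ast > 0$. Assumption \ref{assumption2} makes Proposition \ref{prop1} applicable, so I may use the reduced competitive problem \eqref{opt_DLTD_2}, and Proposition \ref{prop2} supplies the balancing identity $\sum_{i=1}^n \mathbf u_i^{\ast\top}(k)\mathbf H_i \mathbf u_i^\ast(k) = C(k)$. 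Two preliminary estimates will be used throughout: summing the trading constraints gives $\sum_i \mathbf u_i^{\ast\top}(t)\mathbf H_i \mathbf u_i^\ast(t) \le C(t)$ at every $t$, which with $\mathbf H_i \ge \rho\mathbf I$ yields the control bound $\|\mathbf u_i^\ast(t)\| \le \sqrt{C(t)/\rho}$; and the dynamics yield the state bound $\|\mathbf x_i(t)\| \le \gamma\alpha^t + \beta\sum_{j=0}^{t-1}\alpha^{t-1-j}\sqrt{C(j)/\rho}$.

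Next I would derive the first-order optimality condition. After substituting the dynamics, the objective of \eqref{opt_DLTD_2} is a concave quadratic in $\mathbf U_i$, so its maximizer is characterized by a vanishing gradient. Differentiating with respect to $\mathbf u_i(k)$ and using $\partial \mathbf x_i(t)/\partial \mathbf u_i(k) = \mathbf A_i^{t-1-k}\mathbf B_i$ for $t>k$ gives
\[
(\mathbf R_i + \lambda_k^\ast \mathbf H_i)\,\mathbf u_i^\ast(k) = -\sum_{t=k+1}^{N} \mathbf B_i^\top (\mathbf A_i^\top)^{t-1-k}\mathbf Q_i \mathbf x_i(t) =: \mathbf r_i(k).
\]

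The key manipulation, which I expect to be the crux, is to take the inner product of this identity with $\mathbf u_i^\ast(k)$ and handle the right-hand side carefully. On the left, $\mathbf u_i^{\ast\top}(k)\mathbf R_i \mathbf u_i^\ast(k)\ge 0$ may be discarded, leaving $\lambda_k^\ast\,\mathbf u_i^{\ast\top}(k)\mathbf H_i \mathbf u_i^\ast(k) \le \mathbf u_i^{\ast\top}(k)\mathbf r_i(k)$. On the right, I would split each $\mathbf x_i(t)$, $t>k$, into the part driven by $\mathbf u_i(k)$ and the remainder $\tilde{\mathbf x}_i(t) = \mathbf A_i^t \mathbf x_i(0) + \sum_{j\ne k}\mathbf A_i^{t-1-j}\mathbf B_i \mathbf u_i(j)$. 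The $\mathbf u_i(k)$-part contributes the term $-\mathbf u_i^{\ast\top}(k)\mathbf M_i(k)\mathbf u_i^\ast(k)$ with $\mathbf M_i(k) = \sum_{t=k+1}^N \mathbf B_i^\top(\mathbf A_i^\top)^{t-1-k}\mathbf Q_i \mathbf A_i^{t-1-k}\mathbf B_i \ge 0$ (as $\mathbf Q_i>0$); being non-positive, it can be dropped. This is precisely what removes the index $j=k$ from the subsequent bound and explains the $j\ne k$ restriction in $\mathscr{S}_\ast$. What remains is $\lambda_k^\ast\,\mathbf u_i^{\ast\top}(k)\mathbf H_i \mathbf u_i^\ast(k) \le \|\mathbf u_i^\ast(k)\|\,\|\tilde{\mathbf r}_i(k)\|$, where $\tilde{\mathbf r}_i(k)$ is $\mathbf r_i(k)$ with each $\mathbf x_i(t)$ replaced by $\tilde{\mathbf x}_i(t)$.

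Finally I would aggregate. Bounding $\|\tilde{\mathbf r}_i(k)\|$ with $\|\mathbf Q_i\|\le\delta_{\rm max}$, $\|\mathbf A_i\|\le\alpha$, $\|\mathbf B_i\|\le\beta$ and the state/control estimates gives an $i$-independent bound $\|\tilde{\mathbf r}_i(k)\| \le \delta_{\rm max}\beta\sum_{t=k+1}^N \alpha^{t-1-k}\big(\gamma\alpha^t + \beta\sum_{j\ne k}\alpha^{t-1-j}\sqrt{C(j)/\rho}\big)$. Summing the per-agent inequality over $i$, using the balancing identity on the left and $\|\mathbf u_i^\ast(k)\|\le\sqrt{C(k)/\rho}$ on the right, the $n$ agents contribute a factor $n$, so that
\[
\lambda_k^\ast C(k) \le n\sqrt{\tfrac{C(k)}{\rho}}\,\delta_{\rm max}\beta\sum_{t=k+1}^N\Big[\gamma\alpha^{2t-k-1}+\beta\!\!\sum_{\substack{j=0\\ j\ne k}}^{t-1}\!\!\alpha^{2t-j-k-2}\sqrt{\tfrac{C(j)}{\rho}}\Big].
\]
Dividing by $C(k)>0$ converts the prefactor into $n\beta\delta_{\rm max}/\sqrt{C(k)\rho}$, whereupon the defining inequality of $\mathscr{S}_\ast$ is exactly the statement $\lambda_k^\ast \le \lambda^\dag$. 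As this holds for every $k\in\mathcal T$, we conclude $\boldsymbol\lambda^\ast \le \lambda^\dag\pmb 1$. The main obstacle is the sign bookkeeping isolating $\mathbf M_i(k)$; the rest is routine norm estimation.
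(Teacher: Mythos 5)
Your proposal is correct and follows essentially the same route as the paper's proof: both substitute the dynamics into \eqref{opt_DLTD_2}, set the gradient with respect to $\mathbf u_i(k)$ to zero, contract against $\mathbf u_i^\ast(k)$ and sum over agents so that the balancing identity \eqref{eq3_2} produces $\lambda_k^\ast C(k)$ on the left, drop the non-positive $\mathbf R_i$-term and the $j=k$ quadratic-in-$\mathbf Q_i$ term, and finish with the norm bounds $\|\mathbf u_i^\ast(t)\|\le\sqrt{C(t)/\rho}$, $\|\mathbf A_i\|\le\alpha$, $\|\mathbf B_i\|\le\beta$, $\|\mathbf x_i(0)\|\le\gamma$. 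Your identification of the dropped matrix $\mathbf M_i(k)\ge 0$ as the reason for the $j\ne k$ restriction in $\mathscr S_\ast$ is exactly the paper's sign argument, just stated more explicitly.
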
	
\begin{proof}
Considering the  equality in \eqref{eq3}, we obtain
\begin{equation}\label{eq3_2}
	\sum_{i=1}^{n} \mathbf u_i^{\ast \top}(t) \mathbf H_i \mathbf u_i^\ast(t) = C(t), \quad t \in \mathcal{T}.
\end{equation}
Furthermore, the following inequality holds
\begin{equation}\label{eq3_3}
	\sigma_{\rm min}(\mathbf H_i) \left\| \mathbf u_i^\ast(t) \right\|^2 \leq \mathbf u_i^{\ast \top}(t) \mathbf H_i \mathbf u_i^\ast(t).
\end{equation}
%where $\sigma_{\rm min}(\mathbf H_i)$ is the smallest eigenvalue of $\mathbf H_i$.
Additionally, since $\mathbf u_i^{\ast \top}(t) \mathbf H_i \mathbf u_i^\ast(t) \geq 0$, the equality in \eqref{eq3_2} yields 
%	\begin{equation}\label{eq3_4}
	$	\mathbf u_i^{\ast \top}(t) \mathbf H_i \mathbf u_i^\ast(t) \leq C(t).$
	%	\end{equation}
Following from \eqref{eq3_3}, we obtain
\begin{equation}\label{eq3_5}
	\sigma_{\rm min}(\mathbf H_i) \left\| \mathbf u_i^\ast(t) \right\|^2 \leq C(t).
\end{equation}
According to Assumption \ref{assumption3}, we have $\mathbf H_i \geq \rho I$, meaning that $\sigma_{\rm min}(\mathbf H_i) \geq \rho$. Consequently, inequality \eqref{eq3_5} results in 
\begin{equation}\label{eq3_6}
	\left\|\mathbf u_i^\ast(t) \right\| \leq \sqrt{\frac{C(t)}{\rho}}.
\end{equation}
Additionally, from the dynamical equation in \eqref{opt_DLTD_2}, we obtain
\begin{equation}\label{eq15}
	\mathbf x_i(t)=\mathbf A_i^t \mathbf x_i(0)+ \sum_{j=0}^{t-1}{\mathbf A_i^{t-j-1} \mathbf B_i \mathbf u_i(j)}, \quad t \in \{1, 2, ..., N\}.
\end{equation}
Substituting \eqref{eq15} into \eqref{opt_DLTD_2} yields an unconstrained optimization problem, in which the only decision variable is ${\mathbf U}_i$. The associated objective function $J$ is 
\begin{equation*}\label{opt_DLTD_4}
	%	\resizebox {0.487\textwidth} {!} {$
		\small
		\begin{split}
			&J :=  \sum_{t=1}^{N} \Bigg[ \Bigg. - \Bigg( \mathbf A_i^t \mathbf x_i(0) + \sum_{j=0}^{t-1} \mathbf A_i^{t-j-1} \mathbf B_i \mathbf u_i(j) \Bigg)^\top \mathbf Q_i \\ &\times \Bigg( \mathbf A_i^t \mathbf x_i(0) + \sum_{j=0}^{t-1} \mathbf A_i^{t-j-1} \mathbf B_i \mathbf u_i(j) \Bigg)  \Bigg.\Bigg]  - \mathbf x_i^\top(0) \mathbf Q_i \mathbf x_i(0)   \\   &+\sum_{t=0}^{N-1} \Bigg[ \Bigg. - \mathbf u_i^\top(t) \mathbf R_i \mathbf u_i(t) +\lambda_t^\ast \Bigg(  a_i(t) -  \mathbf u_i^\top(t) \mathbf H_i \mathbf u_i(t) \Bigg) \Bigg.\Bigg]. %\\
			%{\rm s.t.} \quad  &u_i(t)\in \mathbb{R}^{\geq0},  \,\,\, t \in \mathcal{T}. \\
		\end{split} %$}
\end{equation*}
Let $k$ denote a time interval indexed in $\mathcal{T}$. %The derivative of $J$ with respect to $\mathbf u_i(k)$ is %Based on the Karush–Kuhn–Tucker (KKT) conditions, we have 
%	\begin{equation}\label{eq16}
	%	%\resizebox {0.487\textwidth} {!} {$
		%	\begin{aligned}
			%	&\frac{\partial J}{\partial \mathbf u_i(k)}=-2\sum_{t=k+1}^{N} \Bigg[ \Bigg.  \Bigg( \mathbf A_i^{t-k-1} \mathbf B_i \Bigg)^\top \mathbf Q_i \Bigg( \mathbf A_i^t \mathbf x_i(0) \\&+ \sum_{j=0}^{t-1} \mathbf A_i^{t-j-1} \mathbf B_i \mathbf u_i(j) \Bigg) \Bigg.\Bigg] -2 \mathbf R_i \mathbf u_i(k)  -2 \lambda^\ast_k \mathbf H_i \mathbf u_i(k).
	%\end{aligned}%$}
%\end{equation}
Setting $\frac{\partial J}{\partial \mathbf u_i(k)} =0$, applying some matrix manipulations, and using  \eqref{eq3_2}, we obtain
%\begin{multline}\label{eq21}
	%%	\resizebox {0.487\textwidth} {!} {$
		%	%\begin{gathered}
		%\lambda^\ast_k \mathbf H_i \mathbf u_i(k)=-\sum_{t=k+1}^{N}\Bigg[ \Bigg. \Bigg(\mathbf A_i^{t-k-1} \mathbf B_i \Bigg)^\top  \mathbf Q_i \\ \times \Bigg( \mathbf A_i^t \mathbf x_i(0) + \sum_{j=0}^{t-1} \mathbf A_i^{t-j-1} \mathbf B_i \mathbf u_i(j) \Bigg) \Bigg.\Bigg] - \mathbf R_i \mathbf u_i(k) .
		%%\end{gathered}	%$}
		%\end{multline}
		%Pre-multiplying the above equality by $\mathbf u_i^\top(k)$ and taking the summation in the resulting equality over $i$ from $1$ to $n$, we obtain
		%\begin{equation}\label{eq22}
		%\resizebox {0.487\textwidth} {!} {$
%	\begin{gathered}
		%	\lambda^\ast_k \sum_{i=1}^{n} \mathbf u_i^\top(k) \mathbf H_i \mathbf u_i(k) = -\sum_{t=k+1}^{N} \Bigg[ \sum_{i=1}^{n} \mathbf u_i^\top(k) \Bigg.  \Bigg(\mathbf A_i^{t-k-1} \mathbf B_i \Bigg)^\top   \mathbf Q_i \\ \times \Bigg( \mathbf A_i^t \mathbf x_i(0) + \sum_{j=0}^{t-1} \mathbf A_i^{t-j-1} \mathbf B_i \mathbf u_i(j) \Bigg)  \Bigg.\Bigg] - \sum_{i=1}^{n} \mathbf u_i^\top(k) \mathbf R_i \mathbf u_i(k).
		%\end{gathered}$}
		%\end{equation}
%Substituting $\sum_{i=1}^{n} \mathbf u_i^{\top}(k) \mathbf H_i \mathbf u_i(k) = C(k)$ in \eqref{eq3_2} into \eqref{eq22} results in
\begin{equation}\label{eq25}
	\small
	\resizebox {0.487\textwidth} {!} {$
	\begin{aligned}
			&\lambda^\ast_k = -\frac{1}{C(k)}  \sum_{t=k+1}^{N} \Bigg[ \sum_{i=1}^{n}  \Bigg.  \Bigg(\mathbf A_i^{t-k-1} \mathbf B_i \mathbf u_i(k) \Bigg)^\top \mathbf Q_i \Bigg( \mathbf A_i^t \mathbf x_i(0) \\&+ \sum_{j=0}^{t-1} \mathbf A_i^{t-j-1} \mathbf B_i \mathbf u_i(j) \Bigg)  \Bigg.\Bigg]- \frac{1}{C(k)}\sum_{i=1}^{n} \mathbf u_i^\top(k) \mathbf R_i \mathbf u_i(k).
	\end{aligned}$}
\end{equation}	
Since $\mathbf Q_i>0$, we obtain $$-(\mathbf A_i^{t-k-1}\mathbf B_i \mathbf u_i(k))^\top \mathbf Q_i \mathbf A_i^{t-k-1} \mathbf B_i \mathbf u_i(k)\leq 0.$$ Similarly, $\mathbf R_i>0$, and  $- \mathbf u_i^\top(k) \mathbf R_i \mathbf u_i(k) \leq 0$.  
Next, we seek an upper bound for $\lambda^\ast_k$. First, let these two terms equal zero. Then use Assumption \ref{assumption3} and the inequality in \eqref{eq3_6}, and by substitution into the norm of  \eqref{eq25},  we yield
%\begin{multline}\label{eq23}
	%	\lambda^\ast_k \leq \frac{1}{C(k)} \sum_{t=k+1}^{N} \sum_{i=1}^{n} \left\|\mathbf A_i\right\|^{t-k-1}\left\| \mathbf B_i\right\| \left\| \mathbf u_i(k)\right\|  \left\| \mathbf Q_i\right\| \\ \times \Bigg( \left\| \mathbf A_i\right\|^t \left\| \mathbf x_i(0)\right\| + \sum_{\substack{ j=0 \\ j\ne k}}^{t-1} \left\| \mathbf A_i\right\|^{t-j-1} \left\| \mathbf B_i\right\| \left\| \mathbf u_i(j)\right\| \Bigg) .
	%\end{multline}
	%Considering Assumption \ref{assumption3} and the inequality in \eqref{eq3_6}, we  yield
\begin{multline}\label{eq24}
	\lambda^\ast_k \leq \frac{1}{C(k)}  \sum_{t=k+1}^{N}  \Bigg[ \Bigg. n \alpha^{2t-k-1}\beta \sqrt{\frac{C(k)}{\rho}}  \delta_{\rm max} \gamma  \\ + n \beta^2 \sqrt{\frac{C(k)}{\rho}} \delta_{\rm max}\sum_{\substack{ j=0 \\ j\ne k}}^{t-1} \alpha^{2t-j-k-2} \sqrt{\frac{C(j)}{\rho}} \Bigg.\Bigg].
\end{multline}
By assumption, the right-hand side of \eqref{eq24} is less than or equal to $\lambda^\dag$. Therefore, we obtain $\lambda^\ast_k \leq \lambda^\dag$.
\end{proof}
					
%%%%%%%%%%%%%%%%%%%%%%%%%%%%%%%%%%%%%%%%%%%%%%%%%%%%%%%%%%%%%%%%%%
\subsection{Dynamic Programming Approach}
%Since Assumption \ref{assumption2} is satisfied, Proposition \ref{prop1} holds. We examine the competitive optimization problem in \eqref{opt_DLTD_1}. According to Lemma \ref{lemma1}, there holds $\lambda^\ast_t \geq 0$. We skip the case $\lambda_t^\ast=0$, because a zero price is always socially resilient. So, it is sufficient to only study $\lambda^\ast_t >0$.  According to \eqref{opt_DLTD_1_3}, the optimization problem in \eqref{opt_DLTD_1} can be reformulated as 
%\begin{equation}\label{opt_DLTD_2_2}
	%	\begin{aligned}
			%	\max_{{\mathbf U}_i} \quad &  - \mathbf x_i^\top(N) \mathbf Q_i \mathbf x_i(N) + \sum_{t=0}^{N-1} \big[ \big. - \mathbf x_i^\top(t) \mathbf Q_i \mathbf x_i(t)  \\ &- \mathbf u_i^\top(t) \mathbf R_i \mathbf u_i(t)+ \lambda_t^\ast \big(a_i(t) -  \mathbf u_i^\top(t)\mathbf H_i \mathbf u_i(t) \big) \big. \big] \\
			%	{\rm s.t.} \quad & \mathbf x_i(t+1) = \mathbf A_i \mathbf x_i(t)+ \mathbf B_i \mathbf u_i(t), \quad   t \in \mathcal{T}.
			%\end{aligned}
			%\end{equation}
%Obviously, \eqref{opt_DLTD_2_2} is equivalent to
Similar to Section \ref{sec:QuadProg}, let $\lambda^\ast_t >0$ and consider the optimization problem in \eqref{opt_DLTD_2} which is equivalent to
\begin{equation}\label{opt_DLTD_2_3}
	\begin{aligned}
		\max_{{\mathbf U}_i} \quad   &- \mathbf x_i^\top(N) \mathbf Q_i \mathbf x_i(N) + \sum_{t=0}^{N-1} \Big[ \big. - \mathbf x_i^\top(t) \mathbf Q_i \mathbf x_i(t) \\ &- \mathbf u_i^\top(t) \bigg(\mathbf R_i + \lambda^\ast_t \mathbf H_i \bigg) \mathbf u_i(t)+\lambda_t^\ast  a_i(t) \big. \Big] \\
		{\rm s.t.} \quad & \mathbf x_i(t+1) = \mathbf A_i \mathbf x_i(t)+ \mathbf B_i \mathbf u_i(t), \quad t \in \mathcal{T}.
		\end{aligned}
\end{equation}
						
\medskip
						
\begin{theorem}\label{theorem3}
	Consider the dynamic MAS on the time horizon ${N}$. Let Assumptions \ref{assumption2} and \ref{assumption3} hold. Suppose $\delta_{\rm max} \in \mathbb{R}^{>0}$ is selected from the following set
	\begin{multline*}\label{eq_set_Q2}
	%\begin{aligned}
	\mathscr{S}_{\ast} =\left\{ \Bigg. \right. \delta_{\rm max} \in \mathbb{R}^{>0} : 
	\delta_{\rm max} \sum_{t=1}^{N}  \gamma \alpha^{2t-1}    \leq \frac{\sqrt{C(0)\rho}}{n \beta}  \lambda^\dag,\\
	\delta_{\rm max} \sum_{t=k+1}^{N}  \Bigg[ \Bigg. \gamma \alpha^{2t-k-1}   + \beta  {\sum_{\substack{ j=0 }}^{k-1}} \sqrt{\frac{C(j)}{\rho}} \alpha^{2t-j-k-2} \Bigg.\Bigg] \\  \leq \frac{\sqrt{C(k)\rho}}{n \beta}  \lambda^\dag \, \, \, \,  {\rm for}\,\, \forall k \in \mathcal{T}, k\neq 0\left. \Bigg. \right\}.
%\end{aligned}
\end{multline*} 
Then, the resulting  $\boldsymbol \lambda^\ast$ is socially resilient for all utility functions satisfying $\left\| \mathbf Q_i \right\| \leq \delta_{\rm max}$  (or $\mathbf Q_i \leq \delta_{\rm max} \mathbf I$).
\end{theorem}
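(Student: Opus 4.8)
The plan is to solve the LQR problem \eqref{opt_DLTD_2_3} by backward dynamic programming (rather than by the direct substitution used in Theorem \ref{theorem2}), which keeps the optimal control in feedback form and lets me bound the reached state by its \emph{causal} history only. First I would posit a quadratic value function $V_{i,t}(\mathbf x)=-\mathbf x^\top\mathbf P_{i,t}\mathbf x+d_{i,t}$ with terminal condition $\mathbf P_{i,N}=\mathbf Q_i$, and insert it into the Bellman recursion for \eqref{opt_DLTD_2_3}. Since $\mathbf R_i+\lambda_t^\ast\mathbf H_i>0$ and $\mathbf P_{i,t+1}\geq0$, the stage problem is strictly concave in $\mathbf u_i(t)$, so stationarity yields the feedback relation
\begin{equation*}
(\mathbf R_i+\lambda_k^\ast\mathbf H_i)\,\mathbf u_i^\ast(k)=-\mathbf B_i^\top\mathbf P_{i,k+1}\,\mathbf x_i(k+1),
\end{equation*}
together with the Riccati recursion $\mathbf P_{i,t}=\mathbf Q_i+\mathbf A_i^\top\mathbf P_{i,t+1}\mathbf A_i-\mathbf A_i^\top\mathbf P_{i,t+1}\mathbf B_i(\mathbf R_i+\lambda_t^\ast\mathbf H_i+\mathbf B_i^\top\mathbf P_{i,t+1}\mathbf B_i)^{-1}\mathbf B_i^\top\mathbf P_{i,t+1}\mathbf A_i$.

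Next I would extract $\lambda_k^\ast$. Left-multiplying the feedback relation by $\mathbf u_i^{\ast\top}(k)$, summing over $i$, and using the balancing identity $\sum_{i=1}^n\mathbf u_i^{\ast\top}(k)\mathbf H_i\mathbf u_i^\ast(k)=C(k)$ from \eqref{eq3_2}, I obtain
\begin{equation*}
\lambda_k^\ast C(k)=-\sum_{i=1}^n\mathbf u_i^{\ast\top}(k)\mathbf R_i\mathbf u_i^\ast(k)-\sum_{i=1}^n\mathbf u_i^{\ast\top}(k)\mathbf B_i^\top\mathbf P_{i,k+1}\mathbf x_i(k+1).
\end{equation*}
Because $\mathbf R_i>0$ the first sum is nonnegative, and after expanding $\mathbf x_i(k+1)=\mathbf A_i\mathbf x_i(k)+\mathbf B_i\mathbf u_i^\ast(k)$ the term $\mathbf u_i^{\ast\top}(k)\mathbf B_i^\top\mathbf P_{i,k+1}\mathbf B_i\mathbf u_i^\ast(k)$ is also nonnegative (using $\mathbf P_{i,k+1}\geq0$). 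Dropping both nonnegative contributions and taking norms gives $\lambda_k^\ast\leq\frac{1}{C(k)}\sum_i\|\mathbf u_i^\ast(k)\|\,\|\mathbf B_i\|\,\|\mathbf P_{i,k+1}\|\,\|\mathbf A_i\|\,\|\mathbf x_i(k)\|$.

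The heart of the proof is bounding the three remaining factors. For the Riccati matrix I would establish, by backward induction, both $\mathbf P_{i,t}\geq0$ (used above) and the comparison $\mathbf P_{i,t}\leq\mathbf Q_i+\mathbf A_i^\top\mathbf P_{i,t+1}\mathbf A_i$, which unrolls to $\mathbf P_{i,k+1}\leq\sum_{\ell=0}^{N-k-1}(\mathbf A_i^\top)^\ell\mathbf Q_i\mathbf A_i^\ell$ and hence $\|\mathbf P_{i,k+1}\|\leq\delta_{\rm max}\sum_{\ell=0}^{N-k-1}\alpha^{2\ell}$ under Assumption \ref{assumption3}. For the control I reuse $\|\mathbf u_i^\ast(k)\|\leq\sqrt{C(k)/\rho}$ from \eqref{eq3_6}, and for the state I use \eqref{eq15} only up to time $k$ to get the causal estimate $\|\mathbf x_i(k)\|\leq\alpha^k\gamma+\beta\sum_{j=0}^{k-1}\alpha^{k-j-1}\sqrt{C(j)/\rho}$. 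This is exactly where the dynamic programming route improves on Theorem \ref{theorem2}: since $\mathbf x_i(k)$ depends only on past inputs $\mathbf u_i(0),\dots,\mathbf u_i(k-1)$, the inner summation shrinks to $j\leq k-1$. Substituting these estimates, using $\|\mathbf A_i\|\leq\alpha$, $\|\mathbf B_i\|\leq\beta$, collapsing $\frac{1}{C(k)}\sqrt{C(k)/\rho}=1/\sqrt{C(k)\rho}$, and reindexing $\ell\mapsto t=\ell+k+1$, reproduces precisely the right-hand side defining $\mathscr{S}_\ast$; its inequality then gives $\lambda_k^\ast\leq\lambda^\dag$ for each $k\neq0$, while $k=0$ (empty causal sum, $\|\mathbf x_i(0)\|\leq\gamma$) produces the separate first inequality of $\mathscr{S}_\ast$. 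The main obstacle I anticipate is the Riccati bookkeeping in this last step: verifying positive semidefiniteness and the upper comparison simultaneously, which is cleanest via the matrix-inversion-lemma form $\mathbf P_{i,t}=\mathbf Q_i+\mathbf A_i^\top(\mathbf P_{i,t+1}^{-1}+\mathbf B_i(\mathbf R_i+\lambda_t^\ast\mathbf H_i)^{-1}\mathbf B_i^\top)^{-1}\mathbf A_i$, which exhibits $\mathbf Q_i\leq\mathbf P_{i,t}\leq\mathbf Q_i+\mathbf A_i^\top\mathbf P_{i,t+1}\mathbf A_i$ at once.
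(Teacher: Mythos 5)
Your proposal is correct and follows essentially the same route as the paper's proof: treat \eqref{opt_DLTD_2_3} as a standard LQR problem, extract $\lambda_k^\ast$ from the Riccati feedback relation via the balancing identity \eqref{eq3_2}, drop the non-positive quadratic terms, and bound $\|\mathbf P_{i,k+1}\|$ by unrolling $\mathbf P_{i,k}\leq \mathbf A_i^\top\mathbf P_{i,k+1}\mathbf A_i+\mathbf Q_i$ together with the causal state estimate, exactly reproducing \eqref{eq41} and \eqref{eq48}. Your explicit verification of $\mathbf P_{i,t}\geq 0$ via the matrix-inversion-lemma form is a small point of extra care that the paper leaves implicit, but it does not change the argument.
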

						
\begin{proof}
%Considering the  equality in \eqref{eq3}, we obtain
%\begin{equation}\label{eq26_balancing}
  %	\sum_{i=1}^{n}\mathbf u_i^{\ast \top}(t) \mathbf H_i \mathbf u_i^\ast(t) = C(t), \quad t \in \mathcal{T}.
%\end{equation}
Similar to the previous section,  \eqref{eq3_2} and \eqref{eq3_6} are satisfied. %; that is,  %, we can conclude
%\begin{equation}\label{eq3_7}
		%	\sum_{i=1}^{n}\mathbf u_i^{\ast \top}(t) \mathbf H_i \mathbf u_i^\ast(t) = C(t), \quad \left\|\mathbf u_i^\ast(t) \right\| \leq \sqrt{\frac{C(t)}{\rho}}, \quad t \in \mathcal{T}.
%\end{equation}
%%Since $\lambda^\ast_t \geq 0$ (see Lemma \ref{lemma1}),
Since $\lambda^\ast_t > 0$, we obtain $ \mathbf R_i + \lambda^\ast_t \mathbf H_i >0$. Consequently, the optimization problem in \eqref{opt_DLTD_2_3} is a \emph{standard LQR problem}. Therefore, at each time step $k \in \mathcal{T}$, the optimal control solution is obtained as
$	\mathbf u_i^\ast(k) = - \big(\mathbf B_i^\top \mathbf P_{i, k+1} \mathbf B_i + ( \mathbf R_i + \lambda^\ast_k \mathbf H_i) \big)^{-1} \mathbf B_i^\top \mathbf P_{i, k+1} \mathbf A_i \mathbf x_i(k),$
where 
\begin{equation}\label{eq_p}
	\begin{aligned}
	\mathbf P_{i,k} =\mathbf A_i^\top \mathbf P_{i,k+1} \mathbf A_i + \mathbf Q_i - \mathbf A_i^\top \mathbf P_{i,k+1} \mathbf B_i \big( \mathbf B_i^\top \mathbf P_{i, k+1} \mathbf B_i \\ + (\mathbf R_i + \lambda^\ast_k \mathbf H_i) \big)^{-1} \mathbf B_i^\top \mathbf P_{i, k+1} \mathbf A_i.
	\end{aligned}
\end{equation}
The Riccati difference equation in \eqref{eq_p} is initialized with $\mathbf P_{i,N}= \mathbf Q_i$ and is solved backward from $k=N-1$ to $k=0$. Additionally, since the last term on the right-hand side of \eqref{eq_p} is negative semi-definite, we obtain
$
	\mathbf P_{i,k} \leq \mathbf A_i^\top \mathbf P_{i,k+1} \mathbf A_i + \mathbf Q_i
$,
and therefore,
$
	\left\|\mathbf P_{i,k} \right\| \leq \alpha^2 \left\| \mathbf P_{i,k+1} \right\| + \left\| \mathbf Q_i \right\|
$.
Starting from $k=N-1$, we obtain
\begin{equation}\label{eq37}
	\begin{aligned}
		\left\|\mathbf P_{i,N-1} \right\| \leq & (\alpha^2 +1 ) \left\| \mathbf Q_i \right\|, \\
		%\left\|\mathbf P_{i,N-2} \right\| \leq & (\alpha^4+ \alpha^2 +1 ) \left\| \mathbf Q_i \right\|, \\
		\vdots\\
		\left\|\mathbf P_{i,N-p} \right\| \leq & (\alpha^{2p} + \alpha^{2(p-1)}+ \dots+\alpha^2 +1 ) \left\| \mathbf Q_i \right\|.	
  \end{aligned}
\end{equation}
To find the optimal control input, we start from $k=0$ and proceed in a forward manner. In the first step, we obtain
\begin{equation}\label{eq34}
	%\begin{aligned}
	\mathbf u_i^\ast(0) = - \big(\mathbf B_i^\top \mathbf P_{i, 1} \mathbf B_i +  \mathbf R_i + \lambda^\ast_0 \mathbf H_i \big)^{-1} \mathbf B_i^\top \mathbf P_{i, 1} \mathbf A_i \mathbf x_i(0).
	%\\&= \mathbf F_{i,0} \mathbf x_i(0).
	%\end{aligned}
\end{equation}
	%Pre-multiplying \eqref{eq34} by $\mathbf u_i^{\ast \top}(0)\big(\mathbf B_i^\top \mathbf P_{i, 1} \mathbf B_i +  \mathbf R_i + \lambda^\ast_0 \mathbf H_i \big)$ and then taking the summation in the resulting equality over $i$ from $1$ to $n$ obtains
	%\begin{multline}\label{eq38}
	%%\begin{aligned}
	%\sum_{i=1}^{n}  \mathbf u_i^{\ast \top}(0)\big(\mathbf B_i^\top \mathbf P_{i, 1} \mathbf B_i +  \mathbf R_i + \lambda^\ast_0 \mathbf H_i \big) \mathbf u_i^\ast(0) \\ = - \sum_{i=1}^{n}\mathbf u_i^{\ast \top}(0) \mathbf B_i^\top \mathbf P_{i, 1} \mathbf A_i \mathbf x_i(0).
	%%\end{aligned}
	%\end{multline}
	%Next, \eqref{eq38} can be written as
	%\begin{multline}\label{eq39}
	%%\resizebox {0.487\textwidth} {!} {$
	%\lambda_{0}^\ast \sum_{i=1}^{n} \mathbf u_i^{\ast \top}(0) \mathbf H_i  \mathbf u_i^\ast(0) = - \sum_{i=1}^{n}\mathbf u_i^{\ast \top}(0) \mathbf B_i^\top \mathbf P_{i, 1} \mathbf A_i \mathbf x_i(0) \\-\sum_{i=1}^{n} \mathbf u_i^{\ast \top}(0)\big(\mathbf B_i^\top \mathbf P_{i, 1} \mathbf B_i +  \mathbf R_i \big) \mathbf u_i^\ast(0).
	%\end{multline}
	%Considering \eqref{eq3_7}, the equality in \eqref{eq39} results in
Applying some matrix manipulations and using the equality in  \eqref{eq3_2}, we extract $\lambda^\ast_0$ from \eqref{eq34} as	\begin{multline}\label{eq40}
	\small
	\lambda_{0}^\ast = - \frac{1}{C(0)} \sum_{i=1}^{n}\mathbf u_i^{\ast \top}(0) \mathbf B_i^\top \mathbf P_{i, 1} \mathbf A_i \mathbf x_i(0) \\- \frac{1}{C(0)}\sum_{i=1}^{n} \mathbf u_i^{\ast \top}(0)\big(\mathbf B_i^\top \mathbf P_{i, 1} \mathbf B_i +  \mathbf R_i \big) \mathbf u_i^\ast(0).
\end{multline}
To obtain an upper bound for $\lambda^\ast_0$, we omit the second term on the right-hand side of \eqref{eq40} which is always non-positive. %, so
%\begin{equation}\label{eq40_2}
	%\begin{aligned}
		%\lambda_{0}^\ast &\leq - \frac{1}{C(0)} \sum_{i=1}^{n}\mathbf u_i^{\ast \top}(0) \mathbf B_i^\top \mathbf P_{i, 1} \mathbf A_i \mathbf x_i(0).
	%\end{aligned}
%\end{equation}
Additionally,  using norm properties and considering Assumption \ref{assumption3} and the inequalities in \eqref{eq3_6} and \eqref{eq37}, we obtain 
\begin{equation}\label{eq41}
	\begin{aligned}
		\lambda_{0}^\ast & \leq  \frac{n}{C(0)} \sqrt{\frac{C(0)}{\rho}} \beta  \gamma \delta_{\rm max} \sum_{t=1}^{N} \alpha^{2t-1}.
	\end{aligned}
\end{equation}
Moving forward to the time step $k>0$, we obtain
%	\begin{equation}\label{eq45_3}
%	\begin{aligned}
	%	\lambda_{k}^\ast &\leq  - \frac{1}{C(k)} \sum_{i=1}^{n}\mathbf u_i^{\ast \top}(k) \mathbf B_i^\top \mathbf P_{i, k+1} \mathbf A_i \mathbf x_i(k),
%\end{aligned}
%\end{equation}
%and
\begin{multline}\label{eq48}
%\resizebox {0.487\textwidth} {!} {$
	\lambda^\ast_k \leq \frac{n}{\sqrt{C(k)\rho}} \beta \delta_{\rm max} \sum_{t=k+1}^{N}  \Bigg[ \Bigg. \gamma \alpha^{2t-k-1}   \\+ \beta   \sum_{\substack{ j=0 }}^{k-1} \sqrt{\frac{C(j)}{\rho}} \alpha^{2t-j-k-2} \Bigg.\Bigg]. %$}
\end{multline}
	By assumption, the right-hand side of \eqref{eq41} and \eqref{eq48} is less than or equal to $\lambda^\dag$, which confirms $\lambda^\ast_k \leq \lambda^\dag$ for $k \in \mathcal{T}$.
\end{proof}

%\medskip
										
%\begin{remark}
	%The two proposed methods, quadratic programming and dynamic programming, solve the same optimal control problem so  the optimal price is the same in both cases. However, different approaches are applied to obtain an upper bound for the optimal price leading to different sets in Theorems \ref{theorem2} and \ref{theorem3}. Clearly, Theorem \ref{theorem3} presents a better (larger) set compared to
%Theorem \ref{theorem2}.
%\end{remark}
%%%%%%%%%%%%%%%%%%%%%%%%%%%%%%%%%%%%%%%%%%%%%%%%%%%%%%%%%%%%%%%%%%%%%%%%%%%%%%%%
\subsection{Numerical Algorithm}\label{sec:NumericalAlgorithm}
The two proposed sets in Theorems \ref{theorem2} and \ref{theorem3} are conservative but provide insight into the trade-off between utility functions' parameters and achieving the price threshold. To obtain more accurate and practical results, we propose a numerical algorithm  which provides less conservative bounds on the  parameters. The algorithm proceeds based on the bisection method.
										
\medskip
										
\noindent\textbf{Numerical Social Shaping Problem.} Consider the social welfare problem in \eqref{opt_social_DLD_1}. Let $\delta_{\rm max} \in \mathbb{R}^{>0}$ denote the design parameter. Suppose Assumption \ref{assumption2} holds with  $\mathbf Q_i = q_i \mathbf I$ where agents have the freedom to select $q_i \in \left(0, \delta_{\rm max} \right]$. Assume $\lambda^\dag$ is the  price threshold accepted by all agents and $\mathbf R_i$ is specified for each $i \in \mathcal{V}$. We aim to find the upper bound $\delta_{\rm max}$ by a numerical approach such that if $q_i \in \left(0, \delta_{\rm max} \right]$ for $i \in \mathcal{V}$ then $\lambda^\ast_t \leq \lambda^\dag$ for $t \in \mathcal{T}$. Accordingly, the key steps required are illustrated in Algorithm \ref{algorithm1}.
										
%\medskip
										
\begin{algorithm}[h]
\SetAlgoLined
%\KwIn{}
\textbf{Input:} {System parameters $\mathbf A_i$, $\mathbf B_i$, and $\mathbf H_i$, the initial state $\mathbf x_i(0)$,  the time horizon $N$, the penalty matrix $\mathbf R_i$, and the local supply $a_i(t)$ for $i \in \mathcal{V}$ and $t \in \mathcal{T}$}.
											
\textbf{Structure:} {Consider $\mathbf Q_i = q_i \mathbf I$}. Define 
\begin{equation}\label{eq52}
	\bar \lambda^\ast(\delta) = \max_{q_1, \dots, q_n \in \left(0, \delta \right] }\max_{t \in \mathcal{T}} \lambda^\ast_t.
\end{equation}
											
\textbf{Initialize:} Set $k=0$, $b_0 = 0$, and $d_0 =d_\varrho>0$ such that $d_\varrho$ is sufficiently large to satisfy $\bar \lambda^\ast(d_\varrho)>\lambda^\dag$. %Select $0 <\xi <1$;
											
\While{\rm True}{
	$L_k=(b_k + d_k)/2$, \quad
	$\lambda_k=\bar \lambda^\ast(L_k)$;
												
	\uIf{$\lambda_k > \lambda^\dag$}{
	$b_{k+1} = b_k$ and $d_{k+1}=L_k$;\\
	$k=k+1$;
}
\uElseIf{$\lambda_k < \lambda^\dag$}{
$b_{k+1} = L_k$  and $d_{k+1}=d_k$;\\
$k=k+1$;
}
\Else{
	$\delta_{\rm max} = L_k$;\\
	{break}
}
												
%  \If{$| L_k - b_k | \leq \varepsilon$ \rm{\textbf{and}} $| L_k - d_k | \leq \varepsilon$}{
%$\delta_{\rm max}= L_k$;\\
%break
%}
%$k=k+1$;
}	
%\KwOut{}
\textbf{Output:} {$\delta_{\rm max}=L_k$} if the algorithm stops after a finite number of steps. Otherwise, $\delta_{\rm max}=\lim_{k \rightarrow \infty}L_k$.
											
\caption{Bisection-Based  Social Shaping}
\label{algorithm1}
\end{algorithm}
										
\begin{lemma}\label{lemma2}
The function $\bar \lambda^\ast(\delta)$ in \eqref{eq52} is monotonically increasing.
\end{lemma}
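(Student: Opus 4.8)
The plan is to recognize that $\bar\lambda^\ast(\delta)$ is nothing but the optimal value of a fixed objective maximized over a feasible set that \emph{grows} as $\delta$ increases; monotonicity then follows from the elementary principle that enlarging a feasible set cannot decrease a maximum.

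First I would make the dependence on the parameters explicit. For any tuple $(q_1,\dots,q_n)$ with each $q_i>0$, setting $\mathbf Q_i=q_i\mathbf I$ and solving the competitive equilibrium --- equivalently, by Proposition \ref{prop1}, the social welfare problem in \eqref{opt_social_DLD_1} --- determines the optimal prices $\lambda^\ast_t$ for $t\in\mathcal T$; following the convention of Section \ref{sec:conceptual}, in case of non-uniqueness we take the largest admissible price at each time step. This makes
\[
\Phi(q_1,\dots,q_n):=\max_{t\in\mathcal T}\lambda^\ast_t
\]
a well-defined function of the tuple alone, and the definition \eqref{eq52} reads
\[
\bar\lambda^\ast(\delta)=\max_{(q_1,\dots,q_n)\in(0,\delta]^n}\Phi(q_1,\dots,q_n).
\]
The crucial point is that $\Phi$ itself does not depend on $\delta$: only the feasible box $(0,\delta]^n$ does.

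Next I would exploit the nesting of these boxes. For any $0<\delta_1\le\delta_2$ one has $(0,\delta_1]^n\subseteq(0,\delta_2]^n$, since each coordinate constraint $q_i\le\delta_1$ implies $q_i\le\delta_2$. Hence every tuple admissible for $\bar\lambda^\ast(\delta_1)$ is also admissible for $\bar\lambda^\ast(\delta_2)$, and maximizing the same objective $\Phi$ over a superset cannot decrease its value; therefore $\bar\lambda^\ast(\delta_1)\le\bar\lambda^\ast(\delta_2)$. Since $\delta_1\le\delta_2$ were arbitrary, $\bar\lambda^\ast$ is monotonically increasing.

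The only subtlety requiring care is whether the maximum in \eqref{eq52} is genuinely attained over the half-open box $(0,\delta]^n$; should attainment fail, I would simply replace $\max$ by $\sup$ throughout, and the set-inclusion argument delivers the identical conclusion, as suprema are monotone under set inclusion. I do not anticipate any real obstacle: the substance of the lemma is purely the monotonicity of an optimal value under feasible-set enlargement, while the nontrivial existence of the prices $\lambda^\ast_t$ for each parameter choice is already guaranteed by Proposition \ref{prop1}.
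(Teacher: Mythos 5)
Your argument is correct and is essentially the paper's own proof: the paper likewise observes that increasing $\delta$ enlarges the domain of admissible preferences, so the maximum possible price cannot decrease. Your version merely spells out the set-inclusion reasoning and the attainment caveat more explicitly.
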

\begin{proof}
	%$\bar \lambda^\ast(\delta)$ is the maximum value of the price over all time steps when agents select their preferences over a set upper bounded by $\delta$. When this upper bound increases, the domain of preferences expands respectively.  Therefore, the maximum possible price can never decrease.
	When $\delta$ increases, the domain of agents' preferences expands respectively.  Therefore, the maximum possible price can never decrease.
\end{proof}
										
\medskip
										
\begin{theorem}\label{theorem7}
	The auxiliary variable $L_k$ in Algorithm \ref{algorithm1} converges to $L^\ast$ for some $L^\ast \in \left(0, d_\varrho \right)$ when $k \rightarrow \infty$. %Moreover, we have $| L_k - L^\ast | \leq (\max \{\xi, 1-\xi\})^k L_0$.
\end{theorem}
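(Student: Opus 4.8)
The plan is to recognize Algorithm~\ref{algorithm1} as a standard bisection and to reduce the statement to the monotone convergence of the two bracketing sequences $\{b_k\}$ and $\{d_k\}$. First I would verify, by induction on $k$, that the loop always maintains a valid bracket, namely $0=b_0\le b_1\le\cdots\le b_k\le d_k\le\cdots\le d_1\le d_0=d_\varrho$, and that the bracket width halves at every iteration, $d_k-b_k=d_\varrho/2^{k}$. Both active branches replace $[b_k,d_k]$ either by $[b_k,L_k]$ (when $\lambda_k>\lambda^\dag$) or by $[L_k,d_k]$ (when $\lambda_k<\lambda^\dag$), and since $L_k=(b_k+d_k)/2$ each resulting interval has width $(d_k-b_k)/2$; this gives the halving immediately and preserves the ordering $b_{k+1}\le d_{k+1}$.

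Next, because $\{b_k\}$ is nondecreasing and bounded above by $d_\varrho$, while $\{d_k\}$ is nonincreasing and bounded below by $0$, both sequences converge, say to $\underline{L}$ and $\overline{L}$. Since $d_k-b_k=d_\varrho/2^{k}\to 0$ we get $\underline L=\overline L=:L^\ast$, and then $b_k\le L_k\le d_k$ squeezes $L_k\to L^\ast$. (If the terminating branch $\lambda_k=\lambda^\dag$ ever fires, the loop breaks and $L_k$ is eventually constant, so convergence is trivial; the argument above covers the non-terminating case.)

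It remains to place $L^\ast$ strictly inside $(0,d_\varrho)$. For the lower endpoint, suppose $L^\ast=0$; then $b_k\equiv 0$, so the low branch never executes and $\bar\lambda^\ast(L_k)>\lambda^\dag$ for every $k$. But the explicit estimates in Theorems~\ref{theorem2}--\ref{theorem3} bound $\bar\lambda^\ast(\delta)$ by a term linear in $\delta$, whence $\bar\lambda^\ast(L_k)\to 0$ as $L_k\to 0$, contradicting $\bar\lambda^\ast(L_k)>\lambda^\dag>0$; hence $L^\ast>0$. For the upper endpoint, suppose $L^\ast=d_\varrho$; then $d_k\equiv d_\varrho$, the high branch never executes, and $\bar\lambda^\ast(L_k)<\lambda^\dag$ for all $k$ with $L_k\uparrow d_\varrho$. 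Invoking continuity of $\delta\mapsto\bar\lambda^\ast(\delta)$ yields $\bar\lambda^\ast(L_k)\to\bar\lambda^\ast(d_\varrho)>\lambda^\dag$, a contradiction; hence $L^\ast<d_\varrho$.

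The induction and the monotone-convergence/squeeze step are routine. The main obstacle is ruling out $L^\ast=d_\varrho$: the monotonicity of Lemma~\ref{lemma2} alone does not suffice, since a monotone function may have an upward jump at $d_\varrho$ that leaves every $\bar\lambda^\ast(L_k)$ below $\lambda^\dag$. I would therefore establish the continuity of $\bar\lambda^\ast$ as a separate lemma, arguing that for fixed $(q_1,\dots,q_n)$ each optimal price $\lambda^\ast_t$ depends continuously on the parameters through the Riccati recursion \eqref{eq_p} and the induced control law, and that the parametric maximum in \eqref{eq52} over the box $(0,\delta]^n$ inherits this continuity; the monotone dependence on $\delta$ then upgrades to genuine continuity, which is the crux that makes the open-interval localization rigorous.
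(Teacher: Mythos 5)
Your core argument is the same as the paper's: both proofs observe that $\{b_k\}$ is nondecreasing and $\{d_k\}$ nonincreasing with $b_k\leq L_k\leq d_k$, that the bracket width is $d_\varrho/2^k\to 0$, and conclude by monotone convergence and squeezing that $b_k,d_k,L_k$ all converge to a common limit $L^\ast$. Where you diverge is the open-interval localization $L^\ast\in\left(0,d_\varrho\right)$. The paper disposes of this in one line by asserting that $b_k<L^\ast<d_k$ holds at every iteration and reading off $L^\ast\in\left(b_0,d_0\right)$; you instead argue by contradiction at each endpoint, using the linear-in-$\delta$ upper bounds behind Theorems~\ref{theorem2} and~\ref{theorem3} to exclude $L^\ast=0$, and a continuity property of $\bar\lambda^\ast$ to exclude $L^\ast=d_\varrho$. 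Your version is doing real work that the paper's does not: monotone convergence alone gives only $b_k\leq L^\ast\leq d_k$, and if one branch of the bisection fired at every iteration the limit would in fact sit on an endpoint (e.g.\ $b_k\equiv 0$ forces $L^\ast=0$), so the strict inequalities the paper invokes are exactly what needs proving. Your lower-endpoint argument closes this cleanly, since the proofs of Theorems~\ref{theorem2}--\ref{theorem3} yield $\bar\lambda^\ast(\delta)\leq M\delta$ for a constant $M$ determined by the problem data, which is incompatible with $\bar\lambda^\ast(L_k)>\lambda^\dag$ along $L_k\to 0$.

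The price you pay is at the upper endpoint: you correctly note that Lemma~\ref{lemma2} (monotonicity) cannot rule out an upward jump of $\bar\lambda^\ast$ at $d_\varrho$, and so you must establish left-continuity of $\bar\lambda^\ast$ there. That lemma is not in the paper and is not routine as sketched --- the quantity in \eqref{eq52} is a supremum of Lagrange multipliers over the non-compact box $\left(0,\delta\right]^n$, so you would first need to show the supremum is attained (e.g.\ at $q_i=\delta$ by a monotonicity argument) before continuity in the parameters can be transferred to $\bar\lambda^\ast$. As written, your proposal is therefore more honest about where the difficulty lies but leaves that one lemma open; the paper's proof is shorter but silently assumes the strict bracketing that your contradiction arguments are designed to justify.
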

\begin{proof}
	From the update rules in Algorithm \ref{algorithm1}, we obtain 
	\begin{equation}\label{eq54}
	b_{k+1} \geq b_k, \quad \quad d_{k+1} \leq d_k,
	\end{equation} 
	and
	\begin{equation}\label{eq51}
    	b_0\leq b_k < L_k < d_k \leq d_0.
	\end{equation}
	Inequalities in \eqref{eq54} and \eqref{eq51} imply that $b_k$ is monotonically increasing and  bounded above by $d_0=d_\varrho$. Similarly, $d_k$ is monotonically decreasing and  bounded below by $b_0=0$. Consequently, $b_k$ and $d_k$ converge when $k \rightarrow \infty$.
											
	Additionally, from the algorithmic steps, we obtain
	$
		%	| b_{k} - d_{k}| \leq (\max \left\{\xi, 1-\xi\right\})^k d_\varrho.
		| b_{k} - d_{k}| = 0.5^k d_\varrho.
	$
	%Since $0<\xi<1$, \eqref{eq49} yields
	Therefore, 
	\begin{equation}\label{eq50}
		\lim_{k \rightarrow \infty} |b_k - d_k| =0.
	\end{equation}
	Considering \eqref{eq54}, \eqref{eq51}, and \eqref{eq50}, we conclude 
	\begin{equation}\label{eq53}
		\lim_{k \rightarrow \infty} b_k=\lim_{k \rightarrow \infty} d_k= 	\lim_{k \rightarrow \infty} L_k =L^\ast,
	\end{equation}
		where $b_k < L^\ast< d_k$ at each iteration, and $L^\ast \in \left(0, d_\varrho \right)$.
		%
		%Besides, one obtains
		%\begin{multline}
		%	|L_k - L^\ast| \leq |(1-\xi)b_k + \xi d_k - L^\ast| \leq |b_k - L^\ast - \xi (b_k - d_k)|  \\ \leq \xi |b_k - d_k| \leq (\max \left\{\xi, 1-\xi\right\})^k L_0.
		%\end{multline}
\end{proof}
										
\medskip

\begin{theorem}\label{theorem4}
Suppose there exists $\delta^\dag>0$ such that $\bar \lambda^\ast(\delta^\dag)=\lambda^\dag$. Then $\delta_{\rm max}$ obtained from Algorithm \ref{algorithm1} satisfies $\bar\lambda^\ast(\delta_{\rm max}) = \lambda^\dag$.
\end{theorem}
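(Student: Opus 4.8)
The plan is to split the analysis according to whether Algorithm~\ref{algorithm1} halts after finitely many iterations or runs indefinitely, since the \textbf{Output} line distinguishes exactly these two cases. If the algorithm halts at some iteration $k$, it can only do so through the \textbf{Else} branch, which is entered precisely when $\lambda_k = \bar\lambda^\ast(L_k) = \lambda^\dag$; the output is then $\delta_{\rm max} = L_k$, so $\bar\lambda^\ast(\delta_{\rm max}) = \lambda^\dag$ holds immediately. All the substance is therefore confined to the non-terminating case, where $\delta_{\rm max} = \lim_{k\to\infty} L_k = L^\ast$ by Theorem~\ref{theorem7}.

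For the non-terminating case, the key step is to establish the invariant $b_k < \delta^\dag < d_k$ for every $k$, using only the monotonicity of $\bar\lambda^\ast$ from Lemma~\ref{lemma2} together with the hypothesis $\bar\lambda^\ast(\delta^\dag) = \lambda^\dag$. First I would record the two monotone contrapositives: $\bar\lambda^\ast(\delta) < \lambda^\dag$ forces $\delta < \delta^\dag$, and $\bar\lambda^\ast(\delta) > \lambda^\dag$ forces $\delta > \delta^\dag$. Then I would argue by induction on $k$. At initialization, $b_0 = 0 < \delta^\dag$ holds since $\delta^\dag > 0$, and the requirement $\bar\lambda^\ast(d_\varrho) > \lambda^\dag$ gives $d_0 = d_\varrho > \delta^\dag$. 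For the inductive step, note that a non-terminating iteration enters exactly one of the first two branches: in the branch $\lambda_k > \lambda^\dag$ we get $L_k > \delta^\dag$, so the update $d_{k+1} = L_k$ keeps $d_{k+1} > \delta^\dag$ while $b_{k+1} = b_k < \delta^\dag$ is inherited; in the branch $\lambda_k < \lambda^\dag$ we get $L_k < \delta^\dag$, so $b_{k+1} = L_k < \delta^\dag$ while $d_{k+1} = d_k > \delta^\dag$ is inherited. This closes the induction.

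Finally, I would invoke Theorem~\ref{theorem7}, which supplies $\lim_k b_k = \lim_k d_k = L^\ast$. Passing to the limit in $b_k < \delta^\dag < d_k$ yields simultaneously $L^\ast \le \delta^\dag$ and $L^\ast \ge \delta^\dag$, hence $L^\ast = \delta^\dag$. Since $\delta_{\rm max} = L^\ast$ in the non-terminating case, we conclude $\bar\lambda^\ast(\delta_{\rm max}) = \bar\lambda^\ast(\delta^\dag) = \lambda^\dag$, which completes the argument.

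The point I would flag as the only delicate issue is that this route deliberately avoids assuming $\bar\lambda^\ast$ is continuous: a bare bisection argument would normally need continuity to carry a root through the limit, but here monotonicity plus the \emph{assumed} existence of $\delta^\dag$ pin down the limit directly through the squeeze $b_k < \delta^\dag < d_k$. A secondary subtlety worth checking is the degenerate case where $\bar\lambda^\ast$ equals $\lambda^\dag$ on a nontrivial interval (several admissible $\delta^\dag$); the invariant would then force two distinct values to coincide with $L^\ast$, which is impossible, so such a flat region cannot co-occur with a non-terminating run — the algorithm must halt through the \textbf{Else} branch and fall under the trivial first case. I would also note that the boundary convention $b_0 = 0$ causes no trouble, since $\delta^\dag > 0$ makes $b_0 < \delta^\dag$ immediate without ever evaluating $\bar\lambda^\ast$ at $0$.
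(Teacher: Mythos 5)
Your proposal is correct and follows essentially the same route as the paper's proof: dispose of the terminating case via the \textbf{Else} branch, then in the non-terminating case combine the monotonicity of $\bar\lambda^\ast$ (Lemma~\ref{lemma2}) with the convergence $\lim_k b_k=\lim_k d_k=L^\ast$ (Theorem~\ref{theorem7}) to force $L^\ast=\delta^\dag$. The only difference is presentational — you run a direct induction on the bracketing invariant $b_k<\delta^\dag<d_k$ and squeeze, whereas the paper argues by contradiction on $L^\ast\lessgtr\delta^\dag$ — and your side remarks (no continuity needed, behavior on a flat level set) are accurate.
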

\begin{proof}
	If the algorithm stops  when $\lambda_k = \lambda^\dag$, we end with $\delta_{\rm max} = L_k$ and  $\bar \lambda^\ast (\delta_{\rm max}) = \lambda^\dag$. Otherwise, we obtain $\delta_{\rm max} =  \lim_{k \rightarrow \infty}L_k=L^\ast$.
	By contradiction, suppose $\bar \lambda^\ast(L^\ast) \neq \lambda^\dag$ leading to $L^\ast  \neq \delta^\dag$. First, consider $L^\ast < \delta^\dag$. According to  \eqref{eq53}, we obtain
	$$\exists  \, l>0 \quad {\rm s.t.} \quad d_k < \delta^\dag \quad {\rm for} \quad \forall \, k \geq l.$$
	Following Lemma \ref{lemma2}, we obtain $\bar \lambda^\ast(d_k) \leq \bar \lambda^\ast(\delta^\dag) = \lambda^\dag$, which contradicts the assumptions in Algorithm \ref{algorithm1}.
	If $L^\ast > \delta^\dag$, a similar analysis can be made for $b_k$ which leads to a contradiction. Consequently, it follows that $\bar \lambda^\ast(L^\ast) = \bar \lambda^\ast(\delta_{\rm max}) = \lambda^\dag$. 
\end{proof}
%%%%%%%%%%%%%%%%%%%%%%%%%%%%%%%%%%%%%%%%%%%%%%%%%%%%%
\section{Extending the Horizon to Infinity}\label{sec:infinite}
In this section, we investigate the properties of a competitive equilibrium, especially the resource price, over an infinite horizon. 

Similar to the finite horizon case (see sections \ref{section:MAS} and \ref{sec:competitive}), define  $\mathcal{T}$, $\boldsymbol  \lambda $,  $\mathbf U_i$, $\mathbf E_i$, $\mathbf U$, and $\mathbf E$ such that $N \rightarrow \infty$. Recall that $\mathbf u(t)= (\mathbf u_1^\top(t), \dots, \mathbf u_n^\top(t) )^\top$, $\mathbf e(t)= ( e_1(t), \dots,  e_n(t) )^\top$, and $\mathbf x(t)= (\mathbf x_1^\top(t), \dots, \mathbf x_n^\top(t) )^\top$.
%The MAS operates over an infinite horizon, where time intervals are indexed by the set $\mathcal{T}=\{ 0, 1, \dots , \infty \}$. We denote by $\boldsymbol  \lambda = (\lambda_0, \dots, \lambda_{\infty})^\top$ the  resource price vector over the whole infinite horizon. For each agent $i \in \mathcal{V}$, denote $\mathbf U_i=(\mathbf u_i^\top(0), \dots, \mathbf u_i^\top(\infty))^\top$ and $\mathbf E_i=( e_i(0), \dots,  e_i(\infty))^\top$ as the control input  and the trading decision vectors over the entire infinite  horizon, respectively. Similar to the finite horizon case, at each time interval $t \in \mathcal{T}$, define $\mathbf u(t)= (\mathbf u_1^\top(t), \dots, \mathbf u_n^\top(t) )^\top$, $\mathbf e(t)= ( e_1(t), \dots,  e_n(t) )^\top$, and $\mathbf x(t)= (\mathbf x_1^\top(t), \dots, \mathbf x_n^\top(t) )^\top$ as the overall control input,  the overall trading decision, and  the overall state vectors corresponding to all agents, respectively. Finally, $\mathbf U=(\mathbf u^\top(0), \dots, \mathbf u^\top(\infty))^\top$ and $\mathbf E=(\mathbf e^\top(0), \dots, \mathbf e^\top(\infty))^\top$ denote the total control input and the total trading decision vectors associated with all agents over the entire infinite horizon, respectively. 
Consistent with the concept of competitive equilibrium in microeconomics \cite{mas1995microeconomic}, the infinite-horizon competitive equilibrium  can be defined as follows.

\medskip
\begin{definition}\label{definition2}
	Given a feasible initial condition $\mathbf x(0) \in \mathcal{X}_0$, the triplet $(\boldsymbol{\lambda}^\ast, \mathbf{U}^\ast, \mathbf{E}^\ast)$ is called an \emph{infinite-horizon competitive equilibrium}  if it meets the following conditions.
	\begin{itemize}
		\item[(i)] Under the equilibrium, each agent maximizes its finite payoff; i.e.,  $(\mathbf U_i^\ast, \mathbf E_i^\ast)$
		is an optimizer  to 
		\begin{equation}\label{opt_DLTD_5_3}
			\begin{aligned}
				\max_{\mathbf U_i,  \mathbf E_i} \quad &   \sum_{t=0}^{\infty} \Big[f_i(\mathbf x_i(t), \mathbf u_i(t))+ \lambda_t^\ast e_i(t) \Big] \\
				{\rm s.t.} \quad &\mathbf x_i(t+1) = \mathbf A_i \mathbf x_i(t)+ \mathbf B_i \mathbf u_i(t),\\
				& e_i(t)  \leq a_i(t) - h_i(\mathbf u_i(t)), \quad t \in \mathcal{T}.
			\end{aligned}
		\end{equation}
		\item[(ii)] Under the equilibrium, the traded resource is balanced at each time interval; i.e.,
		\begin{equation}\label{eq_balancing_2}
			\sum_{i=1}^n e_i^\ast(t) =0, \quad t \in \mathcal{T}.
      	\end{equation}
	\end{itemize}
\end{definition}

Similarly, social welfare can be maximized by finding an operating point $( \mathbf{U}^\ast, \mathbf{E}^\ast)$ which solves the following optimization problem 
\begin{equation}\label{opt_social_DLD_infinite_3}
	\begin{aligned}
		\max_{\mathbf{U}, \mathbf{E}} \quad &  \sum_{i=1}^{n} \sum_{t=0}^{\infty} f_i(\mathbf x_i(t), \mathbf u_i(t))\\
		{\rm s.t.} \quad & \mathbf x_i(t+1) = \mathbf A_i \mathbf x_i(t)+ \mathbf B_i \mathbf u_i(t), \\
		& e_i(t)  \leq a_i(t) - h_i(\mathbf u_i(t)), \\
		&\sum_{i=1}^n e_i(t) =0, 
		\quad   t \in \mathcal{T}, \,\,\, i \in \mathcal V.
	\end{aligned}
\end{equation}
%Define $\mathcal{X}_0$ as the set of all feasible initial conditions under which \eqref{opt_social_DLD_infinite_3} is feasible.

In our previous work \cite{chen2021social}, we discussed the equivalence between a competitive equilibrium and a social welfare maximization solution over a finite horizon. Using duality theory, we proved that under some convexity assumptions, Slater's condition holds, leading to a zero duality gap between the two problems. For an infinite horizon case, however, Slater's condition is not well-established, giving rise to the complexity of the analysis. In  this section, we aim to examine the relation between a competitive equilibrium and a social welfare maximization solution over an infinite horizon, using the asymptotic behavior of the pricing sequence.
\subsection{MAS with General Cost Functions}\label{sec:existence}
%In this section, we provide some results on the properties of competitive equilibrium and its relationship with  social welfare maximization.

%\medskip

\begin{theorem}\label{theorem6}
	Let $(\boldsymbol{\lambda}^\ast, \mathbf{U}^\ast, \mathbf{E}^\ast)$  be an infinite-horizon competitive equilibrium for a feasible initial condition $\mathbf x(0) \in \mathcal{X}_0$. Then $(\mathbf{U}^\ast, \mathbf{E}^\ast)$ maximizes social welfare over an infinite horizon. 
\end{theorem}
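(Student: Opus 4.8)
The plan is to establish the \emph{first welfare theorem} in the infinite-horizon setting: any competitive equilibrium is socially optimal because the price-weighted trades cancel across agents once the balancing condition is imposed. First I would observe that the equilibrium $(\mathbf{U}^\ast, \mathbf{E}^\ast)$ is itself feasible for the social welfare problem \eqref{opt_social_DLD_infinite_3}, since it satisfies the dynamics, the per-agent resource constraints $e_i^\ast(t) \leq a_i(t) - h_i(\mathbf u_i^\ast(t))$, and the balancing condition \eqref{eq_balancing_2}. Hence it suffices to show that for every $(\mathbf{U}, \mathbf{E})$ feasible for \eqref{opt_social_DLD_infinite_3}, the social welfare at $(\mathbf{U}^\ast, \mathbf{E}^\ast)$ is no smaller.

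The core step exploits that any socially feasible $(\mathbf{U}, \mathbf{E})$ yields, for each $i$, a pair $(\mathbf{U}_i, \mathbf{E}_i)$ that is feasible for agent $i$'s individual problem \eqref{opt_DLTD_5_3}, because the individual problem imposes no balancing constraint. Therefore the optimality of $(\mathbf{U}_i^\ast, \mathbf{E}_i^\ast)$ for \eqref{opt_DLTD_5_3} gives, for each $i$,
\begin{equation*}
\sum_{t=0}^{\infty} \big[ f_i(\mathbf x_i^\ast(t), \mathbf u_i^\ast(t)) + \lambda_t^\ast e_i^\ast(t) \big] \geq \sum_{t=0}^{\infty} \big[ f_i(\mathbf x_i(t), \mathbf u_i(t)) + \lambda_t^\ast e_i(t) \big].
\end{equation*}
Summing these $n$ inequalities and interchanging the (finite) sum over $i$ with the sum over $t$, the transfer terms collapse on both sides: using $\sum_{i=1}^n e_i^\ast(t) = 0$ and $\sum_{i=1}^n e_i(t) = 0$ at every $t$, both $\sum_{t} \lambda_t^\ast \sum_i e_i^\ast(t)$ and $\sum_{t} \lambda_t^\ast \sum_i e_i(t)$ vanish. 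What remains is exactly $\sum_i \sum_t f_i(\mathbf x_i^\ast(t), \mathbf u_i^\ast(t)) \geq \sum_i \sum_t f_i(\mathbf x_i(t), \mathbf u_i(t))$, so $(\mathbf{U}^\ast, \mathbf{E}^\ast)$ attains social welfare at least as large as the arbitrary feasible candidate, which is the claim.

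The main obstacle is purely the infinite-horizon bookkeeping: interchanging $\sum_{i=1}^n$ with $\sum_{t=0}^\infty$ and splitting each agent's payoff into a welfare part and a transfer part are legitimate only when the relevant series converge. I would handle the transfer cancellation robustly at the level of partial sums, writing $\sum_{i=1}^n \sum_{t=0}^{T-1} \lambda_t^\ast e_i^\ast(t) = \sum_{t=0}^{T-1} \lambda_t^\ast \sum_{i=1}^n e_i^\ast(t) = 0$ for every finite $T$ (the rearrangement of a finite double sum is trivial) and only then pass to the limit, which forces the infinite transfer sum to be zero whenever it exists; this sidesteps any concern about rearranging a conditionally convergent array. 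To guarantee existence, and to justify separating $\sum_t f_i$ from $\sum_t \lambda_t^\ast e_i$ within each finite equilibrium payoff, I would invoke the feasibility assumptions encoded in $\mathbf x(0) \in \mathcal{X}_0$: under these the equilibrium payoffs are finite (Definition \ref{definition2}) and the social welfare series are well-defined, so every series involved converges and the decomposition is valid. In the degenerate case where a candidate has social welfare $-\infty$, the desired inequality holds vacuously.
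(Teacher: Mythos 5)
Your proposal is correct and is essentially the paper's own argument: the paper merges the $n$ individual problems \eqref{opt_DLTD_5_3} into the single separable problem \eqref{opt_DLTD_5_5} and notes that on the balanced set its objective coincides with the social welfare objective, which is exactly your ``sum the individual optimality inequalities and let the transfer terms cancel under $\sum_i e_i(t)=0$'' step. Your partial-sum treatment of the infinite-horizon interchange is, if anything, more careful than the paper's, which simply asserts finiteness of the common optimal value.
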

\begin{proof} 
	As the competitive equilibrium exists for $\mathbf x(0) \in \mathcal{X}_0$, the corresponding optimization problem in \eqref{opt_DLTD_5_3} is feasible. Additionally, the optimal price $\boldsymbol \lambda^\ast$ and the optimal solution $(\mathbf{U}^\ast, \mathbf{E}^\ast)$ are well-defined over the infinite horizon. So, we merge the optimization problems in \eqref{opt_DLTD_5_3} associated with all agents $i \in \mathcal{V}$ to obtain 
	\begin{equation}\label{opt_DLTD_5_5}
		\begin{aligned}
			\max_{\mathbf U,  \mathbf E} \quad &   \sum_{i=1}^{n}\sum_{t=0}^{\infty} f_i(\mathbf x_i(t), \mathbf u_i(t))+ \sum_{t=0}^{\infty} \lambda_t^\ast \sum_{i=1}^{n} e_i(t)  \\
			{\rm s.t.} \quad &\mathbf x_i(t+1) = \mathbf A_i \mathbf x_i(t)+ \mathbf B_i \mathbf u_i(t),\\
			& e_i(t)  \leq a_i(t) - h_i(\mathbf u_i(t)), \quad t \in \mathcal{T},  \,\,\, i \in \mathcal V,
		\end{aligned}
	\end{equation}
	such that $\sum_{i=1}^{n}e_i^\ast(t)=0$. Clearly, the solution $(\mathbf{U}^\ast, \mathbf{E}^\ast)$ which solves \eqref{opt_DLTD_5_5} is also a solution to \eqref{opt_social_DLD_infinite_3}. Additionally, both \eqref{opt_social_DLD_infinite_3} and \eqref{opt_DLTD_5_5} have the same optimal value which is finite. Hence,  social welfare maximization \eqref{opt_social_DLD_infinite_3}  is feasible for $\mathbf x(0) \in \mathcal{X}_0$ with the optimal solution $(\mathbf{U}^\ast, \mathbf{E}^\ast)$. 
\end{proof}

\medskip

\begin{assumption}\label{assumption4}
	(i) For  $i \in \mathcal{V}$, suppose  $f_i(\cdot)$ is a negative definite (ND) concave function;  (ii)  $h_i(\cdot)$ is a non-negative convex  function; (iii) both  $f_i(\cdot)$  and $h_i(\cdot)$ pass through the origin; (iv) there exists a lower bound $C>0$ on the  excess network supply such that $C(t) \geq C$ for $t \in \mathcal{T}$.
\end{assumption}

\medskip	

\begin{theorem}\label{theorem9}
	Let Assumption \ref{assumption4} hold. Suppose $(\mathbf A_i, \mathbf B_i)$ is controllable for $i \in \mathcal{V}$. Let $\mathcal{X}_0$ be the set of all feasible initial conditions for \eqref{opt_DLTD_5_3} and \eqref{opt_social_DLD_infinite_3}. Then there exists ${\mathds{X}}_{0} \subseteq \mathcal{X}_0$ such that
	\begin{itemize}
		\item[(i)] ${\mathds{X}}_{0}$ has  a nonempty interior containing the origin;
		\item[(ii)] for any $\mathbf x(0) \in \mathds{X}_0$, any social welfare maximization solution $(\mathbf{U}^\ast, \mathbf{E}^\ast)$ is part of a competitive equilibrium with $\boldsymbol \lambda^\ast =0$ over the infinite horizon.
	\end{itemize}
\end{theorem}
In the proof of this theorem, we use the  following lemma.

\medskip

\begin{lemma}\label{lemma5}
	Let the social welfare maximization problem be feasible for $\mathbf x(0) \in \mathcal{X}_0$. Then  \eqref{opt_social_DLD_infinite_3} can be solved through the following constrained optimal control problem
	\begin{equation}\label{opt_social_DLD_infinite_4}
		\begin{aligned}
			\max_{\mathbf{U}} \quad & \sum_{t=0}^{\infty}  \sum_{i=1}^{n}  f_i(\mathbf x_i(t), \mathbf u_i(t))\\
			{\rm s.t.} \quad & \mathbf x_i(t+1) = \mathbf A_i \mathbf x_i(t)+ \mathbf B_i \mathbf u_i(t), \\
			& \sum_{i=1}^{n} h_i(\mathbf u_i(t)) \leq C(t),
			\quad   t \in \mathcal{T}, \,\,\, i \in \mathcal V. 
		\end{aligned}
	\end{equation}
\end{lemma}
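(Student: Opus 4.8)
The plan is to eliminate the trading variables $\mathbf E$ and show that \eqref{opt_social_DLD_infinite_3} and \eqref{opt_social_DLD_infinite_4} have the same feasible control trajectories and the same optimal value. The crucial observation is that the objective $\sum_{i=1}^n \sum_{t=0}^\infty f_i(\mathbf x_i(t), \mathbf u_i(t))$ depends only on $\mathbf U$ (through the dynamics) and not on $\mathbf E$; the trading decisions enter solely through the constraints. Hence the role of $\mathbf E$ is purely to certify feasibility of a given $\mathbf U$, and the problem reduces to characterizing which $\mathbf U$ admit a balanced, supply-respecting trade.

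First I would fix an arbitrary control sequence $\mathbf U$, which determines all states via the dynamics, and ask: when does there exist $\mathbf E$ satisfying $e_i(t) \le a_i(t) - h_i(\mathbf u_i(t))$ for all $i$ together with $\sum_{i=1}^n e_i(t) = 0$ for all $t$? Since each $e_i(t)$ is free to be arbitrarily negative, the attainable values of $\sum_{i=1}^n e_i(t)$ form the half-line $\big(-\infty, \sum_{i=1}^n (a_i(t) - h_i(\mathbf u_i(t)))\big]$. Therefore the balancing equality $\sum_{i=1}^n e_i(t)=0$ is achievable at time $t$ if and only if $\sum_{i=1}^n (a_i(t) - h_i(\mathbf u_i(t))) \ge 0$, i.e. $\sum_{i=1}^n h_i(\mathbf u_i(t)) \le C(t)$, where I use $C(t)=\sum_{i=1}^n a_i(t)$. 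This identifies the feasible-control set of \eqref{opt_social_DLD_infinite_3} with that of \eqref{opt_social_DLD_infinite_4}.

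Next I would make the reduction explicit in both directions. If $(\mathbf U^\ast, \mathbf E^\ast)$ solves \eqref{opt_social_DLD_infinite_3}, then summing $e_i^\ast(t) \le a_i(t) - h_i(\mathbf u_i^\ast(t))$ over $i$ and using $\sum_i e_i^\ast(t)=0$ gives $\sum_i h_i(\mathbf u_i^\ast(t)) \le C(t)$, so $\mathbf U^\ast$ is feasible for \eqref{opt_social_DLD_infinite_4} and attains the same objective. Conversely, if $\mathbf U^\ast$ solves \eqref{opt_social_DLD_infinite_4}, I would reconstruct a balanced trade by setting $b_i(t) := a_i(t) - h_i(\mathbf u_i^\ast(t))$ and $e_i^\ast(t) := b_i(t) - \tfrac{1}{n}\sum_{j=1}^n b_j(t)$; this satisfies $\sum_i e_i^\ast(t) = 0$ by construction, and $e_i^\ast(t) \le b_i(t)$ because $\sum_j b_j(t) = C(t) - \sum_j h_j(\mathbf u_j^\ast(t)) \ge 0$, so $(\mathbf U^\ast, \mathbf E^\ast)$ is feasible for \eqref{opt_social_DLD_infinite_3} with the same objective value. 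Since the objectives coincide and are independent of $\mathbf E$, the two programs share their optimal value and their optimal controls, so \eqref{opt_social_DLD_infinite_3} is indeed solved by \eqref{opt_social_DLD_infinite_4}.

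There is no deep obstacle here; the argument is a clean projection that eliminates $\mathbf E$. The only point requiring care is the one-sided nature of the trading constraint: because $e_i(t)$ is unbounded below, the $n$ per-agent inequalities collapse to the single aggregate inequality $\sum_i h_i(\mathbf u_i(t)) \le C(t)$, and it is this asymmetry that drives the equivalence. Feasibility of the social welfare problem, assumed in the statement, guarantees that the common optimal value is finite over the infinite horizon, so no convergence issues arise in passing between the two formulations.
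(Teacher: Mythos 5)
Your proposal is correct and follows essentially the same route as the paper: the reconstruction $e_i^\ast(t) = a_i(t) - h_i(\mathbf u_i^\ast(t)) - \tfrac{1}{n}\sum_{j}\bigl(a_j(t) - h_j(\mathbf u_j^\ast(t))\bigr)$ is exactly the paper's construction in \eqref{eq_e_3}. Your write-up is somewhat more explicit about the forward direction (summing the per-agent inequalities and using the balance constraint), which the paper compresses into a single sentence, but the argument is the same.
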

\begin{proof}
	%(i) Suppose $(\mathbf U^\ast, \mathbf E^\ast)$ is a solution to social welfare maximization \eqref{opt_social_DLD_infinite_3}. We can merge  the inequality constraint $e_i(t) \leq a_i(t)-h_i(\mathbf u_i(t))$  with the balancing equality $\sum_{i=1}^{n}e_i(t)=0$ in \eqref{opt_social_DLD_infinite_3}, which yields  $ \sum_{i=1}^{n} h_i(\mathbf u_i(t)) \leq C(t)$. Consequently, $\mathbf U^\ast$ is also a solution to \eqref{opt_social_DLD_infinite_4}.\\
	Feasibility of  \eqref{opt_social_DLD_infinite_3} leads to the feasibility of \eqref{opt_social_DLD_infinite_4}. Suppose $\mathbf U^\ast$ is a solution to   \eqref{opt_social_DLD_infinite_4}. Decomposing $\mathbf U^\ast$ to $\mathbf u_i^\ast(t)$ for $t \in \mathcal{T}$ and $i \in \mathcal{V}$, we define 
	\begin{equation}\label{eq_e_3}
		e_i^\ast(t) := a_i(t) - h_i(\mathbf u_i^{\ast}(t))  + \frac{1}{n} \left(\sum_{i=1}^{n}h_i(\mathbf u_i^{\ast}(t))  - \sum_{i=1}^{n}a_i(t)\right), 
	\end{equation}
	which satisfies
	\begin{equation}\label{eq_e2_3}
		\begin{aligned}
			&e_i^\ast(t) \leq a_i(t) - h_i(\mathbf u_i^{\ast}(t)),\\
			& \sum_{i=1}^{n}  e_i^\ast(t)=0.
		\end{aligned}
	\end{equation}
	Constructing $ \mathbf E^\ast$ based on $e_i^\ast(t)$ defined in \eqref{eq_e_3}, we conclude  $(\mathbf U^\ast, \mathbf E^\ast)$ solves  \eqref{opt_social_DLD_infinite_3}.
\end{proof}

%\medskip
%\begin{proof}[Proof of Theorem 7]
\noindent\textit{Proof of Theorem 7.}
According to  Lemma \ref{lemma5}, we examine  \eqref{opt_social_DLD_infinite_4}. Without loss of generality, replace the time-varying inequality constraint $\sum_{i=1}^{n} h_i(\mathbf u_i(t)) \leq C(t)$ in \eqref{opt_social_DLD_infinite_4} with the time-invariant constraint $\sum_{i=1}^{n} h_i(\mathbf u_i(t)) <C$,  which represents a constraint set $\Omega \subset \mathbb{R}^{n m}$ on  control inputs, containing the origin in its interior. 
For $i \in \mathcal{V}$, consider all controllers restrained by $\| \mathbf u_i(t)\|_{\infty} \leq \epsilon$ for sufficiently small $\epsilon>0$ such that the controllers  lie in the interior of the constraint set $\Omega$. According to \cite{hu2002null}, the set of all feasible initial conditions that can be steered to the origin in a finite time  when  control inputs satisfy   saturation constraints $\| \mathbf u_i(t)\|_{\infty} \leq \epsilon$ has a non-empty interior containing the origin; to be consistent with the literature, we denote this set of initial conditions as the null controllable region $\mathcal{C}$. Since $f_i(\cdot)$ is a negative definite function passing through the origin, the problem in \eqref{opt_social_DLD_infinite_4} is feasible for initial conditions $\mathbf x(0) \in \mathcal{C}$. Now select a neighborhood of the origin  $\mathds{X}_0 \subseteq \mathcal{C}$ such that all $\mathbf x(0) \in \mathds{X}_0$ satisfy $\| \mathbf x(0)\| \leq \bar\epsilon$. If $\bar \epsilon$ is selected sufficiently small then the optimal control $\mathbf u_i^\ast(t)$ which solves \eqref{opt_social_DLD_infinite_4} lies in the set $\Omega$, i.e., $\sum_{i=1}^{n} h_i(\mathbf u_i^\ast(t)) < C$ for $t \in \mathcal{T}$.  This is because $h_i(\cdot)$ is  convex  in an open neighborhood of the origin, and therefore, continuous \cite{peajcariaac1992convex}. Hence, the optimal control input for \eqref{opt_social_DLD_infinite_4} would be the solution to the following unconstrained optimal control problem 
\begin{equation}\label{opt_social_DLD_infinite_5}
	\begin{aligned}
		\max_{\mathbf{U}} \quad & \sum_{t=0}^{\infty}  \sum_{i=1}^{n}  f_i(\mathbf x_i(t), \mathbf u_i(t))\\
		{\rm s.t.} \quad & \mathbf x_i(t+1) = \mathbf A_i \mathbf x_i(t)+ \mathbf B_i \mathbf u_i(t),
		\,\,\,   t \in \mathcal{T}, \,\, i \in \mathcal V. 
	\end{aligned}
\end{equation}
For more insights into the sets, see Fig. \ref{fig_sets}.
\begin{figure}[!t]
	\centering
	\includegraphics[width=1.75 in]{fig_sets}
	%\centerline{\psfig{figure=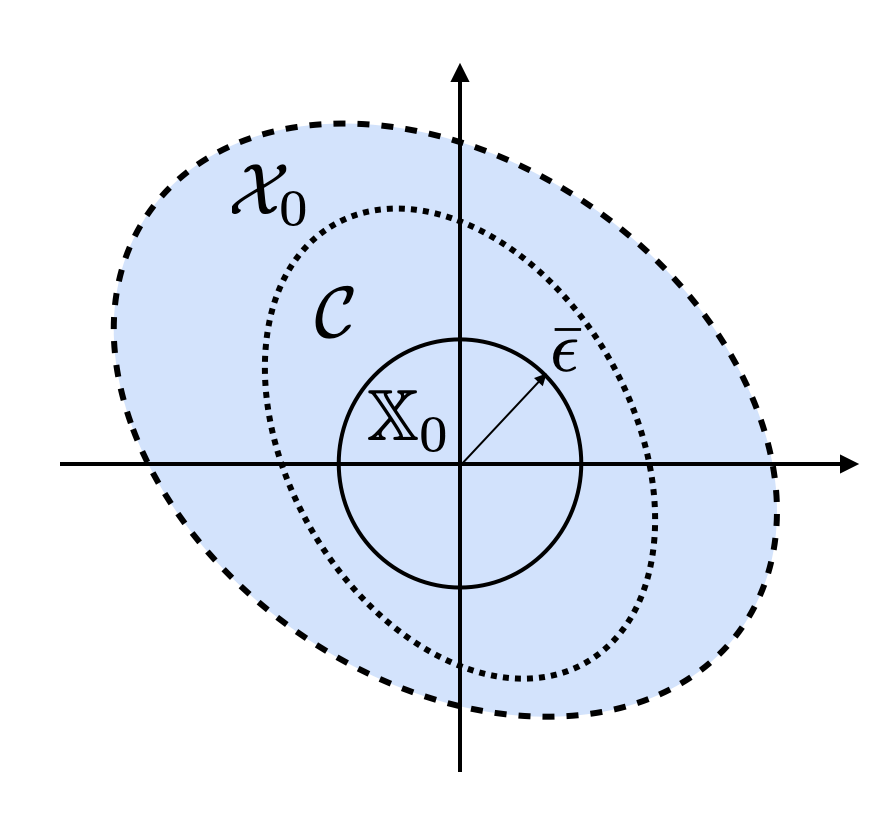,width=1.75in}}
	\caption{The three feasible sets in the proof of Theorem \ref{theorem9}.}
	\label{fig_sets}
\end{figure}
The problem in \eqref{opt_social_DLD_infinite_5} is separable, so $\mathbf U^\ast$ is also a solution to 
\begin{equation}\label{opt_social_DLD_infinite_6}
	\begin{aligned}
		\max_{\mathbf{U}_i} \quad & \sum_{t=0}^{\infty}   f_i(\mathbf x_i(t), \mathbf u_i(t))\\
		{\rm s.t.} \quad & \mathbf x_i(t+1) = \mathbf A_i \mathbf x_i(t)+ \mathbf B_i \mathbf u_i(t),
		\quad   t \in \mathcal{T},
	\end{aligned}
\end{equation}
for $i \in \mathcal{V}$. Define $e_i^\ast(t)$ as in \eqref{eq_e_3} which leads to \eqref{eq_e2_3}. 
Denoting $\mathbf E_i^\ast = (e_i^\ast(0), \cdots, e_i^\ast(\infty))^\top$ for $i \in \mathcal{V}$, we conclude $(\mathbf U_i^\ast, \mathbf E_i^\ast)$ is an optimizer for the optimization problem in \eqref{opt_DLTD_5_3} under $\lambda^\ast_t=0$ such that \eqref{eq_balancing_2} is satisfied. Consequently, $(\mathbf U^\ast, \mathbf E^\ast)$, along with $\lambda^\ast_t=0$ for $t \in \mathcal{T}$, constitutes a competitive equilibrium.
%\end{proof}
{\hfill$\square$}

\medskip

\begin{theorem}\label{theorem8}
	Let Assumption \ref{assumption4}  hold. Suppose $\mathbf x(0) \in \mathcal{X}_0$. Assume $(\boldsymbol \lambda^\ast, \mathbf{U}^\ast, \mathbf{E}^\ast)$ is a competitive equilibrium associated with $\mathbf x(0)$. Then there exists a time step $\bar N \left(\mathbf x(0)\right)$ such that $\lambda^\ast_t =0$ for $t > \bar N \left(\mathbf x(0)\right)$. 
\end{theorem}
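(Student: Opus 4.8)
The plan is to combine the social-welfare optimality of the equilibrium (Theorem \ref{theorem6}) with an asymptotic-decay argument driven by finiteness of the optimal welfare, and then to invoke complementary slackness to kill the price once the resource coupling constraint becomes inactive. First I would invoke Theorem \ref{theorem6} to conclude that the equilibrium allocation $(\mathbf U^\ast, \mathbf E^\ast)$ maximizes the social welfare problem \eqref{opt_social_DLD_infinite_3} and that its optimal value $\sum_{i=1}^n \sum_{t=0}^{\infty} f_i(\mathbf x_i^\ast(t), \mathbf u_i^\ast(t))$ is finite. Since Assumption \ref{assumption4}(i) makes each $f_i$ negative definite, every summand is nonpositive, so convergence of the series forces the general term to vanish, i.e. $f_i(\mathbf x_i^\ast(t), \mathbf u_i^\ast(t)) \to 0$ as $t \to \infty$ for each $i \in \mathcal V$.

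Next I would upgrade this to $\mathbf u_i^\ast(t) \to 0$. Because $f_i$ is concave and negative definite with $f_i(0,0)=0$ (Assumption \ref{assumption4}(i),(iii)), the function $-f_i$ is convex, nonnegative, continuous, and strictly positive off the origin; a short convexity argument then shows that its sublevel sets shrink to $\{0\}$, so $f_i(\mathbf x_i^\ast(t), \mathbf u_i^\ast(t)) \to 0$ implies $(\mathbf x_i^\ast(t), \mathbf u_i^\ast(t)) \to 0$, and in particular $\mathbf u_i^\ast(t) \to 0$. Invoking continuity of $h_i$ together with $h_i(0)=0$ (Assumption \ref{assumption4}(ii),(iii)) yields $h_i(\mathbf u_i^\ast(t)) \to 0$, hence $\sum_{i=1}^n h_i(\mathbf u_i^\ast(t)) \to 0$.

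With the uniform lower bound $C(t) \geq C > 0$ from Assumption \ref{assumption4}(iv), there is a finite $\bar N(\mathbf x(0))$ beyond which $\sum_{i=1}^n h_i(\mathbf u_i^\ast(t)) < C \leq C(t)$, so the coupling constraint $\sum_{i=1}^n h_i(\mathbf u_i(t)) \leq C(t)$ is strictly slack. At such a $t$ I would run the slack-variable argument of Proposition \ref{prop2} at that single time step: if $\lambda_t^\ast > 0$, the individual payoff in \eqref{opt_DLTD_5_3} is strictly decreasing in each agent's slack variable, forcing every inequality $e_i(t) \leq a_i(t) - h_i(\mathbf u_i(t))$ to be active, which upon summation with the balancing condition \eqref{eq_balancing_2} gives $\sum_{i=1}^n h_i(\mathbf u_i^\ast(t)) = C(t)$, contradicting strict slackness. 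Combined with $\lambda_t^\ast \geq 0$ (the infinite-horizon analogue of Lemma \ref{lemma1}), this yields $\lambda_t^\ast = 0$ for all $t > \bar N(\mathbf x(0))$.

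The main obstacle I anticipate is the implication $f_i(\mathbf x_i^\ast(t), \mathbf u_i^\ast(t)) \to 0 \Rightarrow (\mathbf x_i^\ast(t), \mathbf u_i^\ast(t)) \to 0$: mere pointwise negative definiteness does not suffice, and one must exploit concavity to rule out a trajectory escaping along a direction where $-f_i$ stays bounded, which is exactly what the convex-sublevel-set shrinking argument handles. A secondary care point is that Proposition \ref{prop2} was stated for the finite horizon; I would confirm that its reasoning transfers verbatim to the infinite-horizon payoff \eqref{opt_DLTD_5_3}, which it does, since the argument is local in $t$ and uses only $\lambda_t^\ast > 0$ and the structure of the per-step constraint.
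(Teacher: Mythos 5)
Your proposal is correct, and its first half --- Theorem \ref{theorem6} gives social-welfare optimality, finiteness of the welfare sum forces $f_i(\mathbf x_i^\ast(t),\mathbf u_i^\ast(t))\to 0$, negative definiteness and continuity then give $\mathbf u_i^\ast(t)\to 0$ and hence $\sum_{i=1}^n h_i(\mathbf u_i^\ast(t))<C\leq C(t)$ beyond some finite $\bar N(\mathbf x(0))$ --- is exactly the paper's argument; your convex-sublevel-set justification of the step $f_i\to 0\Rightarrow(\mathbf x_i^\ast,\mathbf u_i^\ast)\to 0$ is in fact more careful than the paper, which simply asserts it. Where you genuinely diverge is the conclusion. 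The paper invokes the principle of optimality: the tail $\mathbf U^\ast_{\bar N+1}$ solves the unconstrained, separable problem \eqref{opt_social_DLD_infinite_6} from the state $\mathbf x(\bar N+1)$, and by the construction in the proof of Theorem \ref{theorem9} this tail is part of a competitive equilibrium with zero price. You instead run the slack-variable argument of Proposition \ref{prop2} pointwise in $t$ on the given equilibrium: $\lambda_t^\ast>0$ would force every constraint $e_i(t)\leq a_i(t)-h_i(\mathbf u_i(t))$ to be active, which with \eqref{eq_balancing_2} yields $\sum_i h_i(\mathbf u_i^\ast(t))=C(t)$, contradicting strict slackness, while $\lambda_t^\ast<0$ is excluded by the (infinite-horizon analogue of the) nonnegativity argument. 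Your route buys something real: the theorem asserts that the price of the \emph{given} equilibrium vanishes, and your contradiction argument establishes exactly that, whereas the paper's route literally exhibits \emph{a} zero-price equilibrium supporting the tail allocation and leaves implicit why the originally given $\lambda_t^\ast$ must coincide with it. The one point to make explicit is that the per-step complementary-slackness step needs the agent's optimal payoff in \eqref{opt_DLTD_5_3} to be finite (so that a strictly improving perturbation of $e_i(t)$ is genuinely a contradiction), which is guaranteed by Definition \ref{definition2}; with that noted, your proof is complete.
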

\begin{proof}
	As the competitive equilibrium exists for $\mathbf x(0) \in \mathcal{X}_0$, the optimal price $\boldsymbol \lambda^\ast$ and the optimal solution $(\mathbf{U}^\ast, \mathbf{E}^\ast)$ are well-defined over the infinite horizon. According to Theorem \ref{theorem6},  $(\mathbf{U}^\ast, \mathbf{E}^\ast)$ maximizes  social welfare. Considering Lemma \ref{lemma5}, we  examine \eqref{opt_social_DLD_infinite_4}. Since the problem is feasible,  the optimal value is finite. Considering that $f_i(\mathbf x_i(t), \mathbf u_i(t))$ is a negative definite function, we obtain $\lim_{t \rightarrow \infty} f_i(\mathbf x_i(t), \mathbf u_i(t)) =0$. Concavity of $f_i(\cdot)$  in an open neighborhood of the origin implies its continuity. As $f_i(\cdot)$ is a negative definite concave function passing through the origin, we obtain $\lim_{t \rightarrow \infty} \mathbf x_i(t) =0$ and $\lim_{t \rightarrow \infty} \mathbf u_i^\ast(t) =0$ for $i \in \mathcal{V}$. Similarly, $h_i(\cdot)$  is continuous in an open  neighborhood of the origin.   Therefore,  there exists a finite time $\bar N \left(\mathbf x(0)\right)$ such that $\sum_{i=1}^{n} h_i(\mathbf u_i^\ast(t)) < C$ for $t > \bar N\left(\mathbf x(0)\right)$. In the remainder of this proof, by $\bar N$ we mean $\bar N\left(\mathbf x(0)\right)$. Decompose $\mathbf U^\ast$  to $\mathbf u^\ast(t)$ for $t \in \mathcal{T}$. Applying the first $\bar N+1$ optimal control inputs $\mathbf u^\ast(0), \dots, \mathbf u^\ast(\bar N)$ to the system dynamic, we reach $\mathbf x(\bar N+1)$. Denote $\mathbf U^\ast_{\bar N+1} = (\mathbf u^{\ast\top}(\bar N+1), \dots, \mathbf u^{\ast\top}(\infty))^\top$. Based on the principle of optimality, $\mathbf{U}^\ast_{\bar N+1}$  solves  the unconstrained optimal control problem in \eqref{opt_social_DLD_infinite_5} starting from $t=\bar N+1$ with the initial condition $\mathbf x(\bar N+1)$. Also, $\mathbf{U}^\ast_{\bar N+1}$  solves \eqref{opt_social_DLD_infinite_6} starting from $t=\bar N+1$ with the initial condition $\mathbf x(\bar N+1)$ for $i \in \mathcal{V}$. According to the proof of Theorem \ref{theorem9}, the solution to \eqref{opt_social_DLD_infinite_6} constitutes a competitive equilibrium with zero price. Therefore,   $\lambda_t^\ast=0$, for $t > \bar N$.
\end{proof}
%%%%%%%%%%%%%%%%%%%%%%%%%%%%%%%%%%%%%%%%%%%%%%%%%%%%%%%%%%%%%%%%%%%%%%%%%%%%%%%%%%%%%%%%%%%%
\subsection{MAS with Quadratic Cost Functions}\label{sec:quadratic}
As a benchmark, we examine a quadratic case and provide some analytical results accordingly. 

Considering $f_i(\cdot)=f(\cdot; \theta_i)$ and $h_i(\cdot)$ as in Assumption \ref{assumption2}, the optimization problem  \eqref{opt_DLTD_5_3} can be rearranged as
\begin{equation}\label{opt_DLTD_5}
	\begin{aligned}
		\max_{\mathbf U_i,  \mathbf E_i} \quad &   \sum_{t=0}^{\infty} \Big[- \mathbf x_i^\top(t) \mathbf Q_i \mathbf x_i(t)  - \mathbf u_i^\top(t) \mathbf R_i \mathbf u_i(t)+ \lambda_t^\ast e_i(t) \Big] \\
		{\rm s.t.} \quad &\mathbf x_i(t+1) = \mathbf A_i \mathbf x_i(t)+ \mathbf B_i \mathbf u_i(t),\\
		& e_i(t)  \leq a_i(t) - \mathbf u_i^{\top}(t) \mathbf H_i \mathbf u_i(t), \quad t \in \mathcal{T},
	\end{aligned}
\end{equation}
with the balancing equality constraint $\sum_{i=1}^n e_i^\ast(t) =0$.
Also, the social welfare maximization problem  \eqref{opt_social_DLD_infinite_3} becomes
\begin{equation}\label{opt_social_DLD_infinite}
	\begin{aligned}
		\max_{\mathbf{U}, \mathbf{E}} \quad &  \sum_{i=1}^{n} \sum_{t=0}^{\infty} \Big[ - \mathbf x_i^\top(t) \mathbf Q_i \mathbf x_i(t)  - \mathbf u_i^\top(t) \mathbf R_i \mathbf u_i(t) \Big]\\
		{\rm s.t.} \quad & \mathbf x_i(t+1) = \mathbf A_i \mathbf x_i(t)+ \mathbf B_i \mathbf u_i(t), \\
		& e_i(t)  \leq a_i(t) - \mathbf u_i^{\top}(t) \mathbf H_i \mathbf u_i(t), \\
		&\sum_{i=1}^ne_i(t) =0, 
		\quad   t \in \mathcal{T}, \,\,\, i \in \mathcal V. 
	\end{aligned}
\end{equation}

%Recall that $\mathbf x(t)= (\mathbf x_1^\top(t), \dots, \mathbf x_n^\top(t) )^\top$ and $\mathbf u(t)= (\mathbf u_1^\top(t), \dots, \mathbf u_n^\top(t) )^\top$ for $t \in \mathcal{T}$.   Similarly, 
Recall that $\mathbf x(t)$ and $\mathbf u(t)$ are the vectors incorporating the  states and  control inputs of all agents at time step $t \in \mathcal{T}$, respectively. Denote $\mathbf A = \text{blockdiag} \{\mathbf A_1, \dots, \mathbf A_n\}$, $\mathbf B = \text{blockdiag} \{\mathbf B_1, \dots, \mathbf B_n\}$, and $\mathbf H = \text{blockdiag} \{\mathbf H_1, \dots, \mathbf H_n\}$.  Additionally, denote $\mathbf Q = \text{blockdiag} \{\mathbf Q_1, \dots, \mathbf Q_n\}$ and
$\mathbf R = \text{blockdiag} \{\mathbf R_1, \dots, \mathbf R_n\}$. According to Lemma \ref{lemma5}, instead of  social welfare maximization  \eqref{opt_social_DLD_infinite}, we can examine 
\begin{equation}\label{opt_social_infinite_2}
	\begin{aligned}
		\max_{\mathbf{U}} \quad &   \sum_{t=0}^{\infty} \Big[ - \mathbf x^\top(t) \mathbf Q \mathbf x(t)  - \mathbf u^\top(t) \mathbf R \mathbf u(t) \Big]\\
		{\rm s.t.} \quad & \mathbf x(t+1) = \mathbf A \mathbf x(t)+ \mathbf B \mathbf u(t), \\
		& \mathbf u^{\top}(t) \mathbf H \mathbf u(t)  \leq C(t),
		\quad   t \in \mathcal{T},
	\end{aligned}
\end{equation}
which is a CLQR problem. Suppose $\mathbf P \in \mathbb{R}^{n d \times n d}$ is a solution to the  discrete algebraic Riccati equation %associated with an unconstrained LQR problem corresponding to \eqref{opt_social_infinite_2}, i.e., 
$
	\mathbf P= \mathbf A^\top \mathbf P \mathbf A + \mathbf Q - \mathbf A^\top \mathbf P \mathbf B (\mathbf B^\top \mathbf P \mathbf B + \mathbf R)^{-1} \mathbf B^\top \mathbf P \mathbf A
$.
Define $\mathbf K  = -(\mathbf R+ \mathbf B^\top \mathbf P \mathbf B)^{-1} \mathbf B^\top \mathbf P \mathbf A$. 
%Suppose $\sigma_{\rm min}(\cdot)$ and $\sigma_{\rm max}(\cdot)$ represent the minimum and maximum eigenvalues of a square matrix, respectively. 
Select
\begin{equation}\label{set_ROA}
	%\small
	\resizebox {0.485\textwidth} {!} {$
		{\mathds{X}}_{0} = \left\{ \mathbf x(0) \in \mathbb{R}^{n d}: \left\| \mathbf x(0) \right\|^2 \leq \frac{C\sigma_{\rm min}(\mathbf P)}{\sigma_{\rm max}(\mathbf P)\sigma_{\rm max}(\mathbf K^\top \mathbf H \mathbf K)} \right\}. $}
\end{equation}

%In the following, we provide an analytical version of Theorem \ref{theorem9} specified for quadratic MAS.

\medskip

\begin{theorem}\label{theorem5}
	Suppose $f_i(\cdot) = f(\cdot; \theta_i)$ and $h_i(\cdot)$ are as in Assumption \ref{assumption2}. Let $(\mathbf A, \mathbf B)$ be controllable, and $C(t)\geq C>0$ for $t \in \mathcal{T}$. 
	If  $\mathbf x(0) \in \mathds{X}_0$ in \eqref{set_ROA}, then  the following statements hold.
	\begin{itemize}
		\item[(i)] The social welfare maximization problem   \eqref{opt_social_DLD_infinite} is feasible with the following optimal control input 
		\begin{equation}\label{eq_u}
			\resizebox {0.446\textwidth} {!} {$
				\mathbf u^\ast(t)= \mathbf K \mathbf x(t) = -(\mathbf R+ \mathbf B^\top \mathbf P \mathbf B)^{-1} \mathbf B^\top \mathbf P \mathbf A \mathbf x(t), \, t \in \mathcal{T}. $}
		\end{equation}
		\item [(ii)] The optimal price associated with the competitive equilibrium  is $\lambda^\ast_t = 0$ for $t \in \mathcal{T}$.
	\end{itemize}
	
\end{theorem}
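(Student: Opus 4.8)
The plan is to solve the constrained problem \eqref{opt_social_infinite_2} by first solving the \emph{unconstrained} infinite-horizon LQR and then showing that, for $\mathbf{x}(0)\in\mathds{X}_0$, the unconstrained optimizer automatically respects the resource constraint $\mathbf{u}^\top(t)\mathbf{H}\mathbf{u}(t)\le C(t)$. Since $(\mathbf{A},\mathbf{B})$ is controllable and $\mathbf{Q},\mathbf{R}>0$, the discrete algebraic Riccati equation admits a unique stabilizing solution $\mathbf{P}>0$, and the optimizer of the unconstrained version of \eqref{opt_social_infinite_2} is the feedback $\mathbf{u}^\ast(t)=\mathbf{K}\mathbf{x}(t)$ in \eqref{eq_u}, with closed-loop matrix $\mathbf{A}+\mathbf{B}\mathbf{K}$ and finite optimal value $-\mathbf{x}^\top(0)\mathbf{P}\mathbf{x}(0)$. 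The first step is to record the Lyapunov identity satisfied by $\mathbf{P}$ under this feedback, namely $(\mathbf{A}+\mathbf{B}\mathbf{K})^\top\mathbf{P}(\mathbf{A}+\mathbf{B}\mathbf{K})-\mathbf{P}=-(\mathbf{Q}+\mathbf{K}^\top\mathbf{R}\mathbf{K})\le 0$, which shows that $\mathbf{x}^\top(t)\mathbf{P}\mathbf{x}(t)$ is non-increasing along the closed-loop trajectory, so $\mathbf{x}^\top(t)\mathbf{P}\mathbf{x}(t)\le\mathbf{x}^\top(0)\mathbf{P}\mathbf{x}(0)$ for all $t\in\mathcal{T}$.

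Next I would translate this energy bound into a bound on the consumed resource. Sandwiching with eigenvalues gives $\sigma_{\rm min}(\mathbf{P})\|\mathbf{x}(t)\|^2\le\mathbf{x}^\top(t)\mathbf{P}\mathbf{x}(t)\le\sigma_{\rm max}(\mathbf{P})\|\mathbf{x}(0)\|^2$, hence $\|\mathbf{x}(t)\|^2\le\frac{\sigma_{\rm max}(\mathbf{P})}{\sigma_{\rm min}(\mathbf{P})}\|\mathbf{x}(0)\|^2$. Combining this with $\mathbf{u}^{\ast\top}(t)\mathbf{H}\mathbf{u}^\ast(t)=\mathbf{x}^\top(t)\mathbf{K}^\top\mathbf{H}\mathbf{K}\mathbf{x}(t)\le\sigma_{\rm max}(\mathbf{K}^\top\mathbf{H}\mathbf{K})\|\mathbf{x}(t)\|^2$ and substituting the defining inequality of $\mathds{X}_0$ in \eqref{set_ROA} yields $\mathbf{u}^{\ast\top}(t)\mathbf{H}\mathbf{u}^\ast(t)\le C\le C(t)$ for every $t\in\mathcal{T}$. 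This is the feasibility certificate: the unconstrained optimizer lies in the feasible set of \eqref{opt_social_infinite_2}. Because a maximizer over the larger (unconstrained) feasible set that happens to lie in the smaller (constrained) set is necessarily a maximizer over the smaller set as well, $\mathbf{u}^\ast(t)=\mathbf{K}\mathbf{x}(t)$ solves \eqref{opt_social_infinite_2}, and via Lemma \ref{lemma5} solves the social welfare problem \eqref{opt_social_DLD_infinite}, establishing statement (i).

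For statement (ii), the key observation is that the resource constraint is inactive along the optimal trajectory, so the social welfare problem reduces to the separable unconstrained problem \eqref{opt_social_DLD_infinite_6}; this places us precisely in the situation analyzed in the proof of Theorem \ref{theorem9}. I would therefore construct the trading profile $e_i^\ast(t)$ exactly as in \eqref{eq_e_3}, verify the two properties in \eqref{eq_e2_3}, and conclude that $(\mathbf{U}^\ast,\mathbf{E}^\ast)$ together with $\lambda_t^\ast=0$ forms a competitive equilibrium. In effect, the feasibility argument above certifies that the explicit $\mathds{X}_0$ of \eqref{set_ROA} is a legitimate subset of $\mathcal{X}_0$ on which the zero-price conclusion of Theorem \ref{theorem9} holds.

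The main obstacle I anticipate is not conceptual but lies in the bookkeeping of the eigenvalue chain: one must ensure that $\sigma_{\rm min}$ and $\sigma_{\rm max}$ are applied to the correct symmetric positive definite matrices, and that the product of the ratio $\frac{\sigma_{\rm max}(\mathbf{P})}{\sigma_{\rm min}(\mathbf{P})}$ with $\sigma_{\rm max}(\mathbf{K}^\top\mathbf{H}\mathbf{K})$ cancels cleanly against the denominator in \eqref{set_ROA} to land at exactly $C$. A secondary point requiring care is the well-posedness of the infinite-horizon LQR: controllability of $(\mathbf{A},\mathbf{B})$ together with $\mathbf{Q}>0$ guarantees the stabilizing $\mathbf{P}>0$ and the finiteness of the optimal value, which is what legitimizes the Lyapunov monotonicity argument underpinning the whole proof.
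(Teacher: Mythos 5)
Your proposal is correct and follows essentially the same route as the paper: both establish that for $\mathbf x(0)\in\mathds{X}_0$ the unconstrained LQR feedback $\mathbf u^\ast(t)=\mathbf K\mathbf x(t)$ satisfies $\mathbf u^{\ast\top}(t)\mathbf H\mathbf u^\ast(t)\le C$ for all $t$ via the non-increase of $\mathbf x^\top(t)\mathbf P\mathbf x(t)$ along the closed loop (the paper phrases this as forward invariance of the sublevel set $\{\mathbf x:\mathbf x^\top\mathbf P\mathbf x\le\beta\}$ and iterates step by step, while you write the Lyapunov identity once and extract a uniform bound), combined with the same eigenvalue sandwich that cancels against the constants in \eqref{set_ROA}. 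Part (ii) is likewise handled identically, by reducing to the separable unconstrained problem and invoking the construction from the proof of Theorem \ref{theorem9}.
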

\begin{proof}
	(i) Consider $\beta>0$. Any sublevel set 
	\begin{equation}\label{set_invariant}
		\mathds{X}=\{ \mathbf x \in \mathbb{R}^{n d} : \mathbf x^\top \mathbf P \mathbf x \leq \beta\}
	\end{equation}
	is forward invariant for the closed-loop system $\mathbf x(t+1) = (\mathbf A + \mathbf B \mathbf K)\mathbf x(t)$ \cite{Borrelli}. If $\beta$ is selected such that the control input $\mathbf u(t) = \mathbf K \mathbf x(t)$ satisfies the inequality constraint in \eqref{opt_social_infinite_2} for all $\mathbf x(t) \in \mathds{X}$, then the origin is locally asymptotically stable 
	%for the constrained LQR problem in \eqref{opt_social_infinite_2}, 
	and the region of attraction contains $\mathds{X}$ for the CLQR problem \eqref{opt_social_infinite_2}.
	Considering $\mathbf u(0) = \mathbf K \mathbf x(0)$, we obtain $\mathbf u^{\top}(0) \mathbf H \mathbf u(0) = \mathbf x^\top(0) \mathbf K^\top  \mathbf H \mathbf K \mathbf x(0)$.
	Additionally, the following inequality holds
	\begin{multline}\label{eq57}
		\sigma_{\rm min}(\mathbf K^\top  \mathbf H \mathbf K) \left\| \mathbf x(0) \right\|^2 \leq \mathbf x^\top(0) \mathbf K^\top  \mathbf H \mathbf K \mathbf x(0)   \\ \leq \sigma_{\rm max}(\mathbf K^\top  \mathbf H \mathbf K) \left\| \mathbf x(0) \right\|^2.
	\end{multline}
	According to \eqref{set_ROA}, we have $\sigma_{\rm max}(\mathbf K^\top \mathbf H \mathbf K) \left\| \mathbf x(0) \right\|^2 \leq C\sigma_{\rm min}(\mathbf P)/\sigma_{\rm max}(\mathbf P)$. Consequently, \eqref{eq57} yields
	\begin{equation*}
		\mathbf u^{\top}(0) \mathbf H \mathbf u(0) \leq  \frac{C\sigma_{\rm min}(\mathbf P)}{\sigma_{\rm max}(\mathbf P)} \leq C,
	\end{equation*}
	which satisfies the inequality constraint in \eqref{opt_social_infinite_2}. Additionally, the set $\mathds{X}_0$ in \eqref{set_ROA} is a subset of the forward invariant set in \eqref{set_invariant} with the choice of $\beta = {C \sigma_{\rm min}(\mathbf P)}/{\sigma_{\rm max}(\mathbf K^\top \mathbf H \mathbf K)}$. 
	Since $\mathbf u(0) = \mathbf K \mathbf x(0)$ satisfies the inequality constraint and $ \mathbf x(0) $ lies in the invariant set $\mathds X$ in \eqref{set_invariant},  then $\mathbf x(1)$ also lies in the invariant set $\mathds X$ in \eqref{set_invariant}; i.e.,
	%$$
	%	\mathbf x^\top(1) \mathbf P \mathbf x(1) \leq \frac{C \sigma_{\rm min}(\mathbf P)}{\sigma_{\rm max}(\mathbf K^\top \mathbf H \mathbf K)}.
	%$$
	$
	\mathbf x^\top(1) \mathbf P \mathbf x(1) \leq {C \sigma_{\rm min}(\mathbf P)} / {\sigma_{\rm max}(\mathbf K^\top \mathbf H \mathbf K)}.
	$
	Considering $\sigma_{\rm min}(\mathbf P) \left\| \mathbf x(1) \right\|^2 \leq \mathbf x^\top(1) \mathbf P \mathbf x(1)$, we yield $\left\| \mathbf x(1) \right\|^2 \leq {C }/{\sigma_{\rm max}(\mathbf K^\top \mathbf H \mathbf K)}$.
	Then, the feedback law $\mathbf u(1) = \mathbf K \mathbf x(1)$ leads to
$
		\mathbf u^{\top}(1) \mathbf H \mathbf u(1) \leq \sigma_{\rm max}(\mathbf K^\top  \mathbf H \mathbf K) \left\| \mathbf x(1) \right\|^2 \leq C
$,
	which satisfies the inequality constraint. Using the same logic, all future states $\mathbf x(t)$ lie in the invariant set $\mathds X$ in \eqref{set_invariant}, and all future control inputs $\mathbf u(t)= \mathbf K \mathbf x(t)$ satisfy the inequality constraint. So, with the initial conditions in \eqref{set_ROA}, we treat \eqref{opt_social_infinite_2} as an unconstrained LQR problem. Since $(\mathbf A, \mathbf B)$ is controllable, $\mathbf Q>0$, and $\mathbf R>0$,  the optimization problem \eqref{opt_social_infinite_2} is feasible with the optimal solution \eqref{eq_u}.
	%Recall that $\mathbf u^\ast(t)= (\mathbf u_1^{\ast\top}(t), \cdots, \mathbf u_\pi^{\ast\top}(t))^\top$. 
	According to Lemma \ref{lemma5},  the social welfare maximization problem  \eqref{opt_social_DLD_infinite} is also feasible and the optimal control is as \eqref{eq_u}. 
	%Define
	%\begin{equation}\label{eq_e}
	%e_i^\ast(t) = a_i(t) - \mathbf u_i^{\ast\top}(t) \mathbf H_i \mathbf u_i^\ast(t) + \frac{1}{\pi} \left(\sum_{i=1}^{\pi}\mathbf u_i^{\ast\top}(t) \mathbf H_i \mathbf u_i^\ast(t) - \sum_{i=1}^{\pi}a_i(t)\right),
	%\end{equation}
	%which results in
	%\begin{equation}\label{eq_e2}
	%\begin{aligned}
	%	&e_i^\ast(t) \leq a_i(t) - \mathbf u_i^{\ast\top}(t) \mathbf H_i \mathbf u_i^\ast(t),\\
	%	& \sum_{i=1}^{\pi}  e_i^\ast(t)=0.
	%\end{aligned}
	%\end{equation}
	%With this auxiliary variable, the optimization in \eqref{opt_social_infinite_2} can be written as \eqref{opt_social_DLD_infinite}. 
	
	(ii) Follows directly from  Theorem \ref{theorem9}.
\end{proof}

%%%%%%%%%%%%%%%%%%%%%%%%%%%%%%%%%%%%%%%%%%%%%%%%%%%%
%%%%%%%%%%%%%%%%%%%%%%%%%%%%%%%%%%%%%%%%%%%%%%%%%%%%%%%%%%%%%%%%%%
%%%%%%%%%%%%%%%%%%%%%%%%%%%%%%%%%%%%%%%%%%%%%%%%%%%%%

\subsection{Tracking Problem}
In standard regulation problems, it is desirable to steer the state to zero. However, in  real-world applications, the state often follows a reference $\bar {\mathbf x}$, which is a well-known class of tracking problems. We can convert a tracking problem into a standard regulation problem by  change of variables $\mathbf x_{i,e}(t):= \mathbf x_{i}(t) - \bar{\mathbf x}$ and  $\mathbf u_{i,e}(t):=\mathbf u_i(t)-\bar{\mathbf u}_i$, where $\bar{\mathbf x}$ is the desired state and $\bar{\mathbf u}_i$ is the steady state control input (see Example 10.3 in \cite{tedrake2009underactuated}). In this section, we examine extensions to our proposed results in the context of tracking problems.

Consider the following modified optimization problem instead of \eqref{opt_DLTD_5} to obtain a competitive equilibrium
\begin{equation}\label{eq59}
		\resizebox {0.485\textwidth} {!} {$
	\begin{aligned}
		\max_{\mathbf U_{i}^e, \mathbf E_i} \quad &   \sum_{t=0}^{\infty} \Big[- \mathbf x_{i,e}^\top(t) \mathbf Q_i \mathbf x_{i,e}(t)  - \mathbf u_{i,e}^\top(t) \mathbf R_i \mathbf u_{i,e}(t)+ \lambda_t^\ast e_i(t) \Big] \\
		{\rm s.t.} \quad & \mathbf x_{i,e}(t+1) = \mathbf A_i \mathbf x_{i,e}(t)+ \mathbf B_i \mathbf u_{i,e}(t), \\
		& e_i(t)  \leq a_i(t) - (\mathbf u_{i,e}(t)+\mathbf {\bar{u}}_i)^{\top} \mathbf H_i (\mathbf u_{i,e}(t)+\mathbf {\bar{u}}_i), 
		\, \, \,   t \in \mathcal{T},
	\end{aligned} $}
\end{equation}
where $\mathbf U_i^e=(\mathbf u_{i,e}^\top(0), \dots, \mathbf u_{i,e}^\top(\infty))^\top$.
The inequality constraint in \eqref{eq59} can be rearranged as $e_i(t)  \leq a_{i,e}(t) - h_{i,e}(\mathbf u_{i,e}(t))$, where
%\begin{equation*}
%	e_i(t) \leq a_i(t) - \bar{\mathbf u}_i^\top \mathbf H_i  \bar{\mathbf u}_i - \mathbf u_{i,e}^\top(t) \mathbf H_i \mathbf u_{i,e}(t) -2  \bar{\mathbf u}_i^\top \mathbf H_i \mathbf u_{i,e}(t).
%\end{equation*}
%Define 
$a_{i,e}(t):= a_i(t) - \bar{\mathbf u}_i^\top \mathbf H_i  \bar{\mathbf u}_i$ and $h_{i,e}(\mathbf u_{i,e}(t)):= \mathbf u_{i,e}^\top(t) \mathbf H_i \mathbf u_{i,e}(t) +2  \bar{\mathbf u}_i^\top \mathbf H_i \mathbf u_{i,e}(t)$.  With this change of variables, \eqref{eq59} can be represented as the standard form  \eqref{opt_DLTD_1}.

To be consistent with the previous assumptions, let $\sum_{i=1}^{n} a_{i,e}(t)>0$, where  $\sum_{i=1}^{n} a_{i}(t)>\sum_{i=1}^{n} \bar{\mathbf u}_i^\top \mathbf H_i  \bar{\mathbf u}_i$. Note that $h_{i,e}(\mathbf u_{i,e}(t))$ is not a non-negative function, which causes no issue with the presented results. The non-negativity of $h_i(\cdot)$ is mainly related to its physical interpretations that  represent energy.
The social welfare maximization problem can be defined in a similar manner.

The community microgrid, introduced in Section \ref{sec:app_microgrid}, can be formulated as a tracking problem in which $\bar{\mathbf x}$ represents the desired temperature  ($\tccentigrade$).

\section{Illustrative Examples}\label{sec:examples}
In this section, we simulate some standard regulation problems to validate the proposed theorems.
\subsection{Social Shaping over a Finite Horizon}\label{sec:example_finite}
Consider a quadratic MAS consisting of $3$ agents  in the time horizon $N=6$, with the following state-space matrices
\begin{equation*}
		\small
	\begin{aligned}
		&{\mathbf A_1} = \left[ {\begin{array}{*{20}{c}}
				{ 0.4}&{ - 0.1}&{0.2}\\
				{0.2}&{  0.3}&{0.1}\\
				{0.3}&{ - 0.1}&{-0.2}
		\end{array}} \right],	\, {\mathbf B_1} = \left[ {\begin{array}{*{20}{c}}
				4&5\\
				2&1\\
				3&5
		\end{array}} \right], \\ &{\mathbf A_2} = \left[ {\begin{array}{*{20}{c}}
				{-0.1}&{0.2}&{ -0.3}\\
				{0.3}&{  0.4}&{ - 0.1}\\
				{ -0.1}&{0.2}&{  -0.7}
		\end{array}} \right], \, {\mathbf B_2} = \left[ {\begin{array}{*{20}{c}}
				1&4\\
				2&5\\
				6&3
		\end{array}} \right], \\ &{\mathbf A_3} = \left[ {\begin{array}{*{20}{c}}
				{  0.5}&{-0.2}&{0.6}\\
				{ -0.4}&{0.9}&{0.3}\\
				{0.5}&{0.3}&{ - 0.8}
		\end{array}} \right], \,
		{\mathbf B_3} = \left[ {\begin{array}{*{20}{c}}
				2&3\\
				1&2\\
				5&4
		\end{array}} \right],
	\end{aligned}
\end{equation*}
and the initial states ${\mathbf x_1}(0)=(25, 35, 75)^\top$, ${\mathbf x_2}(0)=(40, 50, 70)^\top$, and ${\mathbf x_3}(0)=(50, 80, 90)^\top$. 
Additionally, suppose agents have the representative excess  resources $a_1(t)=-\sin(\frac{\pi}{6}t)+1.2$, $a_2(t)=-2\sin(\frac{\pi}{6}t)+2.2$, and $a_3(t)=0$. The total excess network generation is obtained as $C(t)=\sum_{i=1}^{3}a_i(t)=-3\sin(\frac{\pi}{6}t)+3.4$. %and depicted in Fig. \ref{fig_1}.
%\begin{figure}[!t]
%	\centering
%	\centerline{\psfig{figure=fig_1.png,width=2.55in}}
%	\caption{The network generation $C(t)$ over time steps.}
%	\label{fig_1}
%\end{figure}
 %(in $ \text{\textcent} / \rm{kWh}$), 
Considering Assumption \ref{assumption2}, agents have $\mathbf R_1 = \mathbf R_2 = \mathbf R_3 = 0.3 \mathbf I$, and
\begin{equation*}
	%	\small
	\begin{split}
		{\mathbf H_1} = \left[ {\begin{array}{*{20}{c}}
				2&3\\
				3&6
		\end{array}} \right], \, {\mathbf H_2} = \left[ {\begin{array}{*{20}{c}}
				1&-2\\
				-2&5
		\end{array}} \right], \, {\mathbf H_3} = \left[ {\begin{array}{*{20}{c}}
				4&{1}\\
				{1}&3
		\end{array}} \right].
	\end{split}
\end{equation*}
Let $\lambda^\dag =20$. We aim to design $\mathbf Q_i$ for $i \in \mathcal{V}$ such that $\lambda^\ast_t \leq \lambda^\dag$ at all time steps. 
\medskip
\subsubsection{Analytical Approach}  
 Using Theorems \ref{theorem2} and \ref{theorem3}, the upper bound $\delta_{\rm max}$ for the personalized parameter $\mathbf Q_i$ is obtained as $0.00017$ and $0.00018$, receptively. Theorem \ref{theorem3} provides a larger upper bound compared to Theorem \ref{theorem2}. Therefore, we assign $\delta_{\rm max}=0.00018$ and select $\mathbf Q_i=0.00018 \mathbf I$ for $i \in \mathcal{V}$. We solve  the social welfare maximization in \eqref{opt_social_DLD_1} to obtain   $-\lambda^\ast_t$ as  the Lagrange multiplier associated with the equality constraint $\sum_{i=1}^{n} e_i(t)=0$ for $t \in \mathcal{T}$. The optimal prices $\lambda^\ast_t$  are depicted in Fig. \ref{fig_2}.  
\begin{figure}[t]
	\centering
	\begin{subfigure}{\columnwidth}
		\centering
		\includegraphics[width=.6\columnwidth]{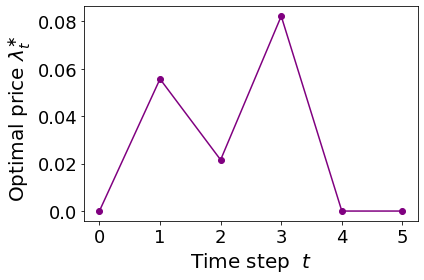}
		\caption{ $\delta_{\rm max}=0.00018$, obtained from Theorem \ref{theorem3}.}
		\label{fig_2}
	\end{subfigure}%
	\quad
	\hfill
	\begin{subfigure}{\columnwidth}
		\centering
		\includegraphics[width=.6\columnwidth]{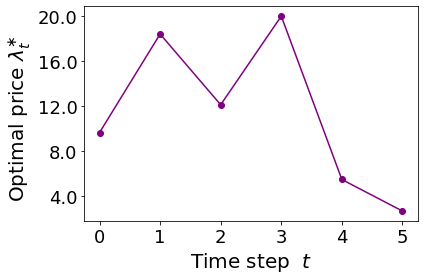}
		\caption{$\delta_{\rm max}=0.024$, obtained from Algorithm \ref{algorithm1}.}
		\label{fig_3}
	\end{subfigure}
	\caption{The optimal price $\lambda^\ast_t$ over time steps.}
\end{figure}
As can be seen, the maximum value of the optimal price is $0.08$ which is much less than the price threshold $\lambda^\dag=20$, and therefore, socially resilient. This confirms that Theorems \ref{theorem2} and\ref{theorem3} are valid although they provide conservative results.

\medskip
\subsubsection{Numerical approach} We run Algorithm \ref{algorithm1}  for $30$ steps with the choice of $d_\varrho = 1$ which is sufficiently large and satisfies $\bar \lambda^\ast(d_\varrho)=835.9> \lambda^\dag$. The algorithm converges to $\delta_{\rm max}=0.024$ which validates Theorem \ref{theorem7}. Selecting $\mathbf Q_i = 0.024 \mathbf I$ for $i \in \mathcal{V}$, the optimal prices are obtained in Fig. \ref{fig_3}, which are less than or equal to $20$. The maximum value of the price over the entire horizon is $\lambda^\ast_3 = 20$, occurring at time step $t=3$. This validates Theorem \ref{theorem4} and shows that the numerical algorithm provides a tight upper bound for the personalized parameter $\mathbf Q_i$, and thus works well in practice.

%----------------------------------------------------------------------------
\subsection{Competitive Equilibrium over an Infinite Horizon}
Consider a dynamic MAS with $3$ agents with the following state matrices
\begin{equation*}
	\resizebox {0.48\textwidth} {!} {$
%	\small
	\begin{aligned}
		&{\mathbf A_1} = \left[ {\begin{array}{*{20}{c}}
				{ 1.1}&{ - 0.5}&{1.8}\\
				{-0.4}&{  0.6}&{0.7}\\
				{-0.3}&{  0.7}&{-0.6}
		\end{array}} \right],  {\mathbf A_2} = \left[ {\begin{array}{*{20}{c}}
				{0.4}&{1.2}&{ -0.1}\\
				{-0.8}&{  -1.3}&{  0.6}\\
				{ 0.1}&{0.7}&{  0.5}
		\end{array}} \right], \\
	&{\mathbf A_3} = \left[ {\begin{array}{*{20}{c}}
				{  0.6}&{-1.2}&{0.9}\\
				{ -1.4}&{0.7}&{0.3}\\
				{-1.5}&{0.7}&{  0.1}
		\end{array}} \right].
	\end{aligned}
$}
\end{equation*}
Suppose $\mathbf B_i$, $\mathbf R_i$, and $\mathbf H_i$ are  selected according to Section \ref{sec:example_finite} and $\mathbf Q_i=0.005 \mathbf I$ for $i \in \mathcal{V}$. Suppose agents provide excess supply resources  $a_1(t)=1$, $a_2(t)=1.8$, and $a_3(t)=0$, which leads to  the total excess network supply $C(t)=\sum_{i=1}^{3}a_i(t)=2.8$. The forward invariant set $\mathds{X}_0$ in \eqref{set_ROA} represents an $L2$-ball of radius $0.47$ centered at the origin. In the following, we study two scenarios on the choice of initial conditions.

\medskip
\subsubsection{Scenario I (Initial Conditions Outside the Invariant Set $\mathds{X}_0$)}
Suppose agents have initial conditions as in \ref{sec:example_finite} which lie outside the set $\mathds {X}_0$ in \eqref{set_ROA}. Let  $\mathbf U^\ast_{C,ij}$ and $\mathbf U^\ast_{S,ij}$  denote the $j$-th optimal inputs of agent $i$ which solve the competitive problem \eqref{opt_DLTD_5} and the social welfare maximization \eqref{opt_social_DLD_infinite},  respectively.  Similarly,  denote $\mathbf E^\ast_{C,i}$  and $\mathbf E^\ast_{S,i}$  as the optimal trading decisions of agent $i$ associated with the competitive equilibrium and the social welfare maximization solution, respectively. Similar to the finite horizon case in \cite{chen2021social}, we solve  social welfare maximization  \eqref{opt_social_DLD_infinite} to obtain $-  \lambda^\ast_t$ as the Lagrange multiplier of the equality constraint $\sum_{i=1}^{n}e_i(t)=0$. Additionally, we acquire the associated optimal solution $(\mathbf U^\ast_{S,ij}, \mathbf E^\ast_{S,i})$ where $i \in \{1,2,3\}$ and $j \in \{1,2\}$, as  depicted in Fig. \ref{fig_5}. Using the obtained optimal price $\lambda_t^\ast$, we solve the optimization problem  \eqref{opt_DLTD_5}  to reach the competitive equilibrium $(\mathbf U^\ast_{C,ij}, \mathbf E^\ast_{C,i})$ for $i \in \{1,2,3\}$ and $j \in \{1,2\}$, as  illustrated in Fig. \ref{fig_5}. According to Fig. \ref{fig_5}, the competitive equilibrium and the solution of the social welfare maximization coincide,  validating Theorem \ref{theorem6}. 
\begin{figure}[!t]
	\centering
	\includegraphics[width=8.4cm]{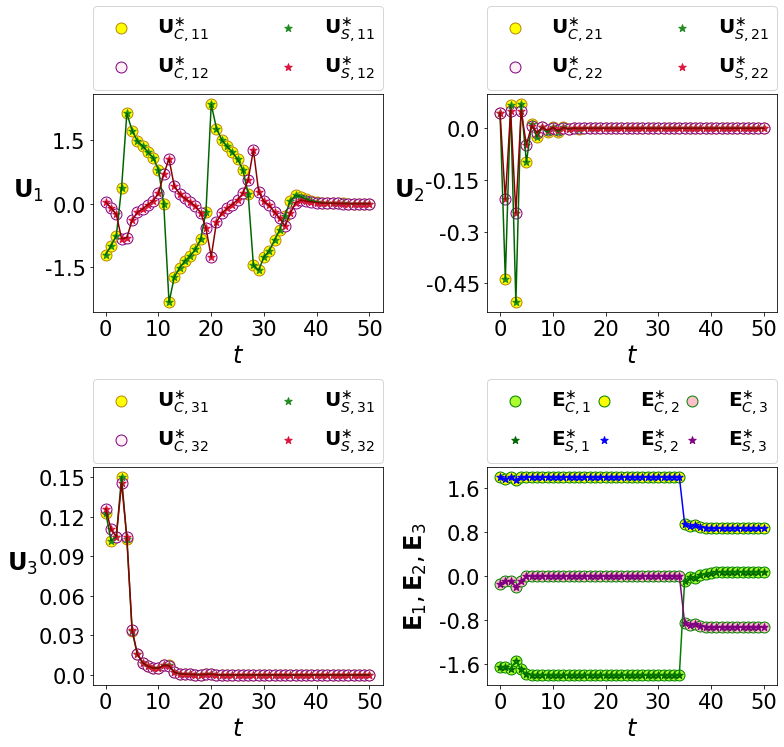}
	%\centerline{\psfig{figure=fig_1.png,width=\linewidth}}
	\caption{The competitive equilibrium and the social welfare maximization solution, in Scenario I.}
	\label{fig_5}
\end{figure}
Additionally, the values of  optimal prices $\lambda^\ast_t$ over time intervals are depicted in Fig. \ref{fig_6}, which indicates  $\lambda^\ast_t=0$ for $t >35$. This  validates Theorem \ref{theorem8}.
%This observation implies that social welfare maximization solution constitutes a competitive equilibrium with zero price for $t > 30$, validating Theorem \ref{theorem8}.
\begin{figure}[!t]
	\centering
	\includegraphics[width=2.5 in]{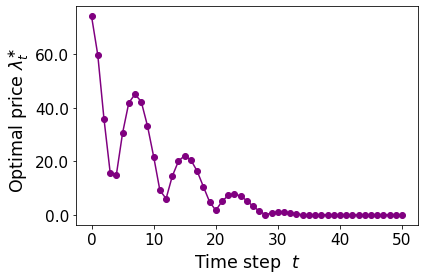}
	%	\centerline{\psfig{figure=fig_2.png,width=2.7in}}
	\caption{The optimal price over time intervals, in Scenario I.}
	\label{fig_6}
\end{figure}

%\medskip
%----------------------------------------------------------------------------
\medskip
\subsubsection{Scenario II (Initial Conditions Inside the Invariant Set $\mathds{X}_0$)}
 Suppose agents have  initial conditions ${\mathbf x_1}(0)=(0.2,0.1,0.08)^\top$, ${\mathbf x_2}(0)=(0.1,0.06,0.3)^\top$, and ${\mathbf x_3}(0)=(0.5,0.2,0.1)^\top$, 
%\begin{equation}
%	\small
%	{\mathbf x_1}(0) = \left[ {\begin{array}{*{20}{c}}
		%			{0.5}\\
		%			{0.2}\\
		%			{0.1}
		%	\end{array}} \right], \, {\mathbf x_2}(0) = \left[ {\begin{array}{*{20}{c}}
		%			{0.2}\\
		%			{0.1}\\
		%			{0.4}
		%	\end{array}} \right], \, {\mathbf x_3}(0) = \left[ {\begin{array}{*{20}{c}}
		%			{0.1}\\
		%			{0.3}\\
		%			{0.2}
		%	\end{array}} \right],
%\end{equation}
that satisfy  $\mathbf x(0) \in \mathds{X}_0$ in \eqref{set_ROA}. Using the same procedure as in Scenario I, we obtain the optimal control input $\mathbf U^\ast_{S,ij}$ for $i \in \{1,2,3\}$ and $j \in \{1,2\}$. Additionally,  we compute the linear feedback law in \eqref{eq_u} and denote it as $\mathbf U^\ast_{K,ij}$, representing the $j$-th control input of agent $i$, for $i \in \{1,2,3\}$ and $j \in \{1,2\}$.  Both $\mathbf U^\ast_{S,ij}$  and $\mathbf U^\ast_{K,ij}$  are depicted in Fig. \ref{fig_10}, which illustrates the two control policies are the same. In Fig. \ref{fig_10}, we present optimal prices over respective time intervals, indicating  $\lambda_t^\ast=0$ at all time intervals. Our results validate Theorems \ref{theorem9} and \ref{theorem5}.
\begin{figure}[!t]
	\centering
	%\centerline{\psfig{figure=fig_3.png,width=\linewidth}}
	\includegraphics[width=8.4cm]{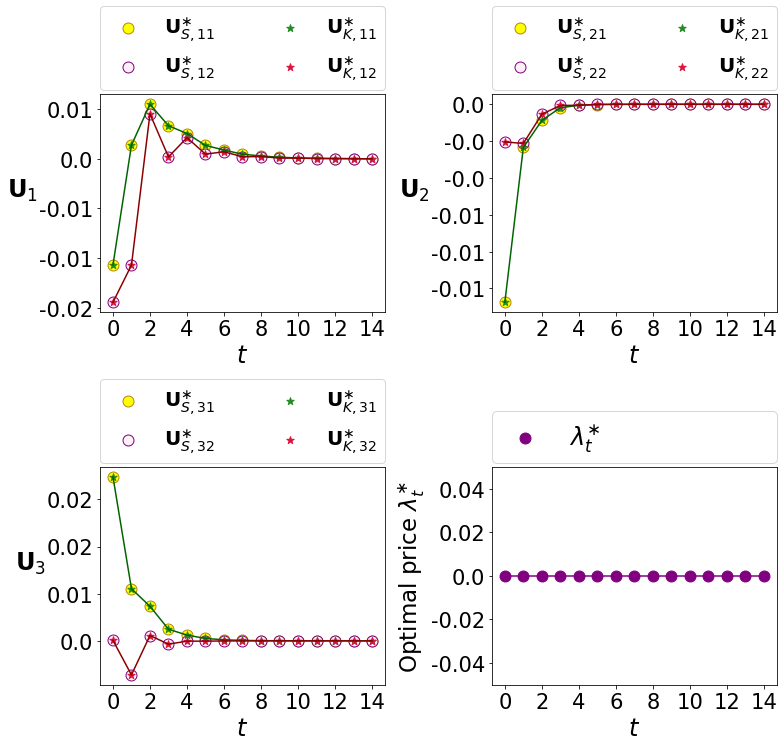}
	\caption{The optimal control inputs and the optimal price $\lambda_t^\ast$ over time intervals, in Scenario II.}
	\label{fig_10}
\end{figure}

%----------------------------------------------------------------------------
%\subsection{Example 2:  Carbon Permit Trading in RICE Type of Systems}
%%%%%%%%%%%%%%%%%%%%%%%%%%%%%%%%%%%%%%
\section{Conclusion}\label{sec:conclusions}
In this paper, we investigated the properties of a  competitive equilibrium, in particular the price trajectory, in a self-sustained dynamic MAS.  First, we focused on a finite horizon case and showed the social shaping problem is solvable both implicitly (for general classes of utility functions) and explicitly (for quadratic MAS). Furthermore, we presented a numerical algorithm which provides more accurate results  compared to the proposed analytical solutions. Secondly, we studied  a competitive equilibrium over an infinite horizon. We examined the relationship between a competitive equilibrium and the solution of the social welfare maximization problem. Additionally, we investigated the decaying behavior of the price trajectory which depends on the system's initial state.  Finally, we  focused on  quadratic MAS and the associated CLQR problem to obtain explicit results validating the previous findings. As future work, it might be possible  to extend the results to systems with nonlinear dynamics in the presence of disturbances and uncertainties. Additionally, one might consider adding more physical constraints to the framework. 

\section*{Acknowledgment}

This work was supported by the Australian Research Council under grants DP190102158, DP190103615, and LP210200473.

\begin{IEEEbiography}[{\includegraphics[width=1in,height=1.25in,clip,keepaspectratio]{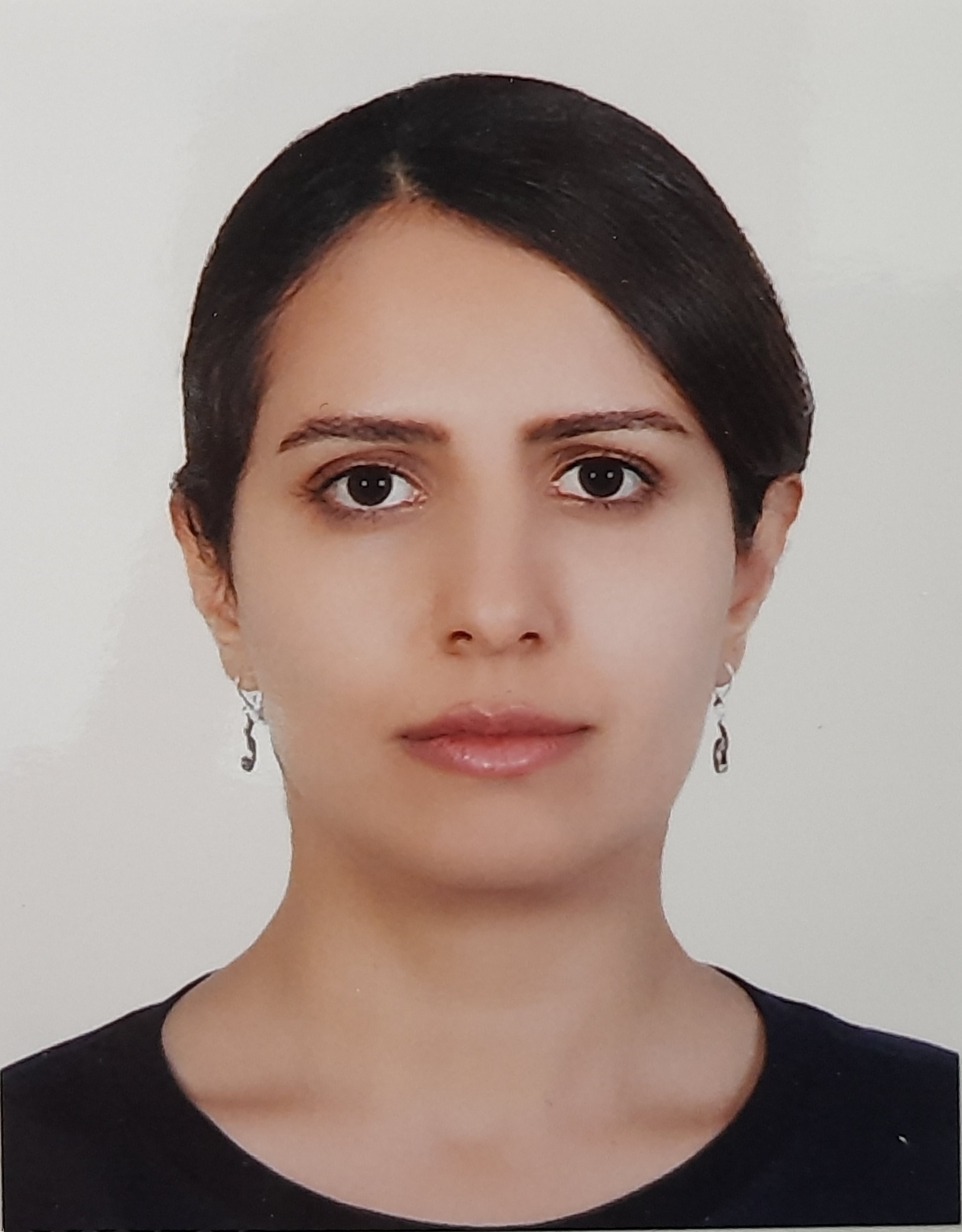}}]
{Zeinab Salehi} received the B.Sc. and M.Sc. degrees (with Distinction) in electrical engineering from Shiraz University, Shiraz, Iran, in 2015 and 2018, respectively. She is currently a Ph.D. student at the Research School of Engineering, the Australian National University, Canberra, Australia.
Her research interests include control theory and application, multi-agent systems, distributed systems, power system optimization
and planning, smart grids, renewable energy integration, machine learning, and model order reduction.
\end{IEEEbiography}

%\vskip -2\baselineskip plus -1fil

\begin{IEEEbiography}[{\includegraphics[width=1in,height=1.25in,clip,keepaspectratio]{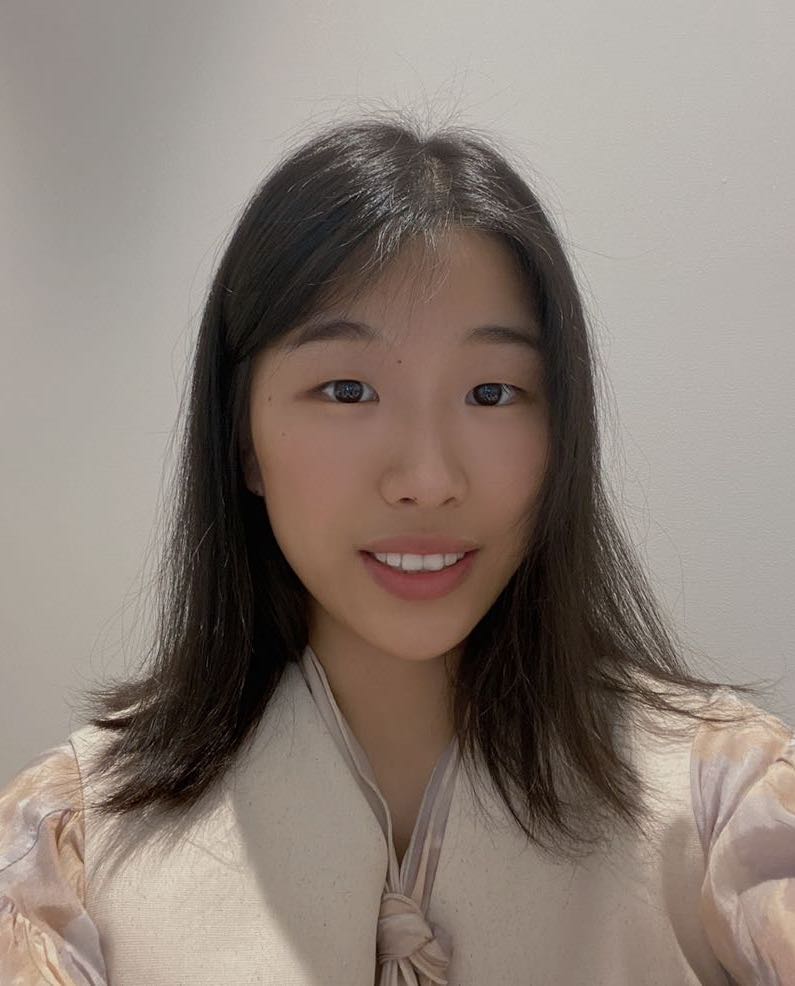}}]
{Yijun Chen} received B.Eng. Degree in Digital Media Technology from Beijing University of Posts and Telecommunications, Beijing, China in 2019. She is currently working towards a Ph.D. Degree in Control Systems Engineering with the Australian Center for Field Robotics, School of Aerospace, Mechanical and Mechatronic Engineering, University of Sydney, Sydney, Australia. Her current research interests include multi-agent systems, systems control, game theory, and decision theory.
\end{IEEEbiography}

%\vskip -2\baselineskip plus -1fil

\begin{IEEEbiography}[{\includegraphics[width=1in,height=1.25in,clip,keepaspectratio]{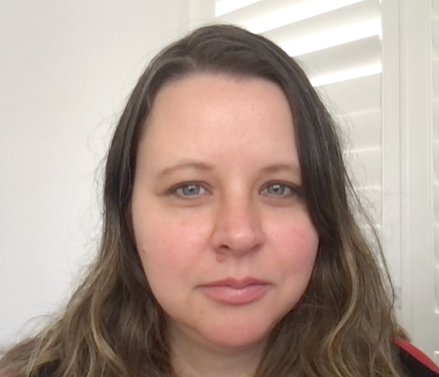}}]
{Elizabeth Ratnam} earned the BEng (Hons I) degree in Electrical Engineering in 2006, and the PhD degree in Electrical Engineering in 2016, from the University of Newcastle, Australia. She subsequently held postdoctoral research positions with the Center for Energy Research at the University of California San Diego, and at the University of California Berkeley in the California Institute for Energy and Environment (CIEE). During 2001–2012 she held various positions at Ausgrid, a utility that operates one of the largest electricity distribution networks in Australia. Dr Ratnam currently holds a Future Engineering Research Leader (FERL) Fellowship at the ANU and is a Senior Lecturer in the ANU School of Engineering. She is a Senior Member of IEEE and a Fellow of Engineers Australia. 
\end{IEEEbiography}

%\vskip -2\baselineskip plus -1fil

\begin{IEEEbiography}[{\includegraphics[width=1in,height=1.25in,clip,keepaspectratio]{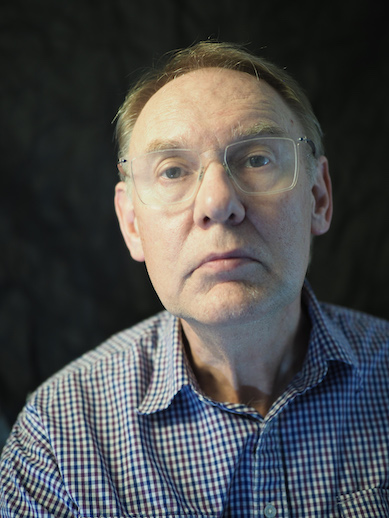}}]
{Ian R. Petersen} was born in Victoria, Australia. He received a Ph.D in Electrical Engineering in 1984 from the University of  Rochester. From 1983 to 1985 he was a Postdoctoral Fellow at the  Australian National University. From 2017 he has been a Professor at the Australian National University in the School of Engineering. He was the Interim Director of  the  School of  Engineering at the Australian National University from 2018-2019. From 1985 until 2016 he was with UNSW Canberra where he was a Scientia Professor and an Australian Research Council Laureate Fellow in the  School of Engineering and Information Technology.  He has previously been ARC Executive Director for Mathematics Information and Communications, Acting Deputy Vice-Chancellor Research for UNSW and an Australian Federation Fellow.  He has served as an Associate Editor for the IEEE Transactions on Automatic  Control, Systems and Control Letters, Automatica, IEEE Transactions on Control Systems Technology and SIAM Journal on  Control and Optimization. Currently he is an Editor for Automatica. He is a fellow of IFAC, the IEEE and the Australian Academy of Science.  His main  research interests are in robust control theory, quantum control theory and stochastic control theory. 
\end{IEEEbiography}

%\vskip -2\baselineskip plus -1fil

\begin{IEEEbiography}[{\includegraphics[width=1in,height=1.25in,clip,keepaspectratio]{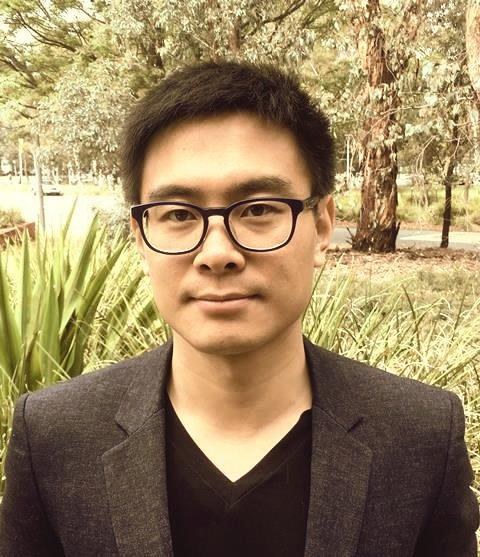}}]
{Guodong Shi} received the B.Sc. degree in mathematics and applied mathematics from the School of Mathematics, Shandong University, Jinan, China, in 2005, and the Ph.D. degree in systems theory from the Academy of Mathematics and Systems Science, Chinese Academy of Sciences, Beijing, China, in 2010. From 2010 to 2014, he was a Post-Doctoral Researcher with the ACCESS Linnaeus Centre, KTH Royal Institute of Technology, Stockholm, Sweden. From 2014 to 2018, he was with the Research School of Engineering, The Australian National University, Canberra, ACT, Australia, as a Lecturer/Senior Lecturer, and a Future Engineering Research Leadership Fellow. Since 2019, he has been with the Australian Center for Field Robotics, The University of Sydney, NSW, Australia. His research interests include distributed control systems, quantum networking and decisions, and social opinion dynamics.
\end{IEEEbiography}


\begin{thebibliography}{00}
	
%\bibitem{zhang2020structured} J. Zhang and C. Xia, ``Structured multi-agent-based model for bankruptcy contagion with cash flow," in {\em IEEE Access}, vol. 8, pp. 171716--171729, 2020.
%
%\bibitem{negenborn2009distributed} R. R. Negenborn, P. J. van Overloop, T. Keviczky, and B. De Schutter, ``Distributed model predictive control of irrigation canals," {\em Netw. Heterogeneous Media},
%vol. 4, no. 2, pp. 359--380, 2009.
%
%\bibitem{narciso2021carbon} J. B. Narciso de Sousa, Z. Kokkinogenis, and R. J. Rossetti,  ``Carbon market multi-agent simulation model," in {\em EPIA Conference on Artificial Intelligence},  pp. 661--672, 2021.
%
%\bibitem{cheng2020safe} R. Cheng, M. J. Khojasteh, A. D. Ames, and J. W. Burdick, ``Safe multi-agent interaction through robust control barrier functions with learned uncertainties,"  in {\em Proc. 59th IEEE Conference on Decision and Control (CDC)}, 2020, pp. 777--783.
%
%\bibitem{falsone2021new} A. Falsone, K. Margellos, J. Zizzo, M. Prandini, and S. Garatti, ``New results on resource sharing problems with random agent arrivals and an application to economic dispatch in power systems," in {\em Proc. 60th IEEE Conference on Decision and Control (CDC)}, 2021, pp. 5130--5135.
%
%\bibitem{pipattanasomporn2009multi} M. Pipattanasomporn, H. Feroze, and S. Rahman, ``Multi-agent systems in a distributed smart grid: Design and implementation,"  in {\em Proc. IEEE Power Systems Conf. and Expo.}, 2009, pp. 1--8.
%
%\bibitem{deplano2018lyapunov} D. Deplano, M. Franceschelli, and A. Giua, ``Lyapunov-free analysis for consensus of nonlinear discrete-time multi-agent systems,"  in {\em IEEE Conference on Decision and Control (CDC)}, 2018, pp. 2525--2530.

\bibitem{ananduta2021distributed} W. Ananduta, A. Nedi´, and C. Ocampo-Martinez, ``Distributed augmented lagrangian method for link-based resource sharing problems of multi-agent systems,"\textit{ IEEE Trans.
Autom. Control}, vol. 67, no. 6, pp. 3067--3074, 2022.

\bibitem{jia2021adaptive} W. Jia, N. Liu, and S. Qin, ``An adaptive continuous-time algorithm for
nonsmooth convex resource allocation optimization,"\textit{ IEEE Trans.
Autom. Control}, early access, Dec. 21, 2021, doi: {\color{blue}10.1109/TAC.2021.3137054}.

\bibitem{chakrabortty2011control} A. Chakrabortty and M. D. Ilić, \textit{Control and Optimization Methods for
	Electric Smart Grids.} New York, NY, USA: Springer, 2011.

\bibitem{mallada2017optimal} E. Mallada, C. Zhao, and S. Low, ``Optimal load-side control for
frequency regulation in smart grids," \textit{IEEE Trans.
Autom. Control}, vol. 62, no. 12, pp. 6294--6309, Dec. 2017.

%\bibitem{stegink2016unifying}  T. Stegink, C. De Persis, and A. van der Schaft, ``A unifying energy-based approach to stability of power grids with market dynamics," \textit{IEEE Trans. Autom. Control}, vol. 62, no. 6, pp. 2612--2622, Jun. 2017.

\bibitem{liu2018stabilization} Q. Liu, W. Chen, Z. Wang, and L. Qiu, ``Stabilization of MIMO systems
over multiple independent and memoryless fading noisy channels," \textit{IEEE Trans.
Autom. Control}, vol. 64, no. 4, pp. 1581–1594, Apr. 2019.


\bibitem{singh2018decentralized} R. Singh, P. R. Kumar, and L. Xie, ``Decentralized control
via dynamic stochastic prices: The independent system operator problem," \textit{IEEE Trans.
Autom. Control}, vol. 63, no. 10, pp. 3206--3220, Oct. 2018.

%\bibitem{ebegbulem2018resource} J. Ebegbulem and M. Guay, ``Resource allocation for a class of multi-agent systems with unknown dynamics using extremum seeking control,"  in {\em IEEE Conference on Decision and Control (CDC)}, 2018, pp. 2496--2501.
%
%\bibitem{lu2022distributed} K. Lu, H. Xu, and Y. Zheng, ``Distributed resource allocation via multi-agent systems under time-varying networks," {\em Automatica}, vol. 136, p.110059, 2022.

\bibitem{nguyen2014computational} N. T. Nguyen, T. T. Nguyen, M. Roos, and J. Rothe, ``Computational complexity and approximability of social welfare optimization in multiagent resource allocation," \textit{Autonomous Agents and Multi-Agent Systems}, vol. 28, no. 2, pp. 256--289, 2014.

%\bibitem{ramirez2015} E. Ram{\'\i}rez-Llanos and S. Martínez, ``Distributed and robust resource allocation algorithms for multi-agent systems via discrete-time iterations,"  in {\em Proc. 54th IEEE Conference on Decision and Control (CDC)}, 2015, pp. 1390--1395.

%\bibitem{nguyen2014computational} N.-T. Nguyen, T. T. Nguyen, M. Roos, and J. Rothe, ``Computational complexity and approximability of social welfare optimization in multiagent resource allocation,” {\em Auton. Agents Multi-Agent Syst.}, vol. 28,
%no. 2, pp. 256--289, 2014.

\bibitem{chevaleyre} Y. Chevaleyre, P. E. Dunne, U. Endriss, J. Lang, N. Maudet, and J. A.
Rodr{\'I}Guez-Aguilar, ``Multiagent resource allocation," {\em The Knowledge
	Engineering Review}, vol. 20, no. 2, pp. 143--149, 2005.


\bibitem{Bikhchandani1997} S. Bikhchandani and J. W. Mamer, ``Competitive equilibrium in an exchange economy with indivisibilities," {\em J. Econ. Theory}, vol. 74, no. 2,
pp. 385–413, 1997.

\bibitem{wei2014competitive} E. Wei, A. Malekian, and A. Ozdaglar, ``Competitive equilibrium in electricity markets with heterogeneous users and price fluctuation penalty," in {\em Proc. 53rd IEEE Conference on Decision and Control (CDC)}, 2014, pp. 6452--6458.

\bibitem{acemoglu2018} D. Acemoglu, D. Laibson, and J. List, {\em Microeconomics}, 2nd edition,
Pearson, 2018.

\bibitem{arrow1954} K. J. Arrow and G. Debreu, “Existence of an equilibrium for a competitive economy,” {\em Econometrica: Journal of the Econometric Society},
vol. 22, no. 3, pp. 265–290, 1954.

%\bibitem{debrew1952} G. Debreu, ``A social equilibrium existence theorem," {\em Proc. Nat. Acad. Sci.}, vol. 38, no. 10, pp. 886–893, 1952.

\bibitem{peng2022pareto} C. Peng and W. Zhang, ``Pareto Optimality in Infinite Horizon Mean-Field Stochastic Cooperative Linear-Quadratic Difference Games," \textit{IEEE Trans.
Autom. Control}, 2022, doi: {\color{blue}10.1109/TAC.2022.3202824}.

\bibitem{zhao2020pareto} G. Zhao and M. Zhu, ``Pareto Optimal Multirobot Motion Planning,"  \textit{IEEE Trans.
Autom. Control}, vol. 66, no. 9, pp. 3984--3999, Sept. 2021.

\bibitem{chen2021social} Y. Chen, R. Islam, E. Ratnam, I. R. Petersen, and G. Shi, ``Social shaping of competitive equilibriums for resilient multiagent systems," in \textit{Proc. 60th IEEE Conference on Decision and Control (CDC)}, 2021, pp. 2621--2626. 

\bibitem{mas1995microeconomic} A. Mas-Colell, M. D. Whinston, and J. R. Green, {\em Microeconomic Theory}. London, U.K.: Oxford Univ. Press, 1995.

\bibitem{Li2020} S. Li, J. Lian, A. J. Conejo, and W. Zhang, ``Transactive energy systems: The market-based coordination of distributed energy resources," in {\em IEEE Control Systems Magazine}, vol. 40, no. 4, pp. 26--52, Aug. 2020.

\bibitem{farhadi2018surrogate} F. Farhadi, S. J. Golestani, and D. Teneketzis, ``A Surrogate Optimization-Based Mechanism for Resource Allocation and Routing in Networks With Strategic Agents,"  \textit{IEEE Trans.
Autom. Control}, vol. 64, no. 2, pp. 464--479, Feb. 2019.

\bibitem{dave2021social} A. Dave, I. V. Chremos, and A. A. Malikopoulos, ``Social Media and Misleading Information in a Democracy: A Mechanism Design Approach,"  \textit{IEEE Trans.
Autom. Control}, vol. 67, no. 5, pp. 2633--2639, May 2022.

\bibitem{ma2020incentive} K. Ma and P. R. Kumar, ``Incentive compatibility in stochastic dynamic systems," \textit{IEEE Trans.
Autom. Control}, vol. 66, no. 2, pp. 651--666, 2021.

%\bibitem{groves1973incentives} T. Groves, ``Incentives in teams," \textit{Econometrica}, vol. 41, no. 4, pp. 617--631, 1973.
\bibitem{salehi2021quadratic} Z. Salehi, Y. Chen, E. Ratnam, I. R. Petersen, and G. Shi, ``Social shaping of linear quadratic multi-agent systems," in  \textit{Proc. 2021 Australian \& New Zealand Control Conference (ANZCC)}, 2021, pp. 232--237.

\bibitem{salehi2021social} Z. Salehi, Y. Chen, E. Ratnam, I. R. Petersen, and G. Shi, ``Social shaping for transactive energy systems," Preprint at arXiv:2109.12967, 2021. 

\bibitem{Texas2021} S. Blumsack, ``What’s behind \$15,000 electricity bills in Texas," {\em https://theconversation.com}, 2021.

%%%%%%%%%%%%%%%%%%%%%%%%%%%%%%%%%%%%%%


%\bibitem{chen2021social} Y. Chen, R. Islam, E. Ratnam, I. R. Petersen, and G. Shi, ``Social shaping of competitive equilibriums for resilient multiagent systems," in \textit{Proc. 60th IEEE Conference on Decision and Control (CDC)}, 2021, pp. 2621--2626. 

\bibitem{salehi2022finite} Z. Salehi, Y.  Chen, I. R.  Petersen, E. L.  Ratnam, and
G. Shi, ``Social shaping of dynamic multi-agent systems over a finite horizon,"  in \textit{Proc. 61st IEEE Conference on Decision and Control (CDC)}, 2022, Accepted, Preprint
at arXiv:2209.04621.



\bibitem{salehi2022infinite} Z. Salehi, Y.  Chen, E. L.  Ratnam, I. R.  Petersen,  and G. Shi, ``Competitive Equilibriums of Multi-Agent Systems over an Infinite Horizon," in \textit{Proc. 12th IFAC Symposium on Nonlinear Control Systems}, 2023, Accepted.

\bibitem{zhao2017geometric}  L. Zhao, W. Zhang, H. Hao, and K. Kalsi, ``A geometric approach to aggregate flexibility modeling of thermostatically controlled loads," \textit{IEEE Trans. Power Syst.}, vol. 32, no. 6, pp. 4721--4731,  2017.

\bibitem{wang2019distributed} Y. Wang, Y. Tang, Y. Xu, and Y. Xu, ``A distributed control scheme of thermostatically controlled loads for the building-microgrid community," \textit{IEEE Trans. Sustain. Energy}, vol. 11, no. 1, pp. 350--360,
2020.

\bibitem{nordhaus1996regional} W. D. Nordhaus and Z. Yang, ``A regional dynamic general-equilibrium model of alternative climate-change strategies," \textit{The American Economic Review}, vol. 86, no. 4, pp. 741--765, 1996.

\bibitem{nordhaus2010economic} W. D. Nordhaus, ``Economic aspects of global warming in a postcopenhagen environment," \textit{Proc. Nat. Acad. Sci. USA}, vol. 107, no. 26, pp. 11721--11726,  2010.

\bibitem{Nordhaus2006} W. D. Nordhaus, ``After Kyoto: Alternative Mechanisms to Control Global Warming,'' \textit{American Economic Review}, vol. 96, no. 2, pp. 31--34, 2006.

\bibitem{hu2002null}  T. Hu, D. E. Miller, and L. Qiu, ``Null controllable region of LTI discrete-time systems with input saturation," \textit{Automatica}, vol. 38, pp. 2009--2013, 2002.

\bibitem{peajcariaac1992convex} J. E. Pečarić, F. Proschan, and Y. L. Tong, \textit{Convex Functions, Partial Orderings, and Statistical Applications}. New York: Academic, 1992.

\bibitem{Borrelli}  F. Borrelli, A. Bemporad, and M. Morari, \textit{Predictive Control for Linear
and Hybrid Systems}. Cambridge, U.K.: Cambridge Univ. Press, 2017.

\bibitem{tedrake2009underactuated} R. Tedrake, ``Underactuated robotics: Learning, planning, and control for efficient and agile machines course notes for MIT 6.832," \textit{Working draft edition}, vol. 3, 2009.



\end{thebibliography}
\end{document}